\newcommand{\citet}[1]{\citeauthor{#1} \cite{#1}}
\newcommand{\citep}{\cite}
\newcommand*{\derivationWidth}{0.68\textwidth}
\newcommand*{\proofContext}[1]{\def\currentprefix{proof:#1}}
\newcommand*{\locallabel}[1]{\label{\currentprefix:#1}}
\newcommand*{\localref}[1]{\ref{\currentprefix:#1}}
\newcolumntype{L}[1]{>{\raggedright\let\newline\\\arraybackslash\hspace{0pt}}p{#1}}
\newcolumntype{C}[1]{>{\centering\arraybackslash}p{#1}}
\DeclareSymbolFont{bbsymbol}{U}{bbold}{m}{n}
\DeclareMathSymbol{\bbsemi}{\mathbin}{bbsymbol}{"3B}
\definecolor{Yellow}{RGB}{255,255,0}
\definecolor{LightGreen}{RGB}{144,238,144}
\definecolor{GainsboroDarkened}{RGB}{211,211,211}
\definecolor{GainsboroFaded}{RGB}{232,232,232}
\newcommand*{\ttt}[1]{\texttt{#1}}
\newcommand*{\kw}[1]{{\text{\ttt{#1}}}} % used to be \tt but may be obsolete?
\newcommand*{\twoPrime}{{\prime\mkern-2.6mu\prime\mkern-2.2mu}}
\DeclareTextFontCommand{\textbfit}{%
  \fontseries\bfdefault % change series without selecting the font yet
  \itshape
}
\newcommand*{\after}{\circ}
\newcommand*{\disjunion}{\uplus}
\newcommand*{\concat}{\cdot}
\newcommand*{\dom}[1]{\textsf{dom}(#1)}
\newcommand*{\eqdef}{\stackrel{\smash{\text{\tiny def}}}{=}}
\newcommand*{\Powerset}{\mathbb{P}}
\newcommand*{\powerset}[1]{\Powerset(#1)}
\newcommand*{\set}[1]{\{#1\}}
\newcommand*{\join}{\sqcup}
\newcommand*{\meet}{\sqcap}
\newcommand*{\param}{\cdot}
\newcommand*{\defref}[1]{Definition~\ref{def:#1}}
\newcommand*{\figref}[1]{Figure~\ref{fig:#1}}
\newcommand*{\figrefTwo}[2]{Figures \ref{fig:#1} and \ref{fig:#2}}
\newcommand*{\lemref}[1]{Lemma~\ref{lem:#1}}
\newcommand*{\propref}[1]{Proposition~\ref{prop:#1}}
\newcommand*{\proprefTwo}[2]{Propositions~\ref{prop:#1} and \ref{prop:#2}}
\newcommand*{\secref}[1]{Section~\ref{sec:#1}}
\newcommand*{\secrefTwo}[2]{Sections~\ref{sec:#1} and \ref{sec:#2}}
\newenvironment{nop}{}{}
\newenvironment{sdisplaymath}
   {\begin{nop}\small\begin{displaymath}}
   {\end{displaymath}\end{nop}\ignorespacesafterend}
\newenvironment{smathpar}
   {\begin{nop}\small\begin{mathpar}}
   {\end{mathpar}\end{nop}\ignorespacesafterend}
\newenvironment{mathfig}{\begin{sdisplaymath}}{\end{sdisplaymath}}
\newbox\sf@box
\def\sf@one{#1}%
   \def\sf@two{#2}%
     \def\sf@two{\@empty}
\definecolor{highlightcolor}{rgb}{1.0,0.8,0.8}
\definecolor{shadecolor}{rgb}{0.9,0.9,0.9}
\definecolor{lightgray}{rgb}{0.8,0.8,0.8}
\newcommand*{\shadebox}[1]{\fcolorbox{lightgray}{shadecolor}{\raisebox{0pt}[0.60\baselineskip][0.05\baselineskip]{#1}}}
\newcommand*{\ruleName}[1]{\textnormal{\textsf{#1}}}
\newcommand{\superimpose}[2]{%
  {\ooalign{$#1\@firstoftwo#2$\cr\hfil$#1\@secondoftwo#2$\hfil\cr}}}
\newsavebox{\vardisplaymathbox}
\definecolor{verylightgray}{gray}{0.9}
\definecolor{lightgray}{gray}{0.5}
\definecolor{mediumgray}{gray}{0.45}
\newlength\lsthorizontalpadding
\newcommand*\lstnumberstyle{\ttfamily\scriptsize\textcolor{lightgray}}
\newcommand*{\lbbar}{\{\kern-0.3em|}
\newcommand*{\rbbar}{|\kern-0.3em\}}
\newlength\lstnumbersep
\newlength\lstnumberwidth
\lstdefinelanguage{Fluid}{%
   morekeywords={as,else,fun,if,in,let,match,then}%
  ,moredelim=[s][\itshape]{`}{`}
}
\renewcommand*{\secref}[1]{\S~\ref{sec:#1}}
\newcommand*{\apdxsecref}[1]{%
   \ifappendices%
      \secref{#1}%
   \else{the supplementary material}%
   \fi}
\newcommand*{\edgeMinus}{\setminus_{\mathsf{E}}}
\renewcommand*{\ruleName}[1]{\textcolor{darkgray}{\textnormal{\textsf{#1}}}}
\newcommand*{\Lowlight}[1]{\textcolor{gray}{#1}}
\renewcommand*{\join}{\vee}
\renewcommand*{\meet}{\wedge}
\newcommand*{\OurLang}{[OurLanguage]\xspace}%{Fluid\xspace}
\newcommand*{\tableref}[1]{Table \ref{table:#1}}
\newcommand*{\inStar}[2]{\set{#1 \mapsto #2}}
\renewcommand{\emptyset}{\varnothing}
\newcommand*{\compl}[1]{#1^{c}}
\newcommand*{\dual}[1]{#1^{\circ}}
\newcommand*{\symEqual}{\mathrel{\kw{=}}}
\newcommand{\cons}{\cdot}
\renewcommand{\concat}{\mathbin{++}}
\newcommand*{\iter}{..}
\newcommand*{\length}[1]{|#1|}
\newcommand*{\mathSf}[1]{\textup{\textsf{#1}}} % use in place of \mathsf
\newcommand{\opName}[1]{\mathSf{#1}}
\newcommand*{\seq}[1]{\vv{#1}}
\newcommand*{\seqRange}[2]{\seqRangeOp{#1}{#2}{,\,}}
\newcommand*{\seqRangeOp}[3]{{#1} #3 \iter #3 {#2}}
\newcommand*{\Set}[1]{\mathSf{#1}}
\newcommand{\seqEmpty}{\epsilon}
\newcommand{\bind}[2]{{#1}:{#2}}
\newcommand{\envEmpty}{\varnothing}
\newcommand*{\datatype}[1]{\Sigma(#1)}
\newcommand*{\elimmapsto}{\mapsto}
\newcommand*{\elimBind}[2]{{#1}\elimmapsto{#2}}
\newcommand*{\elimConstr}[1]{\set{#1}}
\newcommand*{\elimRecord}[2]{\elimBind{\exRecord{#1}}{#2}}
\newcommand*{\elimVar}[2]{\elimBind{#1}{#2}}
\newcommand*{\exApp}[2]{{#1}\,{#2}}
\newcommand*{\exClosure}[3]{\kw{cl}({#1},{#2},{#3})}
\newcommand*{\exConstr}[2]{#1(#2)}
\newcommand*{\exForeignApp}[2]{{#1}({#2})}
\newcommand*{\exFun}[1]{\lambda{#1}}
\newcommand*{\exInt}[1]{#1}
\newcommand*{\exLetRecElim}[2]{\exLetRecPiecewise{#1}{#2}}
\newcommand*{\exLetRecPiecewise}[2]{\kw{let}\;{#1}\;\kw{in}\;{#2}}
\newcommand*{\record}[1]{\{{#1}\}}
\newcommand*{\exRecord}[1]{\record{#1}}
\newcommand*{\exVar}[1]{#1}
\newcommand*{\exLet}[3]{\kw{let}\;{#1}\symEqual{#2}\;\kw{in}\;{#3}}
\newcommand*{\exLetRec}[2]{\kw{let}\;{#1}\;\kw{in}\;{#2}}
\newcommand*{\exRec}[1]{\{#1\}}
\newcommand*{\exRecProj}[2]{#1.#2}
\newcommand*{\annot}[2]{#1_{#2}}
\newcommand*{\annClosure}[4]{\annot{\exClosure{#1}{#2}{#3}}{#4}}
\newcommand*{\annConstr}[3]{\annot{\exConstr{#1}{#2}}{#3}}
\newcommand*{\annInt}[2]{\annot{\exInt{#1}}{#2}}
\newcommand*{\annRec}[2]{\annot{\exRec{#1}}{#2}}
\newcommand*{\inN}[1]{\opName{in}_{#1}}
\newcommand*{\inE}[1]{\opName{inE}_{#1}}
\newcommand*{\outE}[1]{\opName{outE}_{#1}}
\newcommand*{\E}{\opName{E}}
\newcommand*{\V}{\opName{V}}
\newcommand*{\sinks}[1]{\mathsf{T}(#1)}
\newcommand*{\sources}[1]{\mathsf{S}(#1)}
\newcommand*{\Match}[1]{\Set{Match}}
\newcommand*{\Nat}{\mathbb{N}}
\newcommand{\evalR}{\Rightarrow}
\newcommand*{\evalS}{\evalR}
\newcommand*{\evalSugS}{\evalS}
\newcommand*{\evalSugR}[1]{\evalSugS}
\newcommand*{\closeDefs}{\rightarrowtail}
\newcommand*{\match}{\rightsquigarrow}
\newcommand*{\demandsR}{\triangledown}
\newcommand*{\demandedByR}{\rotatebox[origin=c]{180}{$\triangledown$}}
\newcommand*{\sufficesR}{\blacktriangle}
\newcommand*{\preimageDualR}{\blacktriangledown} % needs a name
\newcommand*{\relOutputR}{{\demandedByR}\!{\demandsR}}
\newcommand*{\relInputR}{{\demandsR}\!{\demandedByR}}
\newcommand*{\demandsAlg}[1]{\mathrel{\opName{demands}_{#1}}}
\newcommand*{\demandedByV}[1]{\mathrel{\opName{demByV}_{#1}}}
\newcommand*{\demandedByAlg}[1]{\mathrel{\opName{demBy}_{#1}}}
\newcommand*{\sufficesE}[1]{\mathrel{\opName{suffE}_{#1}}}
\newcommand*{\sufficesAlg}[1]{\mathrel{\opName{suff}_{#1}}}
\newcommand*{\disjjoin}{\mathbin{\ooalign{$\join$\cr%
   \hfil\raise0.42ex\hbox{$\scriptscriptstyle+$}\hfil\cr}}}
\newcommand*{\numleq}{\leq}
\renewcommand{\eqdef}{:=}
\newcommand*{\raw}[1]{\bm{#1}}
\newcommand*{\interpret}[1]{\hat{#1}}
\newcommand*{\caseName}[1]{\textbf{#1}}
\newcommand*{\subcase}[1]{%
\textbf{\emph{Subcase}}\ #1.& %
%\notag
%\\[0.25em]%
}
\newcommand*{\opGraph}[1]{#1^{-1}}
\newcommand*{\fresh}[2]{#1 \notin \V(#2)}
\newcommand{\resultLessSquash}[2]{#1 \!\! & \!\! {\textcolor{gray}{($\pm #2$)}}}
\newif\ifappendices
\begin{document}

\title{Cognacy Queries over Dependence Graphs for \\Transparent Visualisations}
\author{ %
   Joe Bond\inst{1} \and %
   Cristina David\inst{1} \and %
   Minh Nguyen\inst{1} \and %
   Dominic Orchard\inst{2,3} \and %
   Roly Perera\inst{3,1} %
}

\institute{
  University of Bristol, Bristol, UK \\
  \email{j.bond@bristol.ac.uk} \\
  \email{cristina.david@bristol.ac.uk} \\
  \email{min.nguyen@bristol.ac.uk}
  \and
  University of Kent, Canterbury, UK \\
  \email{d.a.orchard@kent.ac.uk}
  \and
  University of Cambridge, Cambridge, UK \\
  \email{roly.perera@cl.cam.ac.uk}
}

\maketitle
\begin{abstract}
Charts, figures, and text derived from data play an important role in decision making, from data-driven policy
development to day-to-day choices informed by online articles. Making sense of or fact-checking outputs means
understanding how they relate to the underlying data. Even for domain experts with access to the source code
and data sets, this poses a significant challenge. In this paper we introduce a new program analysis framework
which supports interactive exploration of fine-grained IO relationships directly through computed outputs,
making use of dynamic dependence graphs. This framework enables a novel notion in data provenance which we
call \emph{linked inputs}, a relation of mutual relevance or ``cognacy'' which arises between inputs when they
contribute to common features of the output. Queries of this form allow readers to ask questions like ``What
outputs use this data element, and what other data elements are used along with it?''. We show how
\citeauthor{jonsson51}'s concept of conjugate operators on Boolean algebras appropriately characterises the
notion of cognacy in a dependence graph, and give a procedure for computing linked inputs over such a graph.

\hspace{5mm} To demonstrate the approach in practice, we present a functional programming language called
\OurLang which automatically enriches visual outputs with interactions supporting linked inputs and similar
fine-grained queries. We show how to obtain more informative, contextualised queries via projection operators
which factor out data contributed by specific inputs or demand associated with specific outputs. We also show
how to derive a \emph{linked outputs} operator, capturing a dual cognacy relation between output features
whenever they compete for common input data. Composed with suitable projections, this recovers a prior
approach to linked outputs based on execution traces and a bidirectional interpreter. However, the dependence
graph approach presented in this paper performs better (computationally) on most examples than an
implementation based on execution traces and is significantly simpler to implement.
\end{abstract}

\section{Introduction: Towards Transparent Research Outputs}
\label{sec:introduction}

Whether formulating national policy or making day-to-day decisions about our own lives, we increasingly rely
on the charts, figures and text created by scientists and journalists. Interpreting these visual and textual
summaries is essential to making informed decisions.   However, most of the artefacts we encounter are
\begin{wrapfigure}{r}{0.53\textwidth}
   \centering
   \includegraphics[width=0.5\textwidth]{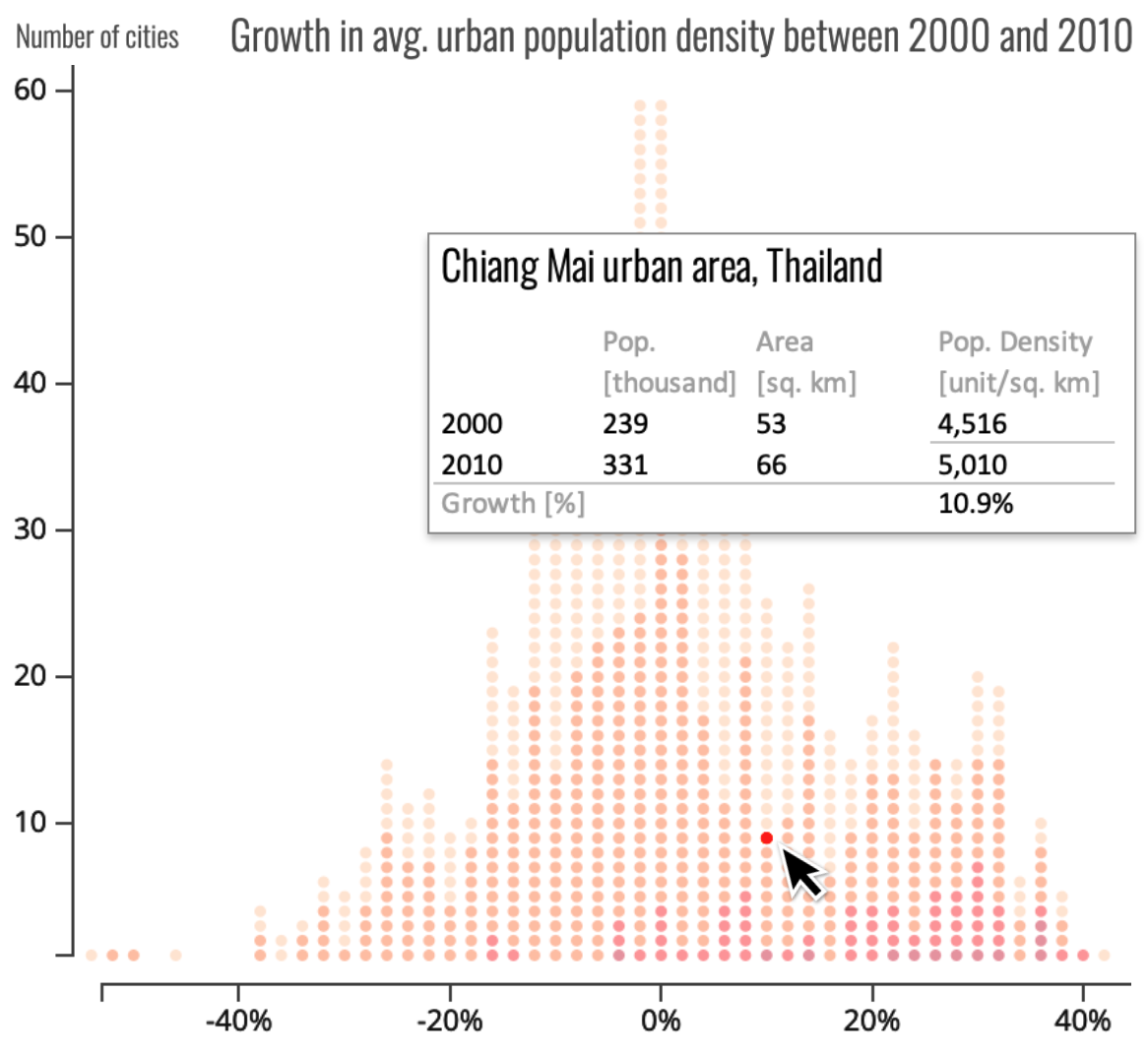}
   \caption{A hand-crafted transparent visualisation due to \citet{bremer15}}
   \vspace{-2mm}
   \label{fig:introduction:bremer}
\end{wrapfigure}
\noindent
\emph{opaque}, inasmuch as they are unable to reveal anything about how they relate to the data they were
derived from. Whilst one could in principle try to use the source code and data to reverse engineer some of
these relationships, this requires substantial expertise, as well as valuable time spent away from the
``comprehension context'' in which we encountered the output in question.  These difficulties are compounded
when the information presented draws on multiple data sources, such as medical
meta-analyses~\citep{dersimonian86}, ensemble models in climate science~\citep{murphy04}, or queries that span
multiple database tables~\citep{selinger79}. Even professional reviewers may lack the resources or inclination
to embark on such an activity. Perhaps more often than we would like, we end up taking things on trust.

With traditional print media, a ``disconnect'' between outputs and underlying data is unavoidable. For digital
media, other options are open to us. One way to improve things is to engineer visual artefacts to be more
``self-explanatory'' so they can reveal the relationship to the underlying data to an interested user.
Consider the histogram in \figref{introduction:bremer} showing urban population growth in Asia over a 10-year
period \cite{bremer15}. There are many questions a reader might have about what this chart represents --- in
other words, how visual elements map to underlying data. Do the points represent individual cities? What does
the colour scheme indicate? Which of the points represent large cities or small cities? Legends and other
accompanying text can help, but some ambiguities inevitably remain. The purpose of a visual summary after all
is to present the big picture at the expense of certain detail.

\citeauthor{bremer15}'s approach to this problem was to implement an interactive feature that allows a user to
explore some of these provenance-related questions \emph{in situ}, i.e.~directly from the chart. By selecting
an individual point, the user is able to bring up a view showing the data the point was calculated from. In
the figure the highlighted point represents Chiang Mai, and shows that the plotted value of 10.9\% was derived
from an increase in population density from 4,416 to 5,010 people per sq.~km. Features like these are valuable
as comprehension aids, but are laborious to implement by hand and require the author to anticipate the queries
a user might have. For these reasons they also tend not to generalise: for example \citeauthor{bremer15}'s
visualisation only allows the user to select one point at a time.

\subsection{Data transparency as PL infrastructure}
\label{sec:introduction:data-transparency}

Hand-crafted efforts like \citet{bremer15}'s are labour-intensive because they involve manually embedding
metadata about the relationship between outputs and inputs into the same program. When the visualisation or
analysis logic changes, the relationship between inputs and outputs also change, and the metadata must be
manually updated. A less costly and more robust approach is to treat data provenance as a language
infrastructure problem, baking lineage or provenance metadata directly into outputs so that provenance queries
can be supported automatically. For example, for an in-memory database engine, \citet{psallidas18smoke}
describe how to ``backward trace'' from output selections to input selections, and then ``forward trace'' to
find related output selections in other views, to support a popular feature from data visualisation called
\emph{linked brushing}. \citet{perera22} implemented a similar system for a general-purpose functional
programming language, using execution traces to support bidirectional queries.

The advantage of shifting the burden of implementing transparency features onto the language runtime is that
the author of the content can concentrate purely on data visualisation, and as the infrastructure improves,
the benefits are inherited automatically by end users, at no additional cost to the author. For example if
some kind of formal guarantees are provided (perhaps that data selections are in some sense minimal and
sufficient), then those can be proved once for the infrastructure rather for each bespoke implementation.

In this paper, we propose a new bidirectional analysis framework for a general-purpose programming language,
which supports fine-grained \emph{in situ} provenance queries, an end-user feature we call \emph{data
transparency}. In contrast to prior work, we use \emph{dynamic dependence graphs}~\cite{ferrante87, agrawal90}
to implement the analyses, which enables our approach to be fast enough for interactive use and also
language-independent, by separating queries over the graph from the problem of deriving the dependence graph
for a particular program.

\subsection{Contributions and roadmap}
\label{sec:introduction:contributions}

Our specific contributions are as follows:

\begin{itemize}[leftmargin=*]
   \item \secref{core} defines a core calculus for data-transparent outputs, where parts of inputs and
      computed values are assigned unique labels called \emph{addresses}, and programs have an operational
      semantics that pairs every result with a dynamic dependence graph capturing fine-grained IO
      relationships between input addresses and output addresses.

   \item \secref{conjugate} presents a new formal framework for bidirectional provenance queries over
      dependence graphs, formalising two operators over such graphs, $\demandsR$ (\emph{demands}) and
      $\demandedByR$ (\emph{demanded by}). We show these to be \emph{conjugate} in the sense
      of~\citet{jonsson51}, and give procedures for computing $\demandsR$ and $\demandedByR$. We show how a
      novel cognacy operator $\relInputR$ called \emph{linked inputs}, relating inputs when they contribute to
      common features of the output, can be obtained by composing $\demandsR$ and $\demandedByR$, and how a
      dual \emph{linked outputs} operator $\relOutputR$, supporting the ``linked visualisations'' of prior
      work \cite{psallidas18smoke,perera22}, is easily obtained by transposing the two operators.

   \item \secref{evaluation} shows that our implementation performs better than one based on traces and
      bidirectional interpreters, the primary alternative implementation technique. Using usability metrics
      from \citet{nielsen93}, we find that $81\%$ of the queries we tested execute at a speed that appears
      instantaneous to a user, compared to $25\%$ using an implementation based on traces. We also compare
      overhead of building a trace vs.~building a dependence graph for a given program.
\end{itemize}
%As outlined in \secref{introduction}, the idea of data-transparent outputs is to enrich computed content with
%interactions that allow the user to query relationships between data sources and visualisations and other
%outputs \emph{in situ}, i.e.~within the ``comprehension context'' in which these sorts of questions naturally
%arise. Crucially, we want to make this feature automatic.

\noindent These contributions are not available in previous work. In particular, although
\citet{psallidas18smoke} support fast queries for linked visualisations in a database setting, and
\citet{perera22} support linked outputs in a general-purpose languages, ours is (to the best of our knowledge)
the first implementation for a general-purpose language which is fast enough for interactive use. Other
related work in database provenance, program slicing and data visualisation are discussed in
\secref{related-work}.

\secref{conclusion} wraps up with a discussion of some limitations. In particular, the sort of transparency
considered in this paper is purely extensional, and falls short of providing full explanations of how output
parts are related to input parts. We discuss more intensional forms of transparency in \secref{conclusion} and
propose some other ways in which the present system could be improved.

We implement our approach in a pure functional programming language called \OurLang. The author of a
visualisation expresses their chart as a pure function of the inputs, using a set of built-in data types for
common visualisations; a d3.js front end automatically enriches the rendered outputs with support for
interactive selection and the data transparency queries introduced in the next section. Our implementation is
open source and available online (given later in a deanonymised version). We also provide an anonymised web
demo at \href{http://opencomputation.org/}{\textsf{http://opencomputation.org/}}.

\section{Overview: Fine-Grained Interactive Provenance}
\label{sec:overview}

In this section, we introduce the main interactive data provenance features that wish to support, using two
\OurLang examples to illustrate. The line chart in \figref{introduction:related-inputs} shows projected
methane emissions from agricultural sources under a global warming scenario called RCP8.5, with source code in
\figref{example:moving-average:source}; the scatter plot and stacked bar chart in
\figref{introduction:scatterplot} show changing non-renewable energy outputs and capacities for various
countries, with source code in \figref{example:non-renewables:source}.

The key idea of a transparent visualisation is that the original data sources are kept around and can be
viewed (when a user so requests) alongside the visualisation, and the user is then able to interact with both
the data and the view to explore how they are related. Crucially the author of the visualisation does not to
have to implement any of these interactions themselves; they need only write the visualisation code and the
language runtime and rendering infrastructure provides the interactions.

\paragraph{1. Fine-grained linking of the data to the view.}

In \figref{introduction:related-inputs}, the user has chosen to reveal the underlying data set, which is shown
on the left; only rows relevant to the chart are visible, with the other rows hidden automatically. This
clarifies that only agricultural data is relevant, confirming the (informal) claim in the caption. The
emissions values shown do contribute to the chart in some way, and the user can investigate this by
interacting with individual entries. For example, moving their mouse over the number 104.69 highlights
\emph{one} point in the projected emfissions curve, and \emph{three} points in the other curve, which plots the
moving average of the projected emissions. The provenance analysis which underpins this we write as
$\demandedByR$ (``demanded by''), and here this tells us that 104.69 was needed to compute either the $x$ or
$y$ coordinate (in this case just the $y$ coordinate) of the four highlighted points.

\begin{figure}
   \centering
   \includegraphics[width=0.95\textwidth]{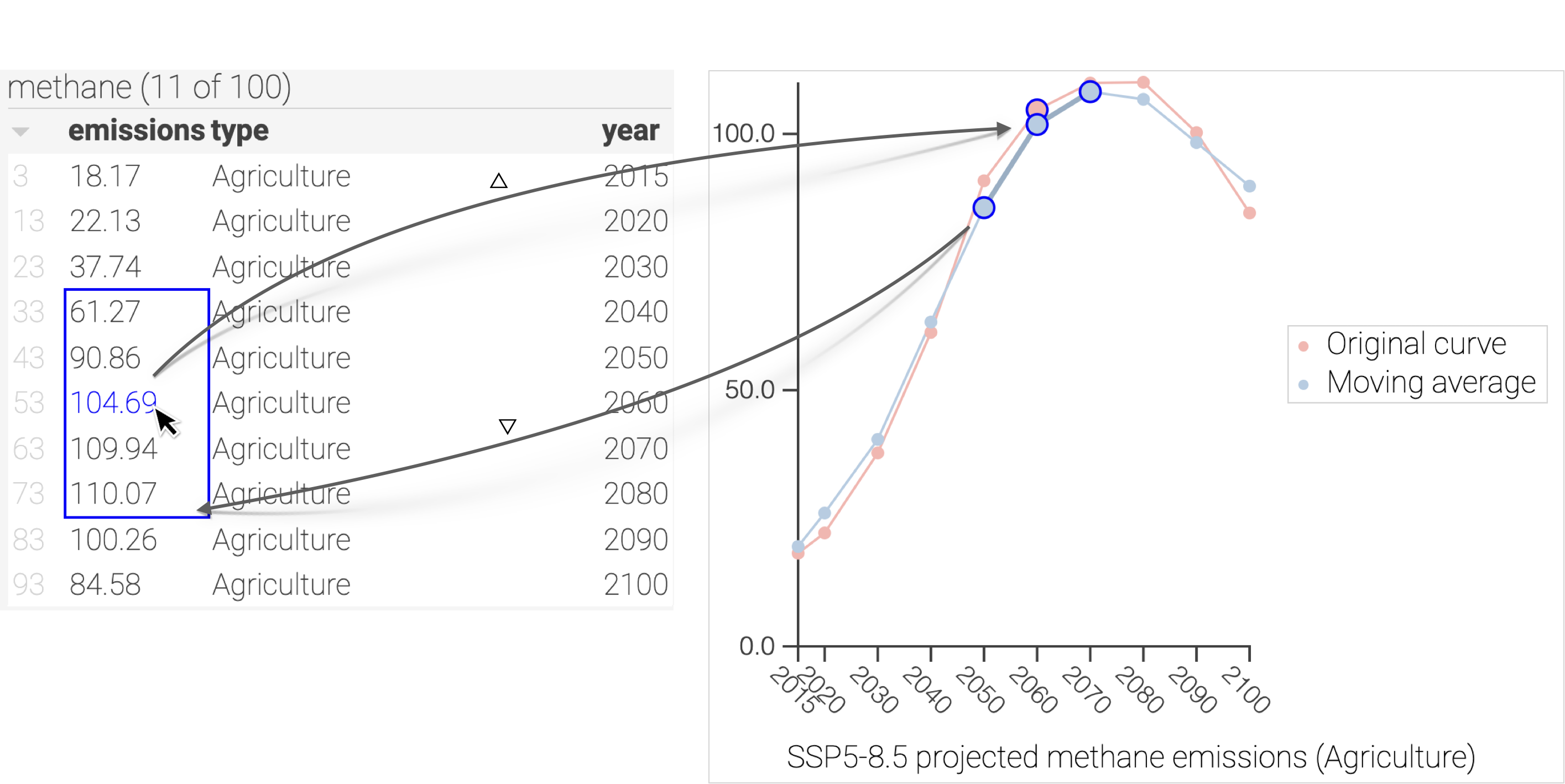}
   \caption{Data transparency: \emph{demanded by} ($\demandedByR$) and \emph{demands} ($\demandsR$) operators
   link inputs that share common output dependencies.}
   \label{fig:introduction:related-inputs}
\end{figure}

\begin{figure}
   \small
       {\lstinputlisting[language=Fluid]{fig/example/moving-average.fld}}
       \vspace{-0.5em}
   \caption{Moving average source code}
   \label{fig:example:moving-average:source}
\end{figure}

\paragraph{2. Linked inputs.}

We want to support fine-grained IO queries that run in the other direction too, via another provenance
analysis written $\demandsR$ (``demands''). In \figref{introduction:related-inputs} the $\demandsR$ analysis
is initiated automatically on the output of $\demandedByR$; this reveals that calculating the $y$ coordinates
of the points highlighted on the right required not only the 104.69 we started with, but 4 additional
emissions values (blue border on the left). These additional inputs we refer to as the \emph{related inputs}
of the original input selection; they are the other inputs demanded by any output that demands our starting
input, and are computed using the composite operator $\relInputR$. Given that the initial input selection here
contributes to 3 points of the moving average, the ``window'' of related inputs in this case is 5 wide,
comprising all the data points needed to account for the selection on the right. Related inputs is a
\emph{cognacy} relation: two inputs are related if they have a common ancestor in a \emph{dependence graph}
that captures how inputs are demanded by outputs.

\paragraph{3. Fine-grained linking of the view to the data.}

We would also like to support cognacy queries that start from the output rather than the input; this is the
basis of a feature called \emph{linked brushing} (or \emph{brushing and linking}) in data visualisation
\cite{becker87,buja91}. Here we would like to provide linked brushing in a way that is transparent to the
user. In \figref{introduction:scatterplot}, the user has again revealed the underlying data set, this time
opting to see only rows with active data selections, rather than only rows with data that are used by any part
of the output. They then express interest in one of the points in the scatter plot. By clicking on it, rather
than just moving their mouse over it, they create what we call a \emph{persistent} selection (shown in green).
The demands analysis $\demandsR$ reveals (also in green) the inputs needed to compute both the $x$ and $y$
coordinates of the selected point; this shows that the 2016 data for 4 countries was used, although it does
not reveal how.

\paragraph{4. Linked outputs.}

Now that the inputs demanded by the output selection are determined, the $\demandedByR$ analysis runs
automatically on the output of $\demandsR$; this reveals that those inputs were also needed for 4 of the bar
segments (highlighted with cross-hatching) in the bar chart, namely those for 2016. This is standard linked
brushing, and is implemented using the composite operator $\relOutputR$, which we call \emph{related outputs}.
Related outputs is a ``co-cognacy'' relation, relating two outputs whenever they have a common descendant,
rather than common ancestor, in the dependence graph. Here the user can take advantage of the fact that a
persistent selection remains active even when the mouse moves off the original item of interest; this allows
them to investigate further. Moving their mouse over the yellow IND segment of the 2016 bar (highlighted with
blue border) initiates an orthogonal $\demandsR$ query which shows (also with various blue borders) the data
needed to compute the yellow segment. Overlaid on the persistent selection, this reveals that the 37.90 in the
\kw{nuclearOut} column is needed to compute \emph{both} the IND bar segment \emph{and} the selected scatter
plot point, explaining why the two output selections are linked.

\begin{figure}[t]
   \centering
   \includegraphics[width=0.98\textwidth]{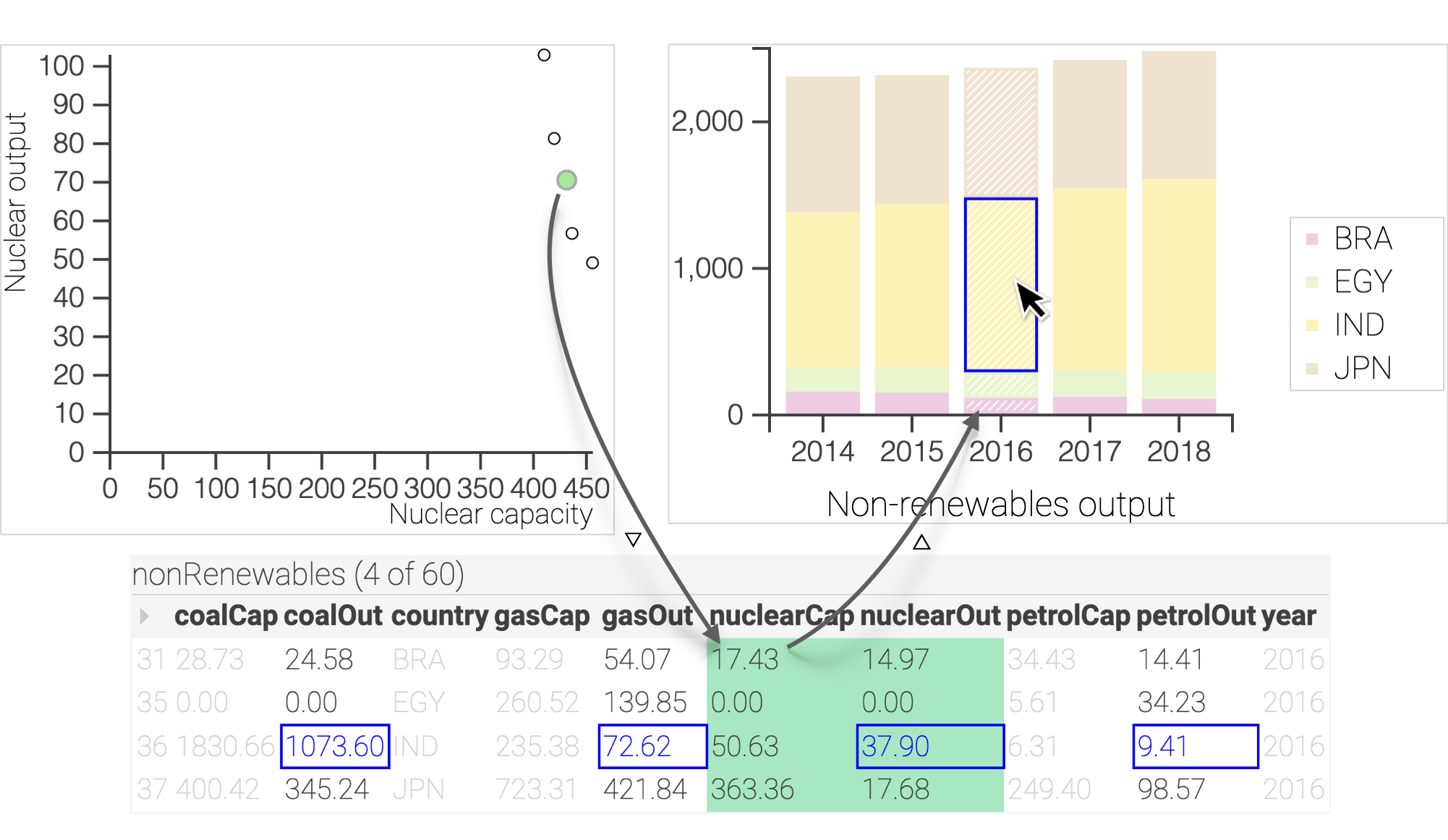}
   \caption{Data transparency: \emph{demands} ($\demandsR$) and \emph{demanded by} ($\demandedByR$)
   operators link outputs that share common input dependencies.}
   \label{fig:introduction:scatterplot}
\end{figure}

\begin{figure}[!h]
  \small
      {\lstinputlisting[language=Fluid]{fig/example/non-renewables.fld}}
      \vspace{-0.5em}
   \caption{Non-renewable energy source code}
   \label{fig:example:non-renewables:source}
\end{figure}

% dually
% mediating input, mediating output

\section{A Core Calculus for Transparent Outputs}
\label{sec:core}

We start by introducing a calculus for transparent outputs, which is the core of \OurLang. The core language
is fairly expressive, supporting deep pattern-matching and mutual recursion; however \OurLang also provides a
surface syntax that desugars into the core, providing piecewise function definitions, list comprehensions and
other conveniences, as shown in \figrefTwo{example:moving-average:source}{example:non-renewables:source},
which would otherwise complicate the core.

The term syntax of the core language (\secref{core:syntax}) is uncontroversial: it is pure and untyped, and
provides datatypes, records and matrices (although we omit a presentation of matrices here). The syntax of
values is less standard: each part of a value has a unique \emph{address} $\alpha$ which serves to identify
that part of the value in a dependence graph. We give a big-step operational semantics that evaluates a term
to both a value and a \emph{dynamic dependence graph} for that value (\secref{core:semantics}), capturing how
parts of the value depend on parts of the input.

\paragraph{Some notation.} We write to ${\seq{x}}$ denote a finite sequence of elements $\seqRange{x_1}{x_n}$,
with $\seqEmpty$ as the empty sequence. Concatenation of sequences is written $\seq{x} \concat \seq{x}'$; we
also write $x \cons \seq{x}$ for cons (prepend) and $\seq{x} \cons x$ for snoc (append). We write
$\set{\seq{\bind{k}{x}}}$ to denote a finite map, i.e.~a set of pairs
$\seqRange{\bind{k_1}{x_1}}{\bind{k_n}{x_n}}$ where keys $k_i$ are pairwise unique. If $X$ and $Y$ are sets,
we write $X \disjunion Y$ to mean $X \cup Y$ where $X$ and $Y$ are disjoint, and also $x \cdot X$ or $X \cdot
x$ to mean $X \disjunion \set{x}$.

\subsection{Syntax}
\label{sec:core:syntax}

\begin{figure}[t]
   \small
   \begin{minipage}[t]{0.5\textwidth}
      \begin{tabularx}{\textwidth}{rL{2cm}L{3.6cm}}
         &\textit{Expression}&
         % \lowlight{$\Expr\;\Gamma\;A$}
         \\
         $e ::=$
         &
         $\exVar{x}$
         &
         variable
         \\
         &
         $\exInt{n}$
         &
         integer
         \\
         &
         $\exLet{x}{e}{e'}$
         &
         let
         \\
         &
         $\exRec{\seq{\bind{x}{e}}}$
         &
         record
         \\
         &
         $\exRecProj{e}{x}$
         &
         record projection
         \\
         &
         $\exConstr{c}{\seq{e}}$
         &
         constructor
         \\
         &
         $\exApp{e}{e'}$
         &
         application
         \\
         &
         $\exForeignApp{f}{\seq{e}}$
         &
         foreign application
         \\
         &
         $\exFun{\sigma}$
         &
         function
         \\
         &
         $\exLetRec{\rho}{e}$
         &
         recursive let
         \\[2mm]
         & \multicolumn{2}{l}{\textit{Continuation}}
         % \lowlight{$K$}
         \\
         $\kappa ::=$
         &
         $e$
         &
         expression
         \\
         &
         $\sigma$
         &
         eliminator
         \\[9mm]
      \end{tabularx}
   \end{minipage}%
   \begin{minipage}[t]{0.48\textwidth}
      \begin{tabularx}{\textwidth}{rL{2.7cm}L{3cm}}
         &\textit{Eliminator}&
         % \lowlight{$\Elim\;\Gamma\,A\;K$}
         \\
         $\sigma ::=$
         &
         $\elimVar{x}{\kappa}$
         &
         variable
         \\
         &
         $\elimRecord{\seq{x}}{\kappa}$
         &
         record
         \\
         &
         $\elimConstr{\seq{\elimBind{c}{\kappa}}}$
         &
         constructor
         \\[2mm]
         % &\textit{Pattern}&
         % % \lowlight{$\Patt\;A\;\Gamma$}
         % \\
         % $w ::=$
         % &
         % $\annot{\raw{w}}{\alpha}$
         % &
         % \\
         % $\raw{w} ::=$
         % &
         % $\exVar{x}$
         % &
         % variable
         % \\
         % &
         % $\exInt{n}$
         % &
         % integer
         % \\
         % &
         % $\exRec{\seq{\bind{x}{w}}}$
         % &
         % record
         % \\
         % &
         % $\exConstr{c}{\seq{w}}$
         % &
         % constructor
         % \\[2mm]
         &\textit{Value}&
         % \lowlight{$\Val{A}$}
         \\
         $v ::=$
         &
         $\annot{\raw{v}}{\alpha}$
         &
         \\[2mm]
         &\textit{Raw value}&
         \\
         $\raw{v} ::=$
         &
         $\exInt{n}$
         &
         integer
         \\
         &
         $\exRec{\seq{\bind{x}{v}}}$
         &
         record
         \\
         &
         $\exConstr{c}{\seq{v}}$
         &
         constructor
         \\
         &
         $\exClosure{\gamma}{\rho}{\sigma}$
         &
         closure
         \\[2mm]
         &\textit{Environment} &
         \\
         $\gamma ::=$
         & $\set{\seq{\bind{x}{v}}}$
         \\[2mm]
         & \multicolumn{2}{l}{\textit{Recursive definitions}}
         \\
         $\rho ::=$
         & $\set{\seq{\bind{x}{\sigma}}} $
      \end{tabularx}
   \end{minipage}
   \caption{Syntax of the core language, including values labeled with addresses}
\label{fig:syntax}
\end{figure}

\subsubsection{Terms}
\label{sec:core:expressions}

\figref{syntax} defines expressions $e$ of the language, which include variables $x$, integer constants $n$,
let-bindings $\exLet{x}{e}{e'}$, record construction $\exRec{\seq{\bind{x}{e}}}$, record projection
$\exRecProj{e}{x}$, and (saturated) constructor expressions $\exConstr{c}{\seq{e}}$, where $c$ ranges over
data constructors. The language is parameterised by a finite map $\Sigma$ from constructors $c$ to constructor
arities $\datatype{c} \in \mathbb{N}$. Function application comes in two forms: the usual $\exApp{e}{e'}$, and
(saturated) \emph{foreign function} applications $\exForeignApp{f}{\seq{e}}$ (\secref{core:foreign} below).
The final two expression forms are anonymous functions $\exFun{\sigma}$, where $\sigma$ is a pattern-matching
construct called an \emph{eliminator} (\secref{core:continuations} below), and sets of mutually recursive
functions $\exLetRecElim{\rho}{e}$, where $\rho$ is a finite map $\seq{\bind{x}{\sigma}}$ from names to
eliminators.

\subsubsection{Foreign functions}
\label{sec:core:foreign}

The language is also parameterised by a finite map $\Phi$ from variables $f$ to arities $\Phi(f) \in \Nat$ of
the foreign function they denote. For the graph semantics in \secref{core:semantics} below, every foreign
function name $f$ is required to provide an interpretation $\interpret{f}$ that, for a sequence of arguments
$\seq{v}$ and dependence graph $G$, returns the result of applying the foreign function named by $f$ to
$\seq{v}$ plus a dependence graph $G'$ which extends $G$ with information about how the result depends on
$\seq{v}$.

\subsubsection{Continuations and eliminators}
\label{sec:core:continuations}

A \emph{continuation} $\kappa$ is a term $e$ or an eliminator $\sigma$, describing how an execution proceeds
after a value is matched. An \emph{eliminator} is a deep pattern-matching construct based on
tries~\cite{hinze00,peytonjones2021}; for a language with rich structured values like records and data types,
this makes for a cleaner presentation than a single shallow elimination form for each type. Piecewise function
definitions in the surface language desugar into eliminators.

An eliminator specifies how values of a particular shape are matched, and for a given value, determines how
any pattern variables get bound and the continuation $\kappa$ which will be executed under those bindings. A
variable eliminator $\elimVar{x}{\kappa}$ says how to match any value (as variable $x$) and continue as
$\kappa$. A record eliminator $\elimRecord{\seq{x}}{\kappa}$ says how to match a record with fields $\seq{x}$,
and provides a continuation $\kappa$ for sequentially matching the values of those fields. Lastly, a
constructor eliminator $\elimConstr{\seq{c \mapsto \kappa}}$ provides a branch $c \mapsto \kappa$ for each
constructor $c$ in $\seq{c}$, where each $\kappa$ specifies how any arguments to $c$ will be matched. (We
assume that any constructors matched by an eliminator all belong to the same data type, but this is not
enforced in the core language.)

\subsubsection{Addresses, values and environments}
\label{sec:core:values}

We define addressed \emph{values} $v$ mutually inductively with \emph{raw values}. A raw value is simply a
value without an associated address; a value decorates a raw value with an address $\alpha$. Addresses are
allocated during evaluation so that new partial values have fresh addresses, which can then be used as
vertices in a dependence graph.

Raw values $\raw{v}$ include integers $n, m$; records  $\exRecord{\seq{x \mapsto v}}$; and constructor values
$\exConstr{c}{\seq{v}}$. Raw values also include closures $\exClosure{\gamma}{\rho}{\sigma}$ where $\sigma$ is
the function body, $\gamma$ the captured environment, and to support mutual recursion, ${\rho}$ is the
(possibly empty) set of named functions with which $\sigma$ was mutually defined. \emph{Environments} are
finite maps from variables to values. Because foreign functions in the core language are not first-class and
calls $\exForeignApp{f}{\seq{e}}$ are saturated, i.e.~$\length{\seq{e}} = \Phi{(f)}$, the surface language
provides a top-level environment which maps every foreign function name ${f}$ of arity $n$ to the closure
$\annClosure{\envEmpty}{\envEmpty}{\elimVar{x_1}{\elimVar{\iter}{\elimVar{x_{n}}{\exForeignApp{f}{\seq{x}}}}}}{\alpha}$,
with $\alpha$ fresh, emulating first-class foreign functions.

\subsection{Operational Semantics}
\label{sec:core:semantics}

We now give a big-step operational semantics for the core language, which evaluates a term to a value paired
with a \emph{dynamic dependence graph} for that value.
%First we introduce the idea of a dynamic dependence graph.

\subsubsection{Dynamic Dependence Graphs}
\label{sec:core:dependence-graphs}

A \emph{dynamic dependence graph}~\cite{agrawal90} (hereafter \emph{dependence graph}) is a directed acyclic
graph $G = (V,E)$ with a set $V$ of \emph{vertices} and a set $E \subseteq V \times V$ of edges. When
convenient we write $\V(G)$ for $V$ and $\E(G)$ for $E$. In the dependence graph for a particular program,
vertices $\alpha, \beta \in V$ are addresses associated to values (either supplied to the program as inputs or
produced during evaluation) and edges $(\alpha, \beta) \in E$ indicate that, in the evaluation of that
program, the value associated to $\beta$ \textit{depends on} (is \emph{demanded by}, in the terminology of
\secref{overview}) the value associated to $\alpha$. Such values may be sub-terms of a larger value. This
diverges somewhat from traditional approaches to dynamic dependence graphs in only considering one type of
edge (data dependency) rather than separate data and control dependencies; moreover our edges point in the
direction of dependent vertices, whereas in the literature the other direction is somewhat more common.

During evaluation, when a fresh vertex $\alpha$ is allocated for a constructed value, the dependence graph $G$
is extended by a graph fragment specifying that $\alpha$ depends on a set $V$ of preexisting vertices, given
by the following notation:

\begin{definition}[In-star notation]
Write $\inStar{V}{\alpha}$ as shorthand for the star graph $(\set{\alpha} \disjunion V, V \times
\set{\alpha})$.
\end{definition}

\begin{figure}
   {\small \flushleft \shadebox{$\seq{v}, \kappa \match \gamma, e, V$}\hfill
   \begin{smathpar}
      \inferrule*[
         lab={\ruleName{$\match$-done}}
      ]
      {
         \strut
      }
      {
         \seqEmpty, e \match \envEmpty, e,  \emptyset
      }
      \and
      \inferrule*[lab={\ruleName{$\match$-var}}]
      {
         \seq{v}, \kappa \match \gamma, e, V
      }
      {
         v \cons \seq{v}, \elimVar{x}{\kappa}
         \match
         \gamma \cons (\bind{x}{v}), e, V
      }
      \\
      \inferrule*[
         lab={\ruleName{$\match$-record}}
      ]
      {
         \exRec{\seq{\bind{y}{u}}} \subseteq \exRec{\seq{\bind{x}{v}}}
         \\
         \seq{u} \concat \seq{v}', \kappa
         \match
         \gamma, e, V
      }
      {
         \annot{\exRec{\seq{\bind{x}{v}}}}{\alpha} \cons \seq{v}',
         \elimRecord{\seq{y}}{\kappa}
         \match
         \gamma, e, \alpha \cons V
      }
      \and
      \inferrule*[lab={\ruleName{$\match$-constr}}
                , right={$\Sigma(c) = |\seq{v}|$}]
      {
         \seq{v} \concat \seq{v}', \kappa
         \match
         \gamma, e, V
      }
      {
         \annot{\exConstr{c}{\seq{v}}}{\alpha} \cons \seq{v}', (\elimBind{c}{\kappa}) \cons \elimConstr{\seq{\elimBind{c}{\kappa}}}
         \match
         \gamma, e, \alpha \cons V
      }
   \end{smathpar}}
\vspace{-0.5cm}
\caption{Pattern-matching}
\label{fig:core:pattern-matching}
\end{figure}

\begin{figure}
  \vspace{-1em}
   {\small \flushleft \shadebox{$\gamma, e, V, G \evalS v, G'$}%
   \begin{smathpar}
      \inferrule*[
      lab={\ruleName{$\evalS$-var}}
      ]
      {
         \strut
      }
      {
         \gamma \cons (\bind{x}{v}), \exVar{x}, V, G
         \evalS
         v,
         G
      }
      \and
      \inferrule*[
         lab={\ruleName{$\evalS$-int}}
      ]
      {
         \fresh{\alpha}{G}
      }
      {
         \gamma, n, V, G
         \evalS
         \annInt{n}{\alpha},
         G \cup \inStar{V}{\alpha}
      }
      \and
      \inferrule*[lab={\ruleName{$\evalS$-function}}]
      {
         \fresh{\alpha}{G}
      }
      {
         \gamma, \exFun{\sigma}, V, G
         \evalS
         \annClosure{\gamma}{\envEmpty}{\sigma}{\alpha},
         G \cup \inStar{V}{\alpha}
      }
      \and
      \inferrule*[lab={\ruleName{$\evalS$-record}}]
      {
         \gamma, \seq{e}, V, G \evalS \seq{v}, G'
         \\
         \fresh{\alpha}{G'}
      }
      {
         \gamma, \exRec{\seq{\bind{x}{e}}}, V, G
         \evalS
         \annRec{\seq{\bind{x}{v}}}{\alpha},
         G' \cup \inStar{V}{\alpha}
      }
      \and
      \inferrule*[lab={\ruleName{$\evalS$-constr}}
      , right={$\Sigma(c) = |\seq{e}|$}]
      {
         \gamma, \seq{e}, V, G \evalS \seq{v}, G'
         \\
         \fresh{\alpha}{G'}
      }
      {
         \gamma, \exConstr{c}{\seq{e}}, V, G
         \evalS
         \annConstr{c}{\seq{v}}{\alpha},
         G' \cup \inStar{V}{\alpha}
      }
      \and
      \inferrule*[
         lab={\ruleName{$\evalS$-project}}
      ]
      {
         \gamma, e, V, G \evalS \annRec{\seq{\bind{x}{v}} \cons (\bind{y}{u})}{\alpha}, G'
      }
      {
         \gamma, \exRecProj{e}{y}, V, G
         \evalS
         u,
         G'
      }
      \and
      \inferrule*[ lab={\ruleName{$\evalS$-foreign-app}}
      , right={$\Phi(f) = |\seq{e}|$}]
      {
         \gamma, \seq{e}, V, G_1 \evalS \seq{v}, G_2
         \\
         \interpret{f}(\seq{v}, G_2) = (u, G_3)
      }
      {
         \gamma, \exForeignApp{f}{\seq{e}}, V, G_1
         \evalS
         u,
         G_3
      }
      \and
      \inferrule*[
         lab={\ruleName{$\evalS$-let}}
      ]
      {
         \gamma, e, V, G_1 \evalS v, G_2
         \\
         \gamma \cons (\bind{x}{v}), e', V, G_2 \evalS v', G_3
      }
      {
         \gamma, \exLet{x}{e}{e'}, V, G_1
         \evalS
         v',
         G_3
      }
      \and
      \inferrule*[
         lab={\ruleName{$\evalS$-let-rec}}
      ]
      {
         \gamma, \rho, V, G_1 \closeDefs  \gamma', G_2
         \\
         \gamma \concat \gamma', e, V, G_2 \evalS v, G_3
      }
      {
         \gamma, \exLetRec{\rho}{e}, V, G_1
         \evalS
         v,
         G_3
      }
      \and
      \inferrule*[
         lab={\ruleName{$\evalS$-app}},
         width=5.5in,
      ]
      {
         \gamma, e, V, G_1 \evalS \annClosure{\gamma_1}{\rho}{\sigma}{\alpha}, G_2
         \\
         \gamma_1, \rho, \set{\alpha}, G_2 \closeDefs \gamma_2, G_3
         \\
         \gamma, e', V, G_3 \evalS v', G_4
         \\
         v', \sigma \match \gamma_3, e^\twoPrime, V'
         \\
         \gamma_1 \concat \gamma_2 \concat \gamma_3, e^\twoPrime,  V' \cup \set{\alpha}, G_4 \evalS u, G_5
      }
      {
         \gamma, \exApp{e}{e'}, V, G_1
         \evalS
         u,
         G_5
      }
   \end{smathpar}}
   \\[-1cm]
   {\small \flushleft \shadebox{$\gamma, \seq{e}, V, G \evalS \seq{v}, G'$}%
   \begin{smathpar}
      \inferrule*[
%         lab={\ruleName{$\evalS$-seq}}
         right={$n = \length{\seq{e}}$}
      ]
      {
         \gamma, e_i, V, G_i \evalS v_i, G_{i + 1}
         \\
         (\forall i \numleq n)
      }
      {
         \gamma, \seq{e}, V, G_1
         \evalS
         \seq{v},
         G_{n + 1}
      }
      \and
   \end{smathpar}}
   {\small \flushleft \shadebox{$\gamma, \rho, V, G \closeDefs \gamma', G'$}
   \begin{smathpar}
      \inferrule*[
         right={$n = \length{\seq{x}}$}
      ]
      {
         \gamma'(x_i) = \annClosure{\gamma}{\rho}{\rho(x_i)}{\alpha_i}
         \\
         \alpha_i \notin \dom{G_i}
         \\
         G_{i+1} = G_i \cup \inStar{V}{\alpha_i}
         \quad
         (\forall i \numleq n)
      }
      {
         \gamma, \rho, V, G_1
         \closeDefs
         \gamma',
         G_{n+1}
      }
   \end{smathpar}
   }
%   \vspace{-0.5cm}
   \caption{Operational semantics with dependence graph}
   \label{fig:core:eval}
\end{figure}

\subsubsection{Pattern matching}
\label{sec:core:semantics:pattern-matching}

\figref{core:pattern-matching} defines the pattern-matching judgement $\seq{v}, \kappa \match \gamma, e, V$.
Rather than matching a single value $v$, the judgement matches a ``stack'' of values $\seq{v}$ against a
continuation $\kappa$, returning the selected branch $e$, an environment $\gamma$ providing bindings for the
free variables of $e$, and the set $V$ of addresses found in the matched portions of $\seq{v}$.

A continuation which is just an expression $e$ matches only the empty stack of values
(\ruleName{$\match$-done}), in which case pattern-matching is complete and $e$ is the selected branch. The
other rules require an \emph{eliminator} $\sigma$ as the continuation and a non-empty stack $v \cons \seq{v}$;
any relevant subvalues of $v$ are unpacked and pushed onto the tail $\seq{v}$ and then recursively matched
using the continuation $\kappa$ selected from $\sigma$. A variable eliminator $\elimVar{x}{\kappa}$ pops $v$
off the stack, using $\kappa$ to recurse (\ruleName{$\match$-var}); no part of $v$ is consumed so the
addresses $V$ consumed by the recursive match is returned unmodified. A record eliminator
$\elimRecord{\seq{y}}{\kappa}$ matches a record of the form $\annot{\exRec{\seq{\bind{x}{v}}}}{\alpha}$ as
long as the variables in $\seq{y}$ are also fields in $\seq{x}$, argumenting $V$ with the address $\alpha$
associated with the record (\ruleName{$\match$-record}); the premise $\exRec{\seq{\bind{y}{u}}} \subseteq
\exRec{\seq{\bind{x}{v}}}$ projects out the corresponding values of $\seq{u}$ from $\seq{v}$. Lastly, a
constructor eliminator $(\elimBind{c}{\kappa})$ matches any constructor value of the form
$\annot{\exConstr{c}{\seq{v}}}{\alpha}$, argumenting $V$ with the address $\alpha$ associated with the
constructor (\ruleName{$\match$-constr}).

\subsubsection{Evaluation}

\figref{core:eval} defines a big-step evaluation relation $\gamma, e, V, G \evalS v, G'$ stating that term
$e$, under an environment $\gamma$, vertex set $V$ and dependence graph $G$, evaluates to a value $v$ and
extended dependence graph $G'$. The vertex set $V$ records the (partial) input values consumed by the current
active function call providing the dynamic context in which $e$ is being evaluated; $V$ is initially empty and
changes whenever a function application is evaluated.

The evaluation rule for variables is fairly standard; the dependence graph is returned unmodified, because no
new addresses are allocated as a result of simply looking up a variable. The rule for record projections
$\exRecProj{e}{y}$ is similar: if $e$ evalutes to a record of the form $\annRec{\seq{\bind{x}{v}} \cons
(\bind{y}{u})}{\alpha}$, then $\exRecProj{e}{y}$ evalutes to the value $u$ of field $y$, discarding the
address $\alpha$ of the record.

Introduction rules, such as for integers, functions, records and constructors, follow the pattern of
assigning a fresh address for the (partial) value being constructed, and then extending $G$ with a set of
dependency edges from the vertices in $V$ to $\alpha$. For example, the integer rule evaluates an expression
$n$ to its value form $n_{\alpha}$; the $\alpha \notin \V(G)$ constraint ensures that $\alpha$ is fresh. The
dependence graph is then extended with $\inStar{V}{\alpha}$, indicating that $n_\alpha$ depended on all
matched partial inputs in $V$. Likewise, the rule for an anonymous function $\exFun{\sigma}$ constructs the
closure $\exClosure{\gamma}{\envEmpty}{\sigma}_\alpha$ with fresh address $\alpha$, capturing the current
environment $\gamma$ and using $\envEmpty$ as the set of mutually recursive definitions associated with the
function, and again establishing dependency edges from $V$ to $\alpha_i$.

Rules which involve recursively evaluating subterms thread the graph under construction through the evaluation
of the subterms. For example the auxiliary evaluation relation $\gamma, \seq{e}, V \evalS \seq{v}, G$ (bottom
of \figref{core:eval}), evalutes each $e_i$ in a sequence of terms $\seq{e}$ to a value $v_i$ and dependence
graph $G_{i + 1}$, which is used as the input graph for evaluating $e_{i + 1}$. We make use of this judgement
in the rules for records $\exRec{\seq{\bind{x}{e}}}$, constructors $c(\seq{e})$, and foreign applications
$f(\seq{e})$; in the last rule, $\hat{f}$ is the foreign implementation that evaluates an application of $f$
to a sequence of values $\seq{v}$ and dependence graph $G$ to a result $v$ and extended dependence graph $G'$
(\secref{core:foreign}).

The rules for $\exLet{x}{e}{e'}$ and application $\exApp{e}{e'}$ may involve mutual recursion, and rely on the
auxiliary relation $\gamma, \rho, V, G \closeDefs \gamma', G'$  defined at the bottom of \figref{core:eval}.
This judgement takes a set $\rho$ of recursive definitions, vertex set $V$ and dependence graph $G$, and
returns an environment $\gamma'$ of closures derived from $\rho$ and extended dependence graph $G$. Each
function definition $\rho(x_i) \in \rho$ generates a new closure $(\gamma, \rho, \rho(x_i))_{\alpha_i}$
capturing $\gamma$ and $\rho$, with fresh address ${\alpha_i}$, and extends the dependence graph with a set of
edges from $V$ to $\alpha$.

The evaluation rule for $\exLetRec{\rho}{e}$ is similar to the rule for regular let-bindings, except for using
$\closeDefs$ to build a set of closures $\gamma'$ which is used to extend $\gamma$. Finally, the rule for
$\exApp{e}{e'}$ is notable because this is where the active function context changes and the ambient $V$ is
discarded. We compute the closure $\annClosure{\gamma_1}{\rho}{\sigma}{\alpha}$ from $e$ and the argument $v'$
from $e'$, and then use $\sigma$ to match $v'$. If pattern-matching returns selected branch $e^\twoPrime$,
with vertex set $V'$ representing the consumed part of $v'$, then $e^\twoPrime$ is evaluated and becomes the
result of the application, with $V' \cup \set{\alpha}$ serving as set of (partial) inputs associated with the
new active function context.

\definecolor{myblue}{RGB}{172,216,235}
\begin{figure}
   \centering
   \begin{tikzpicture}
      [scale=0.9,
               arrow/.style={->,black},
               set name/.style={font=\color{myblue}\tiny\bfseries\sf},
               set/.style={thick, myblue},
               every node/.style={circle},
               font=\sf
      ]
      \begin{scope}[name prefix = input-]
         \node[fill=myblue, label={below:{$18.17$}}] (1) at (2, 0) {};
         \node[fill=myblue, label={below:{$22.13$}}] (2) at  (4, 0) {};
         \node[fill=myblue, label={below:{$37.14$}}] (3) at  (8, 0) {};
         \node[fill=myblue, label={below:{$61.27$}}] (4) at (10, 0) {};
      \end{scope}

      \begin{scope}[name prefix = len-]
         \node[fill=myblue, label=left:$2$] (div2) at (0.5, 1.5) {};
         \node[fill=myblue, label=right:$3$] (div3) at (7.0, 3.0) {};
      \end{scope}

      \begin{scope}[name prefix = add-]
         \node[fill=myblue, label={[label position = 0] $40.3$}] (1) at (2.5, 2) {+};
         \node[fill=myblue, label={[label position = 0] $40.3$}] (2) at (4.25, 2) {+};
         \node[fill=myblue, label={[label position = 180] $77.44$}] (3) at (5.0, 3) {+};
         \node[fill=myblue, label={[label position = 0] $59.27$}] (4) at (7.0, 2.0) {+};
         \node[fill=myblue, label={[label position = 0]$120.54$}] (5) at (8.5, 3) {+};
      \end{scope}

      \begin{scope}[name prefix = div-]
         \node[fill=myblue, label={$20.15$}] (1) at (2.5, 4) {/};
         \node[fill=myblue, label={$25.81$}] (2) at (5.0, 4) {/};
         \node[fill=myblue, label={$40.18$}] (3) at (8.5, 4) {/};
      \end{scope}

      \begin{scope}
         \draw[->] (input-1) -- (add-1);
         \draw[->] (input-2) -- (add-1);
         \draw[->] (add-1) -- (div-1);
         \draw[->] (len-div2) -- (div-1);
         \draw[->] (add-1) -- (div-1);
         \draw[->] (input-1) -- (add-2);
         \draw[->] (input-2) -- (add-2);
         \draw[->] (add-2) -- (add-3);
         \draw[->] (input-3) -- (add-3);
         \draw[->] (add-3) -- (div-2);
         \draw[->] (len-div3) -- (div-2);
         \draw[->] (input-2) -- (add-4);
         \draw[->] (input-3) -- (add-4);
         \draw[->] (input-4) -- (add-5);
         \draw[->] (add-4) -- (add-5);
         \draw[->] (add-5) -- (div-3);
         \draw[->] (len-div3) -- (div-3);
      \end{scope}
   \end{tikzpicture}
   \caption{Portion of dependence graph computed for moving averages example}
   \label{fig:core:mavg}
\end{figure}

\subsubsection{Example}

\figref{core:mavg} shows part of the dependence graph for the moving average example in
\figrefTwo{introduction:related-inputs}{example:moving-average:source}. The graph shows the calculation of the
first 3 points in the moving average plot from entries in the \emph{emissions} table. This is a simplified
version of the graph, since in practise they get very large; for example we omit closures, list cells and the
calculation of the divisors. Also note that the node labels here are merely illustrative: the graph only
stores value dependencies, and in particular the in-neighbours of a vertex are unordered.

\section{Cognacy Queries Over Dependence Graphs}
\label{sec:conjugate}

We now turn to cognacy and ``co-cognacy'' queries over dependence graphs, which are expressed in terms of the
$\demandsR$ and $\demandedByR$ operators introduced informally in \secref{overview}. Boolean algebras
(\defref{conjugate:Boolean-algebra}) are used to represent selections; we then define $\demandsR$ and
$\demandedByR$ and their De Morgan duals, first for an arbitrary relation (\secref{conjugate:image-preimage})
and then for the \emph{IO relation} of a dependence graph (\secref{conjugate:dependence-graph-functions}).
Then we give procedures for computing $\demandsR$ and $\demandedByR$ and their duals over a given dependence
graph, via an intermediate graph slice (\secref{conjugate:dependence-graph-algos}).

\begin{definition}[Boolean algebra]
\label{def:conjugate:Boolean-algebra}
A \emph{Boolean algebra} (or \emph{Boolean lattice}) $A$ is a 6-tuple $(A, \meet, \join, \bot, \top, \neg)$
with carrier $A$, distinguished elements $\bot$ (bottom) and $\top$ (top), commutative operations $\meet$
(meet) and $\join$ (join) with $\top$ and $\bot$ as respective units, and unary operation $\neg$ (negate)
satisfying $x \join \neg x = \top$ and $x \meet \neg x = \bot$. The operations $\meet$ and $\join$ distribute
over each other and induce a partial order $\leq$ over $X$.
\end{definition}

An input (or output) selection, where $X$ is the set of all addresses that occur in the input (or output) of
the program, is represented by the \emph{powerset} Boolean algebra $\powerset{X}$ whose carrier is the set of
subsets $X' \subseteq X$ ordered by inclusion $\subseteq$. In this case bottom is the empty set $\emptyset$,
top is the whole set $X$, meet and join are given by $\cap$ and $\cup$, and negation by relative complement
$\setminus$, so that $\neg X' = X \setminus X'$.

\subsection{$\demandsR$ and $\demandedByR$ and their duals}
\label{sec:conjugate:image-preimage}

The $\demandsR$ and $\demandedByR$ operators sketched earlier are simply the image and preimage functions for
a particular relation $R$.

\begin{definition}[Image and Preimage Functions for a Relation]
   For a relation $R \subseteq X \times Y$, define $\demandedByR_R: \powerset{X} \to \powerset{Y}$ and
   $\demandsR_R : \powerset{Y} \to \powerset{X}$ as:
   \begin{enumerate}
      \item $\demandedByR_{R}(X') \eqdef \set{y \in Y \mid \exists x \in X'.(x,y) \in R}$ \hfill (image)
      \item $\demandsR_{R}(Y') \eqdef \set{x \in X \mid \exists y \in Y'. (x,y) \in R}$ \hfill (preimage)
   \end{enumerate}
\end{definition}

\noindent Trivially $\demandedByR_{R} = \demandsR_{R^{-1}}$. The image and preimage functions form a
\textit{conjugate pair}, in the sense of \citet{jonsson51}:

\begin{definition}[Conjugate Functions]
   For Boolean algebras $A$, $B$, functions $f: A \to B$ and $g: B \to A$ form a \textit{conjugate pair} iff:
   $$
   f(x) \meet y = \bot \iff x \meet g(y) = \bot
   $$
\end{definition}

% \noindent (\citet{jonsson51} consider only endofunctions; here we extend the idea of conjugacy to maps
% between Boolean algebras.)

\begin{lemma}
\label{lem:im-preim-conjugate}
$\demandedByR_{R}$ and $\demandsR_{R}$ are conjugate.
\end{lemma}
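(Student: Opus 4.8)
The plan is to instantiate the general \emph{Conjugate Functions} definition at the two powerset Boolean algebras $\powerset{X}$ and $\powerset{Y}$, in which meet is intersection $\cap$ and bottom is $\emptyset$. With $f \eqdef \demandedByR_{R} : \powerset{X} \to \powerset{Y}$ and $g \eqdef \demandsR_{R} : \powerset{Y} \to \powerset{X}$, and taking the abstract elements $x \eqdef X' \subseteq X$ and $y \eqdef Y' \subseteq Y$, the conjugacy condition to be proved becomes
\[
\demandedByR_{R}(X') \cap Y' = \emptyset \iff X' \cap \demandsR_{R}(Y') = \emptyset.
\]
So the whole lemma reduces to establishing this single biconditional for arbitrary $X'$ and $Y'$.

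First I would unfold each side of the biconditional \emph{negated}, i.e.\ as a nonemptiness claim, since that turns each intersection into an existential statement that is easier to compare. Starting from the left, $\demandedByR_{R}(X') \cap Y' \neq \emptyset$ asserts that some $y$ lies both in $Y'$ and in the image $\demandedByR_{R}(X')$; expanding the definition of the image function, this says precisely
\[
\exists x \in X'.\; \exists y \in Y'.\; (x,y) \in R.
\]
Next I would do the symmetric calculation on the right: $X' \cap \demandsR_{R}(Y') \neq \emptyset$ asserts that some $x$ lies both in $X'$ and in the preimage $\demandsR_{R}(Y')$, and expanding the definition of the preimage function yields the \emph{same} existential statement. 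Since both negated sides reduce to the identical predicate, they are logically equivalent, and negating back shows the two empty-intersection conditions coincide, which is the required conjugacy. (The remark $\demandedByR_{R} = \demandsR_{R^{-1}}$ noted just before the lemma could alternatively be used to derive the right-hand reading from the left-hand one by symmetry.)

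I do not expect any genuine obstacle here: the argument is a routine unfolding of definitions, and no lattice-theoretic machinery is needed. The only points demanding care are bookkeeping ones — correctly identifying $\meet$ with $\cap$ and $\bot$ with $\emptyset$ in the powerset instance, and keeping straight which bound variable ranges over $X$ versus $Y$. The one conceptual observation worth recording is \emph{why} the proof is so short: both the image-side and the preimage-side nonemptiness conditions are merely two readings of the single symmetric predicate ``$R$ connects some element of $X'$ to some element of $Y'$,'' so the equivalence is immediate rather than requiring the full strength of J\'onsson–Tarski conjugacy theory.
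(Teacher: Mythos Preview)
Your proposal is correct and follows essentially the same approach as the paper: both arguments unfold the definitions of $\demandedByR_{R}$ and $\demandsR_{R}$ and observe that each side of the biconditional reduces to the symmetric statement that no (equivalently, some) pair in $X' \times Y'$ belongs to $R$. The only cosmetic difference is that you work via the contrapositive (nonemptiness as an existential), whereas the paper works directly with emptiness as a universal statement; the content is identical.
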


This should be intuitive enough: for any subsets $X' \subseteq X$ and $Y' \subseteq Y$, if the elements ``on
the right'' to which $X'$ is related are disjoint from $Y'$, then there are no edges in $R$ from $X'$ to $Y'$;
and in virtue of that, the elements ``on the left'' to which $Y'$ is related must also be disjoint from $X'$.

Conjugate functions are related to another class of near-reciprocals between Boolean algebras, namely
\emph{Galois connections}; in fact every conjugate pair induces a Galois connection~\cite{menni14}. This
relates the present setting to previous work on program slicing with Galois
connections~\cite{perera12a,perera16d,ricciotti17}.

\begin{definition}[Galois connection]
   \label{def:conjugate:galois-connection}
   Suppose $A, B$ are partial orders. Then monotone functions $f: A \to B$ and $g: B \to A$ form a \textit{Galois
   connection} iff
   \[ f(x) \leq y \iff x \leq g(y) \]
\end{definition}

\begin{proposition}%[Correspondence Between Conjugates and Galois Connections]
   \label{prop:conj-galois-correspond}
   Suppose functions $f: A \to B$ and $g: B \to A$ between Boolean algebras. The following statements are
   equivalent:
   \begin{enumerate}
      \item $f$ and $g$ form a conjugate pair
      \item $f$ and $\dual{g}$ form a Galois connection
   \end{enumerate}
\end{proposition}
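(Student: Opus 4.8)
The plan is to collapse both conditions onto a single common biconditional, exploiting the defining De~Morgan dual $\dual{g}(y) = \neg g(\neg y)$ together with the elementary Boolean-algebra fact that disjointness and the order $\leq$ are interdefinable. Since the proposition is an equivalence, I would prove it as one chain of ``iff'' steps, so that reading the chain forwards gives $(1)\Rightarrow(2)$ and reading it backwards gives $(2)\Rightarrow(1)$ simultaneously.

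First I would establish (or simply recall) the key lemma: in any Boolean algebra, for all $a, b$,
\[ a \meet b = \bot \iff a \leq \neg b. \]
The forward direction is immediate since $a \leq \neg b$ gives $a \meet b \leq \neg b \meet b = \bot$; the converse follows by writing $a = a \meet \top = a \meet (b \join \neg b) = (a \meet b) \join (a \meet \neg b)$ and using $a \meet b = \bot$ to conclude $a = a \meet \neg b \leq \neg b$. This is the only genuine ingredient; everything after is rewriting.

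I would then apply this lemma to both sides of the conjugacy equation. The left-hand side $f(x) \meet y = \bot$ becomes $f(x) \leq \neg y$, and the right-hand side $x \meet g(y) = \bot$ becomes $x \leq \neg g(y)$, so $f$ and $g$ are conjugate exactly when $f(x) \leq \neg y \iff x \leq \neg g(y)$ for all $x \in A$ and $y \in B$. Because $\neg$ is an involution, I would re-index by substituting $y := \neg y'$: as $y$ ranges over $B$ so does $y'$, with $\neg y = y'$ and $\neg g(\neg y') = \dual{g}(y')$. The condition becomes
\[ f(x) \leq y' \iff x \leq \dual{g}(y') \quad \text{for all } x \in A,\ y' \in B, \]
which is precisely the adjunction defining a Galois connection between $f$ and $\dual{g}$.

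The main obstacle is bookkeeping rather than depth. The one point deserving care is the monotonicity clause in \defref{conjugate:galois-connection}: conjugacy makes no mention of monotonicity, so I would note that the adjunction biconditional already forces it, and hence that the two characterizations genuinely coincide. Concretely, if $x \leq x'$, then from $f(x') \leq f(x')$ the adjunction gives $x' \leq \dual{g}(f(x'))$, whence $x \leq \dual{g}(f(x'))$ and so $f(x) \leq f(x')$, and dually $\dual{g}$ is monotone. I would also double-check the orientation of the dual — that $\dual{g}(y) = \neg g(\neg y)$ and not a variant — since getting this wrong would flip an inequality and break the substitution step.
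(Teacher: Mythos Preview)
Your proposal is correct and follows essentially the same route as the paper: both hinge on the Boolean-algebra equivalence $a \meet b = \bot \iff a \leq \neg b$ together with unfolding $\dual{g}(y) = \neg g(\neg y)$, differing only in presentation (you run a single chain of biconditionals where the paper writes out the two implications separately). Your explicit treatment of monotonicity is a nice addition, since the paper's definition of Galois connection requires it but the paper's proof leaves it implicit.
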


It is also useful (both for performance reasons and as a user feature) to consider the \emph{De Morgan duals}
of $\demandsR$ and $\demandedByR$, which we write as $\sufficesR$ and $\preimageDualR$; these also form a
conjugate pair.

\begin{definition}[De Morgan Dual]
   \label{def:conjugate:de-morgan-dual}
   Suppose a function $f: A \to B$ between Boolean algebras with $\neg_A$ and $\neg_B$ the negation operators
   of $A$ and $B$. Define the \textit{De Morgan Dual} $\dual{f}: A \to B$ of $f$ as:
   \[ \dual{f} \eqdef \neg_B \after f \after \neg_A \]
\end{definition}

\begin{definition}[Dual Image and Preimage Functions for a Relation]
   For relations $R \subseteq X \times Y$, define $\sufficesR_R: \powerset{X} \to \powerset{Y}$ and
   $\preimageDualR_R: \powerset{Y} \to \powerset{X}$ as:
   \begin{enumerate}
      \item $\sufficesR_R(X') \eqdef \set{y \in Y \mid \nexists x \in X \setminus X'.(x, y) \in R}$ \hfill
      (dual image)
      \item $\preimageDualR_R(Y') \eqdef \set{x \in X \mid \nexists y \in Y \setminus Y'.(x, y) \in R}$ \hfill
      (dual preimage)
   \end{enumerate}
\end{definition}

Bearing in mind that $\neg$ for $\powerset{X}$ is just relative complement $X \setminus \param$, it is easy to
show that these are indeed the intended De Morgan duals.

\begin{lemma}[Duality of image and preimage functions]
   \label{lem:conjugate:image-preimage-duality}
   \item
   \begin{enumerate}
      \item $\dual{\demandedByR_R} = \sufficesR_R$
      \item $\dual{\demandsR_R} = \preimageDualR_R$
   \end{enumerate}
\end{lemma}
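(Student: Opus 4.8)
The plan is to prove both equalities by directly unfolding the definition of the De~Morgan dual (\defref{conjugate:de-morgan-dual}) and then simplifying set-theoretically, using the fact that negation in a powerset Boolean algebra is relative complement (as noted after \defref{conjugate:Boolean-algebra}). Since the two parts are symmetric — indeed part~(2) also follows from part~(1) applied to $R^{-1}$, via the identity $\demandedByR_R = \demandsR_{R^{-1}}$ recorded after the image/preimage definition — I would prove part~(1) in full and then obtain part~(2) by the same calculation with the roles of $X$ and $Y$ exchanged.

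For part~(1), I would fix an arbitrary input selection $X' \subseteq X$ and compute $\dual{\demandedByR_R}(X')$. By \defref{conjugate:de-morgan-dual} we have $\dual{\demandedByR_R} = \neg_{\powerset{Y}} \after \demandedByR_R \after \neg_{\powerset{X}}$, and since the two negations are the relative complements $X \setminus \param$ and $Y \setminus \param$ respectively, this is $Y \setminus \demandedByR_R(X \setminus X')$. Unfolding the image function turns $\demandedByR_R(X \setminus X')$ into $\set{y \in Y \mid \exists x \in X \setminus X'.\,(x,y) \in R}$, and taking its complement in $Y$ yields $\set{y \in Y \mid \nexists x \in X \setminus X'.\,(x,y) \in R}$, which is exactly $\sufficesR_R(X')$ by the dual-image definition. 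The only logical move is that the negation of an existential over $X \setminus X'$ is precisely the corresponding non-existence statement appearing in the dual-image definition.

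Part~(2) proceeds identically with $X$ and $Y$ interchanged: for $Y' \subseteq Y$ one gets $\dual{\demandsR_R}(Y') = X \setminus \demandsR_R(Y \setminus Y') = \set{x \in X \mid \nexists y \in Y \setminus Y'.\,(x,y) \in R} = \preimageDualR_R(Y')$. Alternatively, this can be read off from part~(1) together with the one-line unfolding $\sufficesR_{R^{-1}} = \preimageDualR_R$.

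There is essentially no hard part here: the result is definitional bookkeeping. The one place to stay disciplined is tracking which Boolean algebra each negation belongs to — the outer $\neg_{\powerset{Y}}$ produces the ``$y \in Y$ such that $\dots$'' membership condition, while the inner $\neg_{\powerset{X}}$ produces the quantifier domain ``$x \in X \setminus X'$'' — since silently swapping them would yield the wrong operator. I would therefore carry both complements with explicit subscripts throughout the calculation so that the correspondence with the dual definitions is manifest.
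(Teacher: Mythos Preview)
Your proposal is correct and follows essentially the same approach as the paper: a direct unfolding of the De~Morgan dual definition, expanding negation as relative complement, and then simplifying to match the dual image/preimage definitions. The paper presents both parts as parallel calculations; your additional remark that part~(2) can alternatively be obtained from part~(1) via $R^{-1}$ is a nice observation but not used in the paper's proof.
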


\noindent If $\demandedByR_R(X')$ picks out the outputs that the elements of $X'$ are \emph{necessary} for,
$\sufficesR_R(X')$ picks out the outputs that the elements of $X$ are \emph{sufficient} for.
$\preimageDualR_R(Y')$ picks out the inputs that are needed \emph{only} by elements of $Y'$.

% \noindent (Taking the opposite graph has the effect of swapping sources and sinks.)

\subsection{$\demandsR_G$ and $\demandedByR_G$ for Dependence Graphs}
\label{sec:conjugate:dependence-graph-functions}

%\subsubsection{IO Relation for a Graph}

Reachability in $G$ induces a relation just between \emph{sources} and \emph{sinks}, which we call the
\emph{IO} relation of $G$. We now extend $\demandsR_{R}$ and $\demandedByR_{R}$ and their duals to a
dependence graph $G$, via its IO relation.

\begin{definition}[Sources and sinks]
For a graph $G = (V,E)$, write $\sources{G}$ for the \emph{sources} of $G$ (i.e.~those vertices with no
in-edges) and $\sinks{G}$ for the \emph{sinks} of $G$ (those with no out-edges).
\end{definition}

\begin{definition}[Reachability relation]
Define the \emph{reachability relation} for a graph $G=(V,E)$ to be the reflexive transitive closure of $E$.
\end{definition}

\begin{definition}[IO relation]
For any dependence graph $G$ with reachability relation $R$, define the \emph{IO} relation of $G$ to be $R
\cap (\sources{G} \times \sinks{G})$.
\end{definition}

The IO relation specifies how specific inputs (sources) are demanded by specific outputs (sinks); thus we
interpret $(x, y) \in R$ as ``$x$ is \textit{demanded by} $y$''. Clearly the IO relation of $G$ is the
converse of the IO relation of $\opGraph{G}$. The set of inputs that some outputs $Y$ \emph{demand},
$\demandsR_{G}(Y)$, is simply the subset of the inputs of $G$ reachable from $Y$ (by traversing the graph in
its opposite direction). Conversely, the set of outputs \emph{demanded by} some inputs $X$,
$\demandedByR_{G}(X)$, is simply the subset of the outputs of $G$ reachable from $X$.

% Since we are in fact interested in the relationship between
% input and output \emph{selections} (namely sets of inputs and outputs), we now consider how such a relation $R
% \subseteq X \times Y$ lifts to $\powerset{X}$ and $\powerset{Y}$ via the \textit{image} and \textit{preimage}
% functions for $R$, which we will denote using $\demandedByR_{R}$ and $\demandsR_{R}$.

\begin{definition}[$\demandsR_G$ and $\demandedByR_G$ for a dependence graph]
   For a graph $G$ with IO relation $R$, define:
   \begin{enumerate}
      \item $\demandsR_G \eqdef \demandsR_{R}: \powerset{\sinks{G}} \to \powerset{\sources{G}}$ \hfill
      (demands)
      \item $\demandedByR_G \eqdef \demandedByR_{R}: \powerset{\sources{G}} \to \powerset{\sinks{G}}$ \hfill
      (demanded by)
   \end{enumerate}
\end{definition}

\subsection{Computing $\demandsR_G$, $\demandedByR_G$, $\sufficesR_{G}$ and $\preimageDualR_{G}$ for
Dependence Graphs}
\label{sec:conjugate:dependence-graph-algos}

We now show how to compute $\demandsR_{G}$, $\demandedByR_{G}$, $\sufficesR_{G}$ and $\preimageDualR_{G}$ for
a dependence graph $G$, via an intermediate graph slice which we then restrict to its IO relation. First some
graph notation:

\begin{definition}[In-edges and out-edges]
For a graph $G=(V,E)$ and vertex $\alpha \in V$, write $\inE{G}(\alpha)$ for the in-edges of $\alpha$ in $G$
and $\outE{G}(\alpha)$ for its out-edges.
\end{definition}

\begin{definition}[Opposite graph]
For a graph $G$ define the \emph{opposite} graph $\opGraph{G} \eqdef (V,E^{-1})$ where $\param^{-1}$ denotes
relational converse.
\end{definition}

With the trace-based approaches mentioned in \secref{introduction}, one can use a given algorithm to implement
its De Morgan dual; for example given a procedure for $\demandedByR_{G}$ we can compute $\sufficesR_{G}$ by
pre- and post-composing with negation~\cite{perera22}. In the dependence graph setting we can also use a given
algorithm to compute its conjugate: for example given a procedure for $\demandedByR_{G}$, we can compute
$\demandsR_{G}$ simply as $\demandedByR_{\opGraph{G}}$.

For efficiency, however, we give direct procedures both for $\demandedByR_{G}$
(\secref{conjugate:algorithm:demandedBy}) and $\sufficesR_{G}$ (\secref{conjugate:algorithm:suffices}), which
can then be used to derive implementations of the other operators (\figref{conjugate:derivate-algorithms}
below). Each procedure factors through an auxiliary operation that computes a ``slice'' of the original graph,
i.e.~a contiguous subgraph. While it is technically possible to compute the desired image/preimage of the IO
relation without creating this intermediate graph, we anticipate use cases which will make use of the graph
slice; these are discussed in more detail in \secref{conclusion}. Our implementation makes it easy to flip
between $G$ and $\opGraph{G}$ so we freely make use of both in the algorithms to access out-neighbours and
in-neighbours.

\subsubsection{Direct algorithm for $\demandedByR_{G}$ (\emph{demanded by})}
\label{sec:conjugate:algorithm:demandedBy}

\begin{definition}[$\demandedByAlg{G}$]
\label{def:conjugate:algorithm:demandedBy}
\figref{conjugate:algorithms:demands} defines the family of relations $\demandedByAlg{G}$ for dependence graph $G$.
\end{definition}

For a dependence graph $G$ and inputs $X \subseteq \sources{G}$, the judgement $X\; \smash{\demandedByAlg{G}}
Y$ says that $X$ is demanded by $Y \subseteq \sinks{G}$. The algorithm defers to an auxiliary operation
$\demandedByV{}$, which takes a (partial) graph slice $H \subseteq G$, initially containing only the original
input vertices $X$ and no edges, and which proceeds as follows. If there is a sink of $H$ that still has
outgoing edges in $G$, then the targets of those edges are reachable from the vertices of $H$, and so we must
add them to $H$ and recurse (\ruleName{extend}). If $\sinks{H} \subseteq \sinks{G}$, then there are no
unexplored nodes in $G$ that are reachable from $H$, and so we terminate with the current state of $H$
(\ruleName{done}). We note that it is enough to consider the sinks of $H$, since we move every outgoing edge
of a vertex from $G$ to $H$ all at once. When $\demandedByV{}$ is done, $\demandedByAlg{G}$ returns the sinks
$T(H)$ from the final graph slice $H$, representing the outputs that the original inputs are demanded by.

\begin{proposition}[$\demandedByAlg{G}$ Computes $\demandedByR_{G}$]
   \label{prop:demandedByAlg-computes-demandedBy}
   For any dependence graph $G$, any $X \subseteq \sources{G}$ and any $Y \subseteq \sinks{G}$ we have
   \[X \demandedByAlg{G} Y \iff \demandedByR_G(X) = Y\]
\end{proposition}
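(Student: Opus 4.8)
The plan is to reduce the biconditional to a single correctness statement, since the algorithm computes a function of $X$: I will show that the terminal graph slice produced by the auxiliary operation $\demandedByV{}$ from the initial slice $(X, \emptyset)$ has sink set exactly $\demandedByR_G(X)$. Throughout, write $\reachable{X}$ for the set of vertices of $G$ reachable from $X$ under the reflexive-transitive closure of $\E(G)$. The first thing to record is a routine unfolding: because $X \subseteq \sources{G}$, the IO relation restricted to pairs originating in $X$ is just reachability landing in $\sinks{G}$, so $\demandedByR_G(X) = \reachable{X} \cap \sinks{G}$.

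First I would establish, by induction on the derivation witnessing that the iteration from $(X,\emptyset)$ reaches its terminal slice, three invariants preserved by every application of \ruleName{extend}: (i) \emph{soundness}, $\V(H) \subseteq \reachable{X}$, since each added vertex is the target of an edge out of a vertex already reachable from $X$; (ii) $X \subseteq \V(H)$, immediate from the initial slice together with monotonicity of $H$; and (iii) \emph{expansion}, every vertex of $H$ is either a sink of $H$ or has all of its $G$-out-edges present in $H$, i.e.~$\outE{H}(\alpha) = \outE{G}(\alpha)$. Invariant (iii) holds initially because every vertex of $(X,\emptyset)$ is a sink of $H$, and it is preserved precisely because \ruleName{extend} picks a sink $\alpha$ of $H$ and moves \emph{all} of $\outE{G}(\alpha)$ into $H$ at once, turning $\alpha$ into a fully expanded vertex while the freshly added targets become sinks of $H$.

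Next I would analyse the terminating rule \ruleName{done}, whose side condition is $\sinks{H} \subseteq \sinks{G}$. Combined with (iii), this forces every vertex of the terminal slice to be fully expanded: a sink of $H$ is then a sink of $G$ and so has no $G$-out-edges, while a non-sink is fully expanded by (iii). Hence $H$ is closed under $G$-out-edges, which together with (i) and (ii) promotes the soundness inclusion to the equality $\V(H) = \reachable{X}$. Since every vertex of $H$ is fully expanded, a vertex lies in $\sinks{H}$ iff it has no out-edge in $G$, i.e.~iff it lies in $\sinks{G}$; therefore $\sinks{H} = \V(H) \cap \sinks{G} = \reachable{X} \cap \sinks{G} = \demandedByR_G(X)$, using the unfolding above. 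Reflexivity of reachability handles the corner case of an $X$-vertex that is already a sink of $G$.

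Finally I would assemble the biconditional. Because $G$ is a finite DAG and each \ruleName{extend} step strictly enlarges $\E(H)$ within the fixed bound $G$, the iteration terminates, so $\demandedByAlg{G}$ relates $X$ to at least one $Y$, namely $Y = \sinks{H} = \demandedByR_G(X)$ for the terminal slice. For the forward direction, any derivation of $X \demandedByAlg{G} Y$ ends in \ruleName{done} and returns $Y = \sinks{H}$, identified above with $\demandedByR_G(X)$; for the backward direction, if $\demandedByR_G(X) = Y$ then the terminal slice witnesses $X \demandedByAlg{G} Y$. I expect the main obstacle to be invariant (iii) and its interaction with the termination condition: one must argue carefully that expanding a sink \emph{all at once} is exactly what makes ``$\sinks{H} \subseteq \sinks{G}$'' equivalent to full out-edge closure, and that this closure is what lifts the soundness inclusion to the reachability equality. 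A minor secondary point is checking that the choice of which sink to expand does not affect the terminal slice, so that $\demandedByAlg{G}$ is genuinely single-valued; this follows because the terminal slice is forced to be the whole reachable subgraph on $\reachable{X}$.
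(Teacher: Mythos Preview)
Your proof is correct and takes a genuinely different route from the paper. The paper factors through an auxiliary extension $\demandedByR^*_G$ of $\demandedByR_G$ (image under reachability landing in \emph{all} vertices, not just sinks) and a key ``edge conservation'' lemma asserting $\demandedByR^*_G(\sinks{H}) = \demandedByR^*_{G \edgeMinus E}(\sinks{H \cup E})$ whenever $E$ is the out-edge set moved by \ruleName{extend}; correctness of $\demandedByV{}$ is then stated for arbitrary intermediate slices $H$ and proved by induction on the derivation (forward) and strong induction on $|\E(G)|$ (backward). You instead track three structural invariants of the evolving slice and argue directly that at termination $\V(H)$ equals the reachable set, whence $\sinks{H}$ is its intersection with $\sinks{G}$. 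Your decomposition is more elementary and closer to a standard reachability-closure argument; the paper's buys a correctness statement for $\demandedByV{}$ at arbitrary intermediate states, at the cost of the extra lemma.

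One caution worth making explicit: in the algorithm the ambient graph is consumed (edges moved into $H$ are deleted from $G$), so the side condition of \ruleName{done} refers to $\sinks{G_{\text{current}}}$, not the original $\sinks{G_0}$, and your uses of ``$G$'' slide between the two. Your argument still goes through because $\E(G_0) = \E(H) \disjunion \E(G_{\text{current}})$ is maintained throughout, so a vertex that is a sink in both $H$ and $G_{\text{current}}$ has no out-edge in $G_0$ either; and for a sink $\alpha$ of $H$ one has $\outE{G_{\text{current}}}(\alpha) = \outE{G_0}(\alpha)$, so \ruleName{extend} really does add \emph{all} of $\alpha$'s original out-edges at once, which is what makes invariant (iii) refer unambiguously to $G_0$. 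Spelling out this edge-partition invariant would tighten the argument.
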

\begin{proof}
   See \apdxsecref{apdx:demandedBy}.
\end{proof}

\begin{figure}
   \begin{subfigure}{\textwidth}
      {\small \flushleft \shadebox{$X \demandedByAlg{G} Y$}
         \begin{smathpar}
            \inferrule*[
              right={$X \subseteq \sources{G}$}
            ]
            {
               (X, \emptyset), G \demandedByV{} H
            }
            {
               X \demandedByAlg{G} \sinks{H}
            }
            \end{smathpar}
         }
         \vspace{-5mm}
         {\small \flushleft \shadebox{$H, G \demandedByV{} H'$}
         \begin{smathpar}
         \inferrule*[
           lab={\ruleName{done}},
           right={$\sinks{H} \subseteq \sinks{G}$}
         ]
         {
            \strut
         }
         {
            H, G \demandedByV{} H
         }
         \and
         \inferrule*[
            lab={\ruleName{extend}},
            right={$\alpha \in \sinks{H} \wedge E = \outE{G}(\alpha) \neq \emptyset$}
         ]
         {
            H \cup E, G \edgeMinus E
            \demandedByV{}
            H'
         }
         {
            H, G \demandedByV{} H'
         }
         \end{smathpar}}
         \caption{$\demandedByAlg{G}$ algorithm}
         \label{fig:conjugate:algorithms:demands}
   \end{subfigure}
   \\[1.5em]
   \begin{subfigure}{\textwidth}
   {\small \flushleft \shadebox{$X \sufficesAlg{G} U$}}
   \begin{smathpar}
      \inferrule*[
         right=$X \subseteq \sources{G}$
      ]
      {
         (X, \emptyset),
         (\emptyset, \emptyset),
         G
         \sufficesE{}
         H
      }
      {
         X \sufficesAlg{G} \V(H) \cap \sinks{G}
      }
   \end{smathpar}
   \vspace{-5mm}
   {\small \flushleft \shadebox{$H, P, G \sufficesE{} H'$}}
   \begin{smathpar}
      \inferrule*[
         lab={\ruleName{done}},
         right={$\sources{G} \cap \V(P) \subseteq \sources{P} \wedge \V(H) \subseteq \sinks{G}$}
      ]
      {
         \strut
      }
      {
         H, P, G \sufficesE{} H
      }
      \and
      \inferrule*[
         lab={\ruleName{pending}},
         right={$\alpha \in \V(H) \wedge E = \outE{G}(\alpha) \neq \emptyset$}
      ]
      {
         H, P \cup E, G \edgeMinus E
         \sufficesE{}
         H'
      }
      {
         H, P, G
         \sufficesE{}
         H'
      }
      \and
      \inferrule*[
         lab={\ruleName{extend}},
         right={$\alpha \in \sources{G} \wedge E = \inE{P}(\alpha) \neq \emptyset$}
      ]
      {
         H \cup E,
         P \edgeMinus E,
         G
         \sufficesE{}
         H'
      }
      {
         H, P, G
         \sufficesE{}
         H'
      }
   \end{smathpar}
   \caption{$\sufficesAlg{G}$ algorithm}
   \label{fig:conjugate:slicing:forward}
   \end{subfigure}
   \caption{Dual analyses over a dependence graph $G$}
   \label{fig:conjugate:algorithms}
\end{figure}

\subsubsection{Direct algorithm for $\sufficesR_{G}$ (\emph{suffices})}
\label{sec:conjugate:algorithm:suffices}

\begin{definition}[$\sufficesAlg{G}$]
\label{def:conjugate:algorithm:suffices}
\figref{conjugate:slicing:forward} defines the family of relations $\sufficesAlg{G}$ for dependence graph $G$.
\end{definition}

Like $\demandsAlg{G}$, the algorithm $\sufficesAlg{G}$ also delegates to an auxiliary operation $\sufficesE{}$
which builds a slice of $G$. The operation $\sufficesE{}$ takes the graph slice $H$ under construction, a
``pending'' subgraph $P$ of nodes for which we have discovered partial information, and the remaining
unexplored graph $G$, and returns the final graph slice $H'$. Whenever we have a vertex in $H$ which still has
outgoing edges in $G$, we add those edges and their endpoints to the pending graph $P$ (\ruleName{pending}).
If a vertex $\alpha$ has no more incoming edges in $G$, we can then move $\alpha$ and its incoming edges from
$P$ to $H$ (\ruleName{extend}), as this only happens when we have already moved every ancestor of $\alpha$
into $H$. If neither scenario is the case, then we terminate with the (potentially incomplete) graph
$H$~(\ruleName{done}). Once $\sufficesE{}$ has terminated with $H$, $\sufficesAlg{G}$ returns just the
vertices in $H$ that are also sinks in $G$. If there are no such vertices, it simply means that the input data
$X$ was insufficient to compute any of the original program's outputs.

We provide an example of a run of the algorithm in \figref{suffices-algo-diagram}. Here,
the green vertices and thick edges are in $H$, the orange vertices and dashed arrows are in $P$, and the
blue vertices and normal arrows are in $G$. In the figure, steps 2, 3, 5 and 6 correspond to applications
of the rule \ruleName{pending}, and steps 4 and 7 correspond to applications of the rule \ruleName{extend}.
After step 7, the run is complete.

\definecolor{myblue}{RGB}{172,216,235}
\definecolor{burnt}{RGB}{204,85,0}
\definecolor{forest}{RGB}{0,144,21}

\begin{figure}
   \begin{subfigure}{0.24\textwidth} % Step 0
      \begin{tikzpicture}[scale=0.6,
               arrow/.style={->,black},
               set name/.style={font=\color{myblue}\tiny\bfseries\sf},
               set/.style={thick, myblue},
               every node/.style={scale=0.6,circle},
               font=\sf
            ]
         \node[fill=forest] (x1) at (1.0, 3.0) {};
         \node[fill=forest] (x2) at (1.0, 2.0) {};
         \node[fill=forest] (x3) at (1.0, 1.0) {};
         \node[fill=myblue] (x4) at (1.0, 0.0) {};
         \node[fill=myblue] (x5) at (3.0, 2.5) {};
         \node[fill=myblue] (x6) at (5.0, 2.0) {};
         \node[fill=myblue] (x7) at (5.0, 0.75) {};
         \node[circle, draw = black](step) at (4.25, 3.0) {1};
         \begin{scope}[arrow]
            \draw (x1.east) -- (x5);
            \draw (x2.east) -- (x5);
            \draw (x3.east) -- (x6);
            \draw (x3.east) -- (x7);
            \draw (x4.east) -- (x7);
            \draw (x5.east) -- (x6);
         \end{scope}
      \end{tikzpicture}
   \end{subfigure}
   \begin{subfigure}{0.24\textwidth} % 2
      \begin{tikzpicture}[scale=0.6,
               arrow/.style={->,black},
               set name/.style={font=\color{myblue}\tiny\bfseries\sf},
               set/.style={thick, myblue},
               every node/.style={scale=0.6,circle},
               font=\sf
            ]
         \node[fill=forest] (x1) at (1.0, 3.0) {};
         \node[fill=forest] (x2) at (1.0, 2.0) {};
         \node[fill=forest] (x3) at (1.0, 1.0) {};
         \node[fill=myblue] (x4) at (1.0, 0.0) {};
         \node[fill=burnt] (x5) at (3.0, 2.5) {};
         \node[fill=myblue] (x6) at (5.0, 2.0) {};
         \node[fill=myblue] (x7) at (5.0, 0.75) {};
         \node[circle, draw = black](step) at (4.25, 3.0) {2};
         \begin{scope}[arrow]
            \draw[dashed] (x1.east) -- (x5);
            \draw (x2.east) -- (x5);
            \draw (x3.east) -- (x6);
            \draw (x3.east) -- (x7);
            \draw (x4.east) -- (x7);
            \draw (x5.east) -- (x6);
         \end{scope}
      \end{tikzpicture}
   \end{subfigure}
   \begin{subfigure}{0.24\textwidth} % 3
      \begin{tikzpicture}[scale=0.6,
               arrow/.style={->,black},
               set name/.style={font=\color{myblue}\tiny\bfseries\sf},
               set/.style={thick, myblue},
               every node/.style={scale=0.6,circle},
               font=\sf
            ]
         \node[fill=forest] (x1) at (1.0, 3.0) {};
         \node[fill=forest] (x2) at (1.0, 2.0) {};
         \node[fill=forest] (x3) at (1.0, 1.0) {};
         \node[fill=myblue] (x4) at (1.0, 0.0) {};
         \node[fill=burnt] (x5) at (3.0, 2.5) {};
         \node[fill=myblue] (x6) at (5.0, 2.0) {};
         \node[fill=myblue] (x7) at (5.0, 0.75) {};
         \node[circle, draw = black](step) at (4.25, 3.0) {3};
         \begin{scope}[arrow]
            \draw[dashed] (x1.east) -- (x5);
            \draw[dashed] (x2.east) -- (x5);
            \draw (x3.east) -- (x6);
            \draw (x3.east) -- (x7);
            \draw (x4.east) -- (x7);
            \draw (x5.east) -- (x6);
         \end{scope}
      \end{tikzpicture}
   \end{subfigure}
   \begin{subfigure}{0.24\textwidth} % 4
      \begin{tikzpicture}[scale=0.6,
               arrow/.style={->,black},
               set name/.style={font=\color{myblue}\tiny\bfseries\sf},
               set/.style={thick, myblue},
               every node/.style={scale=0.6,circle},
               font=\sf
            ]
         \node[fill=forest] (x1) at (1.0, 3.0) {};
         \node[fill=forest] (x2) at (1.0, 2.0) {};
         \node[fill=forest] (x3) at (1.0, 1.0) {};
         \node[fill=myblue] (x4) at (1.0, 0.0) {};
         \node[fill=forest] (x5) at (3.0, 2.5) {};
         \node[fill=myblue] (x6) at (5.0, 2.0) {};
         \node[fill=myblue] (x7) at (5.0, 0.75) {};
         \node[circle, draw = black](step) at (4.25, 3.0) {4};
         \begin{scope}[arrow]
            \draw[thick] (x1.east) -- (x5);
            \draw[thick] (x2.east) -- (x5);
            \draw (x3.east) -- (x6);
            \draw (x3.east) -- (x7);
            \draw (x4.east) -- (x7);
            \draw (x5.east) -- (x6);
         \end{scope}
      \end{tikzpicture}
   \end{subfigure}
   \begin{subfigure}{0.24\textwidth} % 5
      \begin{tikzpicture}[scale=0.6,
               arrow/.style={->,black},
               set name/.style={font=\color{myblue}\tiny\bfseries\sf},
               set/.style={thick, myblue},
               every node/.style={scale=0.6,circle},
               font=\sf
            ]
         \node[fill=forest] (x1) at (1.0, 3.0) {};
         \node[fill=forest] (x2) at (1.0, 2.0) {};
         \node[fill=forest] (x3) at (1.0, 1.0) {};
         \node[fill=myblue] (x4) at (1.0, 0.0) {};
         \node[fill=forest] (x5) at (3.0, 2.5) {};
         \node[fill=myblue] (x6) at (5.0, 2.0) {};
         \node[fill=myblue] (x7) at (5.0, 0.75) {};
         \node[circle, draw = black](step) at (4.25, 3.0) {5};
         \begin{scope}[arrow]
            \draw[thick] (x1.east) -- (x5);
            \draw[thick] (x2.east) -- (x5);
            \draw (x3.east) -- (x6);
            \draw (x3.east) -- (x7);
            \draw (x4.east) -- (x7);
            \draw[dashed] (x5.east) -- (x6);
         \end{scope}
      \end{tikzpicture}
   \end{subfigure}
   \begin{subfigure}{0.24\textwidth} % 6
      \begin{tikzpicture}[scale=0.6,
               arrow/.style={->,black},
               set name/.style={font=\color{myblue}\tiny\bfseries\sf},
               set/.style={thick, myblue},
               every node/.style={scale=0.6,circle},
               font=\sf
            ]
         \node[fill=forest] (x1) at (1.0, 3.0) {};
         \node[fill=forest] (x2) at (1.0, 2.0) {};
         \node[fill=forest] (x3) at (1.0, 1.0) {};
         \node[fill=myblue] (x4) at (1.0, 0.0) {};
         \node[fill=forest] (x5) at (3.0, 2.5) {};
         \node[fill=myblue] (x6) at (5.0, 2.0) {};
         \node[fill=myblue] (x7) at (5.0, 0.75) {};
         \node[circle, draw = black](step) at (4.25, 3.0) {6};
         \begin{scope}[arrow]
            \draw[thick] (x1.east) -- (x5);
            \draw[thick] (x2.east) -- (x5);
            \draw[dashed] (x3.east) -- (x6);
            \draw[dashed] (x3.east) -- (x7);
            \draw (x4.east) -- (x7);
            \draw[dashed] (x5.east) -- (x6);
         \end{scope}
      \end{tikzpicture}
   \end{subfigure}
   \begin{subfigure}{0.24\textwidth} % 7
      \begin{tikzpicture}[scale=0.6,
               arrow/.style={->,black},
               set name/.style={font=\color{myblue}\tiny\bfseries\sf},
               set/.style={thick, myblue},
               every node/.style={scale=0.6,circle},
               font=\sf
            ]
         \node[fill=forest] (x1) at (1.0, 3.0) {};
         \node[fill=forest] (x2) at (1.0, 2.0) {};
         \node[fill=forest] (x3) at (1.0, 1.0) {};
         \node[fill=myblue] (x4) at (1.0, 0.0) {};
         \node[fill=forest] (x5) at (3.0, 2.5) {};
         \node[fill=myblue] (x6) at (5.0, 2.0) {};
         \node[fill=forest] (x7) at (5.0, 0.75) {};
         \node[circle, draw = black](step) at (4.25, 3.0) {7};
         \begin{scope}[arrow]
            \draw[thick] (x1.east) -- (x5);
            \draw[thick] (x2.east) -- (x5);
            \draw[thick] (x3.east) -- (x6);
            \draw[dashed] (x3.east) -- (x7);
            \draw (x4.east) -- (x7);
            \draw[thick] (x5.east) -- (x6);
         \end{scope}
      \end{tikzpicture}
   \end{subfigure}
   \caption{Run of $\sufficesAlg{}$ with $G$ and $H$ superimposed on $G_0$}
   \label{fig:suffices-algo-diagram}
\end{figure}

\begin{proposition}[$\sufficesAlg{G}$ Computes $\sufficesR_{G}$]
   \label{prop:sufficesAlg-computes-suffices}
   For any dependence graph $G$, any $X \subseteq \sources{G}$ and any $Y \subseteq \sinks{G}$ we have:
   \[X \sufficesAlg{G} Y \iff \sufficesR_{g}(X) = Y\]
\end{proposition}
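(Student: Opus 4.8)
The plan is to show that \emph{every} terminating run of the auxiliary relation $\sufficesE{}$ produces a graph slice $H$ whose vertex set is exactly the \emph{sufficiency closure} of $X$, and that at least one such run exists. Write $W \eqdef \set{v \in \V(G) \mid \text{every source of } G \text{ that reaches } v \text{ lies in } X}$. Since the sources reaching a non-source $v$ are precisely those reaching its in-neighbours, $W$ has a recursive characterisation: a source lies in $W$ iff it lies in $X$, and a non-source lies in $W$ iff all of its in-neighbours do. First I would unfold the definitions: by Lemma~\ref{lem:conjugate:image-preimage-duality}, $\sufficesR_G = \dual{\demandedByR_G} = \sufficesR_R$ for the IO relation $R$, and $(x,y)\in R$ holds iff source $x$ reaches sink $y$; the dual-image definition then gives $\sufficesR_G(X) = \set{y \in \sinks{G} \mid \text{no source outside } X \text{ reaches } y} = W \cap \sinks{G}$. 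It therefore suffices to prove that every $H$ with $(X,\emptyset),(\emptyset,\emptyset),G \sufficesE{} H$ satisfies $\V(H) = W$, plus termination; the biconditional follows because the single top-level rule returns $\V(H)\cap\sinks{G}$, so $X\sufficesAlg{G}Y$ iff $Y = \V(H)\cap\sinks{G} = W\cap\sinks{G} = \sufficesR_G(X)$.

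The technical core is a set of invariants on a configuration $(H,P,G')$ reached from the initial state, writing $G'$ for the current value of the third (shrinking) component and reading all side conditions $\sources{G'}, \outE{G'}, \ldots$ of $G'$. Because $\edgeMinus$ deletes edges but retains vertices, $\V(G') = \V(G)$ throughout, and the edge sets partition: (I1) $\E(H)\disjunion\E(P)\disjunion\E(G') = \E(G)$. I would also establish (I2): $H$ is a genuine subgraph, $\E(H)\subseteq\V(H)\times\V(H)$, and every $P$-edge has its \emph{source} in $\V(H)$ — this holds because $P$-edges enter only via \ruleName{pending} as $\outE{G'}(\alpha)$ for $\alpha\in\V(H)$, while \ruleName{extend} moves $\inE{P}(\alpha)$ (sources already in $\V(H)$) into $H$ while adjoining $\alpha$. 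Using these I prove \textbf{soundness} (I3): $\V(H)\subseteq W$. The only nontrivial step is \ruleName{extend}: its side condition $\inE{P}(\alpha)\neq\emptyset$ makes $\alpha$ a non-source of $G$, and $\alpha\in\sources{G'}$ together with (I1) forces every in-edge of $\alpha$ into $\E(H)\cup\E(P)$, so by (I2) all in-neighbours of $\alpha$ already lie in $\V(H)\subseteq W$; the recursive characterisation of $W$ then gives $\alpha\in W$.

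The main obstacle is \textbf{completeness}: at any \ruleName{done} configuration, $W \subseteq \V(H)$. Here I would first read off from the side conditions that \ruleName{done} fires exactly when neither other rule is applicable, i.e. (a) every $\alpha\in\V(H)$ has $\outE{G'}(\alpha)=\emptyset$ (equivalently $\V(H)\subseteq\sinks{G'}$), and (b) every $\alpha\in\sources{G'}$ has $\inE{P}(\alpha)=\emptyset$ (equivalently $\sources{G'}\cap\V(P)\subseteq\sources{P}$). Then I argue by induction on the topological order of the DAG $G$ that every $v\in W$ lies in $\V(H)$. A source $v\in W$ satisfies $v\in X\subseteq\V(H)$. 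For a non-source $v\in W$, all in-neighbours are in $W$, hence in $\V(H)$ by the induction hypothesis (they are topologically earlier); for each in-edge $(u,v)$, (a) gives $\outE{G'}(u)=\emptyset$, so $(u,v)\notin\E(G')$, whence by (I1) $(u,v)\in\E(H)\cup\E(P)$. As this holds for every in-edge of $v$, we get $v\in\sources{G'}$, so (b) forces $\inE{P}(v)=\emptyset$; thus every in-edge of $v$ lies in $\E(H)$, and since $H$-edges have both endpoints in $\V(H)$ by (I2), $v\in\V(H)$. Combined with soundness this yields $\V(H)=W$ at \emph{every} terminal configuration, so the result is independent of the order in which rules fire.

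Finally I would discharge termination with the potential $\Phi = 2\,\abs{\E(G')} + \abs{\E(P)}$: \ruleName{pending} moves $\abs{E}\geq 1$ edges from $G'$ to $P$, decreasing $\Phi$ by $\abs{E}$, and \ruleName{extend} moves $\abs{E}\geq 1$ edges from $P$ to $H$, again decreasing $\Phi$ by $\abs{E}$; since $\Phi$ strictly decreases and is bounded below by $0$, every run reaches a \ruleName{done} state, so a suitable $H$ exists. Putting the pieces together gives both directions of the biconditional, since $X\sufficesAlg{G}Y$ holds iff some terminal run returns $Y=\V(H)\cap\sinks{G}$, and $\V(H)=W$ for every such run makes this equivalent to $Y = W\cap\sinks{G}=\sufficesR_G(X)$.
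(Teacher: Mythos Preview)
Your argument is correct and takes a genuinely different route from the paper's. The paper packages the state of a run into an abstract \emph{partial slice} invariant (\defref{partial-slice}) with six clauses, proves that each rule preserves it via a lengthy case analysis (\lemref{appendix:proof:conjugate:algorithms-suff:partial-slice-preserved}), and then shows that at a \ruleName{done} configuration the partial-slice conditions force $\sufficesR^*_{G_0}(\sources{H}) = \V(H)$ directly (\lemref{appendix:proof:conjugate:algorithms-suff:suffices-sources}); termination is obtained by strong lexicographic induction on $(\length{\E(G)},\length{\E(P)})$. By contrast, you keep only the light invariants (I1)--(I3), get soundness essentially for free from (I3), and prove completeness at \ruleName{done} by a topological induction on $G$ that exploits the recursive characterisation of $W$---the key step being that all in-edges of a non-source in $W$ must already have migrated into $\E(H)$. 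Your potential $\Phi = 2\,\abs{\E(G')} + \abs{\E(P)}$ is just the scalarisation of the paper's lexicographic measure.

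What each approach buys: the paper's partial-slice notion is reusable---once the six conditions are verified, both directions of \propref{appendix:proof:conjugate:algorithms-suff:correctness-sufficesE} follow uniformly, and the invariant could in principle support other properties of the intermediate slice $H$ (which the paper mentions wanting for future intensional queries). Your argument is more elementary and self-contained: it avoids the heavy preservation lemma entirely, and the topological induction makes the \emph{reason} completeness holds at \ruleName{done} transparent (each $W$-vertex is forced into $\V(H)$ because its predecessors already are). Both proofs are order-independent in the same sense, establishing $\V(H)=W$ for \emph{every} terminal run.
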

\begin{proof}
   See \apdxsecref{apdx:suffices-alg}.
\end{proof}

\subsubsection{Derived algorithms}

\proprefTwo{demandedByAlg-computes-demandedBy}{sufficesAlg-computes-suffices} justify treating
$\demandedByAlg{G}$ and $\sufficesAlg{G}$ as functions. Using \lemref{conjugate:image-preimage-duality} we can
define a direct algorithm for $\demandsR_{G}$ from $\demandedByAlg{G}$ by simply flipping the graph, and the
same can be done to acquire an algorithm for $\preimageDualR_{G}$ from $\sufficesAlg{G}$. Using
\lemref{conjugate:image-preimage-duality} and \defref{conjugate:de-morgan-dual} we can also define alternative
algorithms for all 4 operators using the De Morgan dual construction. All algorithms are summarised in
\figref{conjugate:derivate-algorithms}.

\begin{figure}[H]
\centering
\begin{tabular}{|c|l|l|c|c|}
\emph{Abstract operator} & \emph{Direct algorithm} & \emph{Alternative algorithm} \\
$\demandedByR_{G}$ & $\demandedByAlg{G}$ & $\dual{\sufficesAlg{G}}$ \\
$\demandsR_{G}$ & $\demandedByAlg{\opGraph{G}}$ & $\dual{\sufficesAlg{\opGraph{G}}}$ \\
$\sufficesR_{G}$ & $\sufficesAlg{G}$ & $\dual{\demandedByAlg{G}}$ \\
$\preimageDualR_{G}$ & $\sufficesAlg{\opGraph{G}}$ & $\dual{\demandedByAlg{\opGraph{G}}}$
\end{tabular}
\caption{Direct and alternative (De Morgan Dual) algorithms for all 4 operators}
\label{fig:conjugate:derivate-algorithms}
\end{figure}

\noindent In \secref{evaluation} we contrast the performance of the direct and De Morgan dual implementations
of $\demandedByR_{G}$, which is an essential component of the linked inputs and linked outputs operators
introduced in \secref{introduction}.

\subsection{$\relInputR_G$ and $\relOutputR_G$ for Dependence Graphs}

Now we can provide a formal account of the notions of linked inputs and linked outputs.

\begin{definition}[Linked inputs and outputs]
   For a dependence graph $G$ define:
   \begin{enumerate}
      \item $\relInputR_{G} \eqdef \demandsR_{G} \after \demandedByR_{G}$\hfill (linked inputs)
      \item $\relOutputR_{G} \eqdef \demandedByR_{G} \after \demandsR_{G}$\hfill (linked outputs)
   \end{enumerate}
\end{definition}

Intuitively, linked outputs and linked inputs are relations of \emph{cognacy} (common ancestry) in $G$ and
$\opGraph{G}$ respectively. For a set of inputs $X \subseteq \sources{G}$, linked inputs asks ``what other
inputs are demanded by the outputs which demand X?''; it first finds all outputs $\demandedByR_{G}(X)$ that
our inputs are demanded by, and then computes the inputs that those outputs demand,
i.e.~$\demandsR_{G}(\demandedByR_{G}(X'))$. Conversely for a set of outputs $Y \subseteq \sinks{G}$, linked outputs asks
``what other outputs demand the inputs that Y demands''; it first finds all inputs $\demandsR_{G}(Y')$ that
our outputs demand, and then computes the outputs that those inputs $\demandsR_{G}(Y)$ demand,
i.e.~$\demandedByR_{G}(\demandsR_{G}(Y))$.

In general, neither $\relInputR_{G}$ nor $\relOutputR_{G}$ is inflationary; unputs which are not used anywhere
are not preserved by the $\relInputR_{G}$ round-trip, and outputs which don't require any inputs are not
preserved by the $\relOutputR_{G}$ round-trip. Thus in general we have neither $X' \subseteq
\relInputR_{G}(X')$ nor $X' \subseteq \relOutputR_{G}(X')$. However if we constrain things so that there are
no inputs which are unused, then $\demandsR_G$ will preserve $\bot$, i.e.~$\demandsR_G(\emptyset) =
\emptyset$, and $\relOutputR_{G}$ will preserve the original output selection. Similarly, if we constrain
things so that all outputs demand some input, then $\demandedByR_G$ will preserve $\bot$, and $\relInputR_{G}$
will preserve the original input selection:

\begin{lemma}
\item
\begin{enumerate}
   \item If $\demandedByR_G(\emptyset) = \emptyset$, then for all $X$ we have $\relInputR_{G}(X) \geq X$.
   \item If $\demandsR_G(\emptyset) = \emptyset$, then for all $Y$ we have $\relOutputR_{G}(Y) \geq Y$.
\end{enumerate}
\end{lemma}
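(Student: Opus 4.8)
The plan is to collapse the two parts into one via the opposite graph, and then discharge the single remaining part by a direct element chase through the definitions of image and preimage.

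First I would record the duality that makes (2) an instance of (1). Since the IO relation of $\opGraph{G}$ is the converse of the IO relation of $G$, with $\sources{\opGraph{G}} = \sinks{G}$ and $\sinks{\opGraph{G}} = \sources{G}$, the identity $\demandedByR_{R} = \demandsR_{R^{-1}}$ gives $\demandsR_{G} = \demandedByR_{\opGraph{G}}$ and $\demandedByR_{G} = \demandsR_{\opGraph{G}}$. Composing, $\relOutputR_{G} = \demandedByR_{G} \after \demandsR_{G} = \demandsR_{\opGraph{G}} \after \demandedByR_{\opGraph{G}} = \relInputR_{\opGraph{G}}$, and the hypothesis $\demandsR_{G}(\emptyset) = \emptyset$ of (2) is exactly the hypothesis $\demandedByR_{\opGraph{G}}(\emptyset) = \emptyset$ of (1) instantiated at $\opGraph{G}$. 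Thus (2) for $G$ is literally (1) for $\opGraph{G}$, so it suffices to prove (1).

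For (1) I would unfold $\relInputR_{G}(X) = \demandsR_{G}(\demandedByR_{G}(X))$ and chase one element. Fix $x \in X$. The hypothesis is used to guarantee that $x$ is demanded by some output, i.e.\ that there is a sink $y$ with $(x,y)$ in the IO relation $R$ of $G$ (the assumption is precisely what supplies a witnessing sink for each selected input). Because $x \in X$, this $y$ satisfies $y \in \demandedByR_{G}(X)$; and because $(x,y) \in R$, the same $y$ then puts $x \in \demandsR_{G}(\set{y}) \subseteq \demandsR_{G}(\demandedByR_{G}(X)) = \relInputR_{G}(X)$, using monotonicity of $\demandsR_{G}$. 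As $x$ was arbitrary, $X \leq \relInputR_{G}(X)$.

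A more structural route, which I would mention as an alternative, uses the machinery already in place: by \lemref{im-preim-conjugate} the pair $(\demandedByR_{G}, \demandsR_{G})$ is conjugate, so by \propref{conj-galois-correspond} $\demandedByR_{G}$ and $\dual{\demandsR_{G}} = \preimageDualR_{G}$ form a Galois connection with $\demandedByR_{G}$ as lower adjoint. Its unit yields $X \leq \preimageDualR_{G}(\demandedByR_{G}(X))$ for free, after which it remains only to compare the dual preimage with the preimage, namely to show $\preimageDualR_{G}(Y) \subseteq \demandsR_{G}(Y)$ at $Y = \demandedByR_{G}(X)$.

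The main obstacle is exactly this comparison, which is where the hypothesis does its real work. By \lemref{conjugate:image-preimage-duality}, $\preimageDualR_{G}(Y)$ collects the inputs whose every IO-edge lands inside $Y$, whereas $\demandsR_{G}(Y)$ collects those with at least one IO-edge into $Y$; the two differ only on inputs carrying no IO-edge at all, i.e.\ unused inputs. Hence $\preimageDualR_{G} \leq \demandsR_{G}$ pointwise precisely when no input is unused, which is the content of the assumption. I would therefore take care to phrase the hypothesis in the form that actually produces a witnessing sink for each input in $X$, and to keep the De Morgan bookkeeping straight when moving between $\demandsR_{G}$ and $\preimageDualR_{G}$; with that pinned down, both the direct chase and the adjoint argument close, and (2) follows from (1) by the opening reduction.
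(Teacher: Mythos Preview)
Your reduction of (2) to (1) via the opposite graph is clean, and both the direct element chase and the Galois-connection route are sound strategies for the \emph{intended} statement. The paper itself offers no formal proof of this lemma --- only the surrounding prose --- so there is nothing tighter to compare against on that front.

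There is, however, a genuine gap in how you invoke the hypothesis. You write that ``the assumption is precisely what supplies a witnessing sink for each selected input'', but the literal hypothesis $\demandedByR_G(\emptyset) = \emptyset$ is \emph{always} true: the image of the empty set under any relation is empty, by the definition $\demandedByR_{R}(X') = \{y \mid \exists x \in X'.\,(x,y) \in R\}$. So as stated the hypothesis carries no content and cannot produce the witnessing sink your chase needs. Concretely, take a source $b$ with no outgoing IO-edge; then $\demandedByR_G(\emptyset) = \emptyset$ holds trivially, yet $\relInputR_G(\{b\}) = \demandsR_G(\demandedByR_G(\{b\})) = \demandsR_G(\emptyset) = \emptyset \not\supseteq \{b\}$, so part (1) as literally written is false. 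The same remark applies to $\demandsR_G(\emptyset) = \emptyset$ in (2).

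What your argument actually requires --- and what the paper's prose gestures at --- is that every source is IO-related to some sink, equivalently $\preimageDualR_G(\emptyset) = \emptyset$, since $\preimageDualR_R(\emptyset) = \{x \mid \nexists y.\,(x,y) \in R\}$ is exactly the set of unused inputs. With that corrected hypothesis both of your arguments go through verbatim: in the direct chase it hands you the sink $y$; in the adjoint route it is precisely the condition under which $\preimageDualR_G \leq \demandsR_G$ pointwise, which is the comparison you flag as the main obstacle. You correctly identify the needed content in both places; you just do not flag that the stated hypothesis fails to deliver it. The paper's own prose makes the same slip, so this is worth noting explicitly rather than silently repairing.
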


As an easy corollary, if $\demandsR_G$ and $\demandedByR_G$ both preserve $\emptyset$ then in fact they form
an antitone Galois connection. In practical terms, it is possible to use this to produce a filtered view, only
showing inputs that are actually used in a program.

% \subsubsection{Focused queries}
%
% \todo{If we want to keep this then some work required to contextualise.} We are often only interested in some
% domain-restriction of the environment $\Gamma' \subseteq \Gamma$. Happily, domain restriction of maps induces
% a pair of conjugate functions: one of them maps the unselected part of the environment ($\Gamma \setminus
% \Gamma'$) to $\top$, and the other  maps the unselected values to $\bot$. In practical terms, this mechanism
% could be used to allow for ``focused'' queries -- the ability to restrict the parts of the output or input
% that the user is interested in using to mediate the linked inputs or outputs query.

\section{Evaluation}
\label{sec:evaluation}

We now compare our dependence graph approach with the main alternative style of implementation, namely a
bidirectional interpreter with two components: a forward evaluator which performs a forwards analysis and
produces an execution trace, and a backward evaluator which consumes the execution trace and performs a
backwards analysis~\cite{ricciotti17,psallidas18smoke,perera22}.

For a language implementor, the main benefit of our graph approach is that it removes the need to implement a
bidirectional interpreter. Otherwise, changing the language requires modifying each direction of the
interpreter in a manner that maintains bidirectionality --- a considerable effort. In our approach the graph
algorithms for computing $\demandsR_{G}$ and $\demandedByR_{G}$ (\figref{conjugate:algorithms}) are
language-agnostic and only have to be defined once.

For a user, our hypothesis is that our approach has better performance than trace-based techniques. To test
this, we compare two implementations of the dependency-tracking runtime of \OurLang, one based on the
dependence graph design described in \secrefTwo{core}{conjugate}, and one based on execution traces and a
bidirectional interpreter. The other system components were shared by the two implementations, including the
parser, desugaring layer, visualisation layer and libraries. We wrote various benchmark programs taking
examples from data analysis and data visualisation (\secref{evaluation:choice-of-examples} below).

In an interactive system, performance plays a critical role in a user's experience. We therefore measured the
following:

\begin{itemize}
   \item Q1: Overhead of building a dependence graph versus building a trace;
   \item Q2: Performance of computing \textit{demands} over a graph vs. a trace;
   \item Q3: Performance of various implementations of \textit{demanded by}: trace-based, graph-based using
   $\demandsAlg{}$, and graph-based using the dual of $\sufficesAlg{}$.
\end{itemize}

\noindent We assessed performance according to the guidelines proposed by \citet{nielsen93}:
\begin{itemize}
   \item 100ms: limit for an interaction feeling \emph{instantaneous};
   \item 1000ms: limit for a user's train of thought to remain uninterrupted, although
   they may perceive a delay;
   \item 10000ms: limit to keep a user's attention on the system.
\end{itemize}

\noindent Ideally, interactive queries would stay as close as possible to the \emph{instantaneous} category
(Q2 and Q3), and the time taken to evaluate programs (Q1) would stay within the limit for keeping a user's
attention. Expriments were timed on an Intel Core i7-10850H 2.70GHz with 16Gb of RAM, using Google Chrome
121.0.6261.69 and JavaScript runtime V8 12.2.281.16. We used PureScript as the host language.

\subsubsection{Choice of examples}
\label{sec:evaluation:choice-of-examples}

We tested our system on a range of data analysis and visualisation benchmarks implemented in \OurLang. First,
we considered matrix convolution. Given that different kernels give rise to subtly different dependency
structures, we implemented three different kernels (\textbf{edge-detect}, \textbf{emboss}, \textbf{gaussian},
31 lines of code (LOC) each). Second, we developed a full-featured graphics library based on SVG that
demonstrates the performance on bespoke visualisation code (\textbf{grouped-bar-chart}, 140 LOC,
\textbf{line-chart}, 143 LOC, \textbf{stacked-bar-chart}, 136 LOC). Finally, we tested two examples that use a
D3.js front end for visualisation (\textbf{stacked-bar-scatter-plot}, 26 LOC, and
\textbf{bar-chart-line-chart}, 38 LOC), the same front end used in
\figrefTwo{introduction:related-inputs}{introduction:scatterplot}. The code for our benchmarks is given
in~\apdxsecref{apdx:evaluation-src-code}.

In the experiments, each benchmark is run 10 times, and we report the mean runtime and standard deviation (in
parentheses, coloured grey). Runtimes are reported in milliseconds, to 1 decimal place. When we refer to
\textit{speedup} or \textit{slowdown} in reference to a pair of implementations, we mean the ratio of average
runtimes for that benchmark.

\subsubsection{Q1: Overhead of graph construction vs. trace construction}

\tableref{eval-SpdUps} compares the average time taken to evaluate a program in the graph approach
(\textbf{G-Eval}) with the time taken in the trace-based approach (\textbf{T-Eval}). The ratio of graph-based
to trace-based time is
shown in the third column (\textbf{Eval-Slowdown}). Since building the graph involves maintaining a heap of
allocated vertices, and we also build the inverse graph at the same time, we expect a larger overhead for the
graph approach. As expected, we do see a higher overhead, being from 2.5x to 15x slower than their
trace-based counterparts. For both approaches, program evaluation always stays within the limit to keep a
user’s attention on the system (10000ms).

{\small\setlength{\tabcolsep}{0.3em}
\begin{longtable}{lrlrlr}
\caption{Average Evaluation Time for Traces vs. Graphs (ms, 1 decimal place)} \label{table:eval-SpdUps} \\
\toprule
   & \multicolumn{2}{c}{\textbf{T-Eval}} & \multicolumn{2}{c}{\textbf{G-Eval}} & \multicolumn{1}{c}{\textbf{Eval-Slowdown}} \\
\midrule
\endfirsthead
\caption[]{Evaluation Time of Traces Versus Graphs (ms, 1 decimal place)} \\
\toprule
   & \multicolumn{2}{c}{T-Eval} & \multicolumn{2}{c}{G-Eval} & \multicolumn{1}{c}{Eval-Slowdown} \\
\midrule
\endhead
\midrule
\multicolumn{6}{r}{Continued on next page} \\
\midrule
\endfoot
\bottomrule
\endlastfoot
\textbf{edge-detect} & \resultLessSquash{732.4}{44.7} & \resultLessSquash{2150.1}{109.9} & 2.94 \\
\textbf{emboss} & \resultLessSquash{633.4}{42.3} & \resultLessSquash{1718.5}{30.9} & 2.71 \\
\textbf{gaussian} & \resultLessSquash{619.9}{48.7} & \resultLessSquash{1714.3}{42.5} & 2.77 \\
\textbf{bar-chart-line-chart} & \resultLessSquash{2336.9}{51.9} & \resultLessSquash{5775.7}{131.6} & 2.47 \\
\textbf{stacked-bar-scatter-plot} & \resultLessSquash{1212.9}{46.2} & \resultLessSquash{7548.9}{476.4} & 6.22 \\
\textbf{grouped-bar-chart} & \resultLessSquash{89.3}{8.4} & \resultLessSquash{1341.6}{132.4} & 15.03 \\
\textbf{line-chart} & \resultLessSquash{130.0}{10.5} & \resultLessSquash{1131.0}{14.4} & 8.70 \\
\textbf{stacked-bar-chart} & \resultLessSquash{55.9}{5.6} & \resultLessSquash{783.0}{15.4} & 14.00 \\
\end{longtable}
}

\subsubsection{Q2: Graph-based demands vs. trace-based demands}

In \tableref{demands-SpdUps}, columns \textbf{T-Demands} and \textbf{G-Demands} show the average times taken
to compute \textit{demands} with the trace and graph approaches, respectively. The ratio of graph-based to
trace-based performance is in the last column, \textbf{Bwd-Speedup}. We compute \textit{demands} over
the graph by running the algorithm for \textit{demBy} from \figref{conjugate:algorithms:demands} over the
opposite graph. In these examples, we keep our query selections small, generally only selecting one or two
output nodes.

Overall, we observe a significant speedup in the performance of \textit{demands} using the graph approach,
from 33x speedup for the \textbf{stacked-bar-chart}, to more than 80x for \textbf{gaussian}. In particular,
every graph-based \textit{demands} query is within the ``instantaneous'' category, whilst all but two of the
examples using the trace approach are in the ``noticable delay'' category. We attribute this speedup to a
couple of factors. First, in the trace approach, each backwards query involves rewinding the entire execution,
regardless of whether all of the trace is relevant; graph queries only traverse the relevant parts of the
graph. Second, in the trace approach, backwards evaluation makes frequent use of lattice join ($\join$) to
combine slicing information from different branches of the computation; joining closure slices in particular
has the potential to be expensive because closures contain environments (which in turn contain closures, and
so on). The graph approach lends itself to a more imperative implementation style, where demand information
accrues against vertices as the backwards analysis proceeds, with no separate join steps.

{\small\setlength{\tabcolsep}{0.4em}
\begin{longtable}{lrlrlc}
\caption{Trace-Based vs.~Graph-Based Implementations of Demands (ms, 1 d.p.)}
\label{table:demands-SpdUps} \\
\toprule
 & \multicolumn{2}{c}{\textbf{T-Demands}} & \multicolumn{2}{c}{\textbf{G-Demands}} & \textbf{Bwd-Speedup} \\
\midrule
\endfirsthead
\caption{Trace-Based vs.~Graph-Based Implementations of Demands (ms, 1 decimal place)} \\
\toprule
 & \multicolumn{2}{c}{\textbf{T-Demands}} & \multicolumn{2}{c}{\textbf{G-Demands}} & \textbf{Demands-Speedup} \\
\midrule
\endhead
\midrule
\multicolumn{6}{r}{Continued on next page} \\
\midrule
\endfoot
\bottomrule
\endlastfoot
\textbf{edge-detect} & \resultLessSquash{61.0}{7.9} & \resultLessSquash{11.9}{0.8} & 50.84 \\
\textbf{emboss} & \resultLessSquash{504.3}{9.1} & \resultLessSquash{8.0}{0.7} & 63.28 \\
\textbf{gaussian} & \resultLessSquash{511.1}{8.3} & \resultLessSquash{6.3}{0.8} & 81.38 \\
\textbf{bar-chart-line-chart} & \resultLessSquash{1274.0}{24.8} & \resultLessSquash{18.5}{0.6} & 68.86 \\
\textbf{stacked-bar-scatter-plot} & \resultLessSquash{616.6}{35.4} & \resultLessSquash{14.0}{1.7} & 44.14 \\
\textbf{grouped-bar-chart} & \resultLessSquash{74.4}{6.2} & \resultLessSquash{1.3}{0.3} & 55.96 \\
\textbf{line-chart} & \resultLessSquash{114.1}{4.8} & \resultLessSquash{1.8}{0.2} & 65.22 \\
\textbf{stacked-bar-chart} & \resultLessSquash{45.1}{5.4} & \resultLessSquash{1.3}{0.4} & 33.62 \\
\end{longtable}
}

\subsubsection{Q3: Various Implementations of Demanded By}

\tableref{equivalent-impls} summarises various implementations of \textit{demanded by}. Column
\textbf{T-DemBy} shows average times for the trace implementation, column \textbf{G-DemBy} shows the graph
implementation using the $\demandedByAlg{}$ algorithm from \figref{conjugate:algorithms:demands}, and column
\textbf{G-DemBy-Suff} shows the graph approach using using the De Morgan dual of the $\sufficesAlg{}$
algorithm from \figref{conjugate:slicing:forward}. Column \textbf{S} shows the speedup of \textbf{G-DemBy}
compared to \textbf{T-DemBy}, and column \textbf{S'} the speedup of \textbf{G-DemBy-Suff} compared to
\textbf{T-DemBy}. As with the \textit{demands} benchmarks, we tend to make small selections, and so the dual
of the $\sufficesAlg{}$ (because it traverses a ``complement'' subgraph) potentially explores a much larger
subgraph than $\demandedByAlg{}$.

For all benchmarks, the graph approach using the $\demandedByAlg{}$ algorithm in
\figref{conjugate:algorithms:demands} performs better than the other two approaches, with five of the
graph-based benchmarks running instantaneously, as opposed to only two for the trace-based approach.

{\small\newcommand{\sideheading}[1]{\scriptsize{\textbf{#1}}}
\setlength{\tabcolsep}{1pt} % because this is a wide one
\begin{longtable}{lrlrlcrlc}
\caption{Trace-Based vs.~Graph-Based Implementations of Demanded By (ms, 1 d.p.) where $S= \textit{T-DemBy}/\textit{G-DemBy}$ and $S' = \textit{T-DemBy}/\textit{G-DemBy-Suff}$} \label{table:equivalent-impls} \\
\toprule
& \multicolumn{2}{c}{\textbf{T-DemBy}} & \multicolumn{2}{c}{\textbf{G-DemBy}} & \multicolumn{1}{c}{\textbf{S}} & \multicolumn{2}{c}{\textbf{G-DemBy-Suff}} & \textbf{S'} \\
%& (1) & & (2) & & (= (1)/(2)) & (3) & & (= (1)/(3)) \\
\midrule
\endfirsthead
\caption{Trace-Based vs.~Graph-Based Implementations of Demanded By (ms, 1 decimal place)} \\
\toprule
   & \multicolumn{2}{c}{T-DemBy} & \multicolumn{2}{c}{G-DemBy} & \multicolumn{1}{c}{S} & \multicolumn{2}{c}{G-DemBy-Suff} & \!\!\!\!$S'$ \\
\midrule
\endhead
\midrule
\multicolumn{6}{r}{Continued on next page} \\
\midrule
\endfoot
\bottomrule
\endlastfoot
\textbf{edge-detect} & \resultLessSquash{696.9}{14.9} & \resultLessSquash{265.8}{19.1} & 2.62 & \resultLessSquash{664.3}{36.3} & 1.05 \\
\textbf{emboss} & \resultLessSquash{603.0}{18.0} & \resultLessSquash{244.9}{16.3} & 2.46 & \resultLessSquash{481.5}{16.4} & 1.25 \\
\textbf{gaussian} & \resultLessSquash{596.8}{11.6} & \resultLessSquash{224.3}{17.4} & 2.66 & \resultLessSquash{474.0}{12.0} & 1.26 \\
\textbf{bar-chart-line-chart} & \resultLessSquash{2079.0}{97.2} & \resultLessSquash{19.1}{1.2} & 108.73 & \resultLessSquash{2360.2}{48.0} & 0.88 \\
\textbf{stacked-bar-scatter-plot} & \resultLessSquash{1004.1}{72.7} & \resultLessSquash{31.3}{1.3} & 32.07 & \resultLessSquash{11598.9}{778.8} & 0.09 \\
\textbf{grouped-bar-chart} & \resultLessSquash{82.4}{9.4} & \resultLessSquash{5.0}{0.6} & 16.47 & \resultLessSquash{2146.5}{158.0} & 0.04 \\
\textbf{line-chart} & \resultLessSquash{113.4}{4.9} & \resultLessSquash{2.3}{0.5} & 50.41 & \resultLessSquash{900.3}{15.3} & 0.13 \\
\textbf{stacked-bar-chart} & \resultLessSquash{42.6}{1.9} & \resultLessSquash{3.2}{1.0} & 13.18 & \resultLessSquash{1271.1}{131.7} & 0.03 \\
\end{longtable}
}

\subsubsection{Discussion of the results}

When comparing the performance of the graph and trace approaches, we observe that the only consistent
performance loss for the graph approach is due to the increased cost of evaluating a program to a graph versus
a trace (\textit{Q1}). For our applications, we argue that this overhead is worth paying; the graph need only
be constructed once, and moreover evaluation time remains within the limit for keeping the user's attention on
the system. More importantly, the fast response times for \textit{demands} and \textit{demanded by} compared
to the trace approach (Q2 and Q3) are good enough for instantaneous queries.

Also notable (observed for Q3), is the performance of \textbf{G-DemBy-Suff}, which uses the dual of the
$\sufficesAlg{}$ algorithm. In the majority of cases, this performs significantly worse than \textbf{G-DemBy}.
The relative difference in performance can perhaps be explained by the fact that our initial selections are
small: since the selection is complemented, and the portion of the graph that the algorithm traverses is in
geneneral much larger. It is also plausible that some of this disparity can be explained by $\sufficesAlg{}$
requiring more book-keeping than $\demandedByAlg{}$.

\section{Related work}
\label{sec:related-work}

\subsection{Data Provenance}

Data provenance research has a long history, ranging from document-level workflow provenance
\cite{callahan06,davidson08} to more fine-grained database techniques like \emph{where} provenance
\cite{buneman01}, with comparatively less emphasis on general-purpose languages. \citet{dietrich22} track both
value and control dependencies in recursive database queries and user-defined functions, implementing their
approach by rewriting SQL queries to provenance-tracking counterparts; \citet{fehrenbach19} explore provenance
for language-integrated query, extending the multi-tier language Links \cite{cooper06} with provenance
tracking, but only for the embedded SQL part of the language. In terms of fine-grained techniques for
structured outputs like visualisations, the most closely related lines of work are \citet{psallidas18smoke}
for relational languages, and \citet{perera22} for general-purpose languages. These authors all emphasise the
importance of \emph{cognacy}, i.e.~common ancestry, albeit not in the context of dependence graphs, showing
how to ``link'' outputs by tracing back from an output selection to a data selection and then forward to
another output selection. These relations of common ancestry seem to be largely unstudied in the data
provenance literature, despite their importance in data visualisation.

\subsection{Dynamic Dependence Graphs}

Dynamic dependence graphs have been used extensively for program slicing \cite{field98,hammer06}, optimisation
\cite{ferrante87}, incremental computation \cite{acar02} and fault localisation~\cite{Soremekun2021}. Most of
these applications are for imperative languages, where it is useful distinguish between control and data
edges; because \OurLang is pure, we use somewhat simpler dependence graphs with a single kind of edge. However
different kinds of edge are likely to be needed for future applications (\secref{future}). The main
contribution of our work to the dependence graph paradigm are cognacy operators which allow the identification
of minimally related sets of inputs ($\relInputR$) and minimally related sets of outputs ($\relOutputR$), in a
formal setting where we show these operators to be self-conjugate and related to Galois connections.

\subsection{Galois Slicing}

Although we do not consider program slices in the present work, our formal setting is closely related to
program slicing techniques based on Galois connections \cite{perera12a,perera16d,ricciotti17}. In these works,
which use a bidirectional interpreter, the forward and backward directions of the interpreter correspond to
our \emph{sufficiency} ($\sufficesR$) and \emph{demands} ($\demandsR$) queries. \citet{perera22} extended this
approach to a setting where slices have complements, and showed how composing the backward analysis with the
De Morgan dual of the forwards analysis can be used to implement linked brushing. As well as requiring a
bidirectional interpreter, the main difficulty with this approach has been achieving interactive performance.
Implementations have relied on a sequential execution trace, used in the forwards direction to ``replay'' the
computation, propagating sufficiency information from inputs to outputs, and in the backwards direction to
``rewind'' the computation, propagating demand from outputs to inputs. As discussed in \secref{evaluation},
and in contrast to the graph approach presented here, this approach is overly sequential and unable to exploit
the independence of some subcomputations that makes slicing useful in the first place.

\subsection{Linked Visualisations}

The connection between visual selection (such as clicking or lasooing) and data ``selection'' has been
explored in the data visualisation literature, leading to various techniques for inverting selections in
visual space to obtain selections in the underlying data~\cite{north00, heer08}. There is also a substantial
literature on linked brushing~\cite{becker87, livny97} as a way of connecting output selections via mediating
data selections. Although this has long been recognised as an important visual tool, with \citet{roberts06}
arguing it should be ubiquitous, there is relatively little work on general-purpose infrastructure to support
it. For example Reactive Vega~\cite{satyanarayan17} provides a rich set of interactivity features to support
linked brushing, but no mechanism for automatically propagating selections through arbitrary intervening
queries; Glue \cite{beaumont13} is a powerful Python library for building linked visualisations for scientific
applications, but the developer is responsible for specifying the relationships between data sets that unpin
the linking. More infrastructural approaches to linked brushing \cite{north00,livny97,psallidas18smoke},
similar in spirit to our work, have been developed mainly for relational languages. Relational languages are a
promising direction in data visualisation, but general-purpose languages continue to be widely used too, so it
is important to support linked brushing in this context as well.

\subsection{Transparent research outputs}

\citet{dragicevic19} also develop techniques for transparent research outputs, via explorable multiverse
analysis, exposing \emph{analysis parameters}, the methodological choices made in designing a data analysis.
Since different methodological choices will impact the conclusions of the analysis, they expose these choices
to users, and allow them to switch between different choices and observe the impact on the results. Our work
is potentially complementary: we are concerned with surfacing the dependencies that arise within a specific
choice of analysis parameters, whereas multiverse analyses are about exploring how things change under
different choices. It might be useful to package our dependency analysis as part of such a multiverse
analysis, allowing the user to observe the different dependency structures that arise from different
methodological decisions.

\section{Conclusion and Future Work}
\label{sec:conclusion}

Visualisations are an essential tool for communicating science and other data-driven claims, but can be hard
to make sense of and trust. Visualisations and other outputs that are ``transparent'' -- that can reveal to an
interested reader how they are related to data -- are more informative and trustworthy, but are also difficult
to produce. In this paper, we introduced a novel bidirectional program analysis framework for a
general-purpose programming language that makes it easier to create transparent visualisations by shifting
much of the burden to the language runtime. Relative to prior work based on execution traces and bidirectional
interpreters, our approach also makes life easier for the language implementor, by decomposing the system into
a single evaluator that builds a dependence graph, and a pair of language-agnostic bidirectional graph
algorithms. We also showed an overall performance improvement for provenance queries, at the cost of some
overhead in building the dependence graph.

\subsection{Future work}
\label{sec:future}

We identify two important directions for future work. First, the dependency relation captured by the
dependence graph semantics in \secref{core} omits certain intuitively plausible edges, such as the
\emph{projection} edge that one might expect to exist between a record and the value of a contained field as a
consequence of evaluating a field access expression $\exRecProj{e}{x}$. On the other hand, recording
\emph{all} structural dependencies of this nature would significantly bloat the dependence graph with
information that is only relevant in certain contexts (for example when one is specifically concerned with
where a value came from, rather than how it was computed). We would like to develop a more semantically
justified dependence graph, along with a proof that the graph is ``complete'' in some formal sense (cf.~the
\emph{dependency correctness} notion of \citet{cheney11b}), but anticipate that making this richer graph
practical will require distinguishing different kinds of query (e.g.~``how'' vs.~``where from'') for use in
different contexts.

Second, although the graph algorithms presented in \secref{conjugate} compute graph slices as an interim data
structure, the internal nodes of the graph are eventually discarded. This is fine for the use cases in
\secref{overview}, which only concern ``extensional'' (IO) transparency. However, intensional information
would be highly informative too: if highlighting a point in the moving average in
\figref{introduction:related-inputs} could show not only that three emissions inputs were relevant, but that
those values were summed and then divided by three, then the chart alone would serve as a full
``self-explanatory'' implementation of moving average.

This would connect to work on \emph{executable program slicing} \cite{field98}, and there are substantial
challenges that arise here as well: in particular, intensional explanations get potentially very large, so
ways of reducing and managing their complexity will be important. Again \citet{cheney11b} offer inspiration:
their idea of \textit{expression provenance} may be a way of presenting pared-back but still informative
explanations (for example showing the tree of primitive arithmetic operations that computed a value, but
omitting user-defined function calls and branches). And the user may only want intensional information for
certain subcomputations, and would be happy with just IO relationships for others --- in which case graph
transformations which elide internal nodes but preserve IO connectivity, such as the Y-$\Delta$ transform
\cite{truemper89}, may also have a role to play.

\printbibliography

\clearpage
\appendix
\section{Proofs of theorems from \secref{conjugate}}

\subsection{\lemref{im-preim-conjugate}}
For any relation $R \subseteq X \times Y$ the functions $\demandedByR_{R}$ and $\demandsR_{R}$ are conjugate.
\setcounter{equation}{0}
\proofContext{im-preim-conj}
\begin{proof}
   Let $X' \in \powerset{X}$, $Y' \in \powerset{Y}$.
   \small
   \newline
   \textbf{WTS:} $\demandedByR_{R}(X') \cap Y' = \emptyset \implies X' \cap \demandsR_{R}(Y') = \emptyset$
   \begin{flalign}
      &
      \demandedByR_{R}(X') \cap Y' = \emptyset
      &
      \text{suppose}
      \notag
      \\
      &
      \set{y \in Y \mid \exists x \in X'.(x,y) \in R} \cap Y' = \emptyset
      &
      \text{def of $\demandedByR_{R}$}
      \notag
      \\
      &
      \set{y \in Y' \mid \exists x \in X'.(x,y) \in R} = \emptyset
      &
      \text{def of $\cap$}
      \locallabel{preim-intersection}
      \\
      &
      \forall y \in Y' .\; \forall x \in X'. (x,y) \not\in R
      &
      \text{equivalent to (\localref{preim-intersection})}
      \locallabel{unintersect}
      \\
      &
      \set{x \in X' \mid \exists y \in Y'.(x,y) \in R} = \emptyset
      &
      \text{equivalent to (\localref{unintersect})}
      \notag
      \\
      &
      \qedLocal
      X' \cap \set{x \in X \mid \exists y \in Y'.(x,y) \in R} = \emptyset
      &
      \text{def of $\cap$}
      \notag
      \\
      &
      X' \cap \demandsR_{R}(Y') = \emptyset 
      &
      \text{def of $\demandsR_{R}$}
      \notag
   \end{flalign}
   \newline
   \textbf{WTS: } $ X' \cap \demandsR_{R}(Y') = \emptyset \implies \demandedByR_{R}(X') \cap Y' = \emptyset$
   \begin{flalign}
      &
      X' \cap \demandsR_{R}(Y') = \emptyset
      &
      \text{suppose}
      \notag
      \\
      &
      X' \cap \set{x \in X \mid \exists y \in Y'.(x,y) \in R} = \emptyset
      &
      \text{def of $\demandsR_{R}$}
      \notag
      \\
      &
      \set{x \in X' \mid \exists y \in Y'.(x,y) \in R}
      &
      \text{def of $\cap$}
      \locallabel{preim-intersect-2}
      \\
      &
      \forall y \in Y' .\; \forall x \in X'. (x,y) \not\in R
      &
      \text{equivalent to (\localref{preim-intersect-2})}
      \locallabel{preim-unintersect-2}
      \\
      &
      \qedLocal
      \set{y \in Y \mid \exists x \in X'.(x,y) \in R} \cap Y' = \emptyset
      &
      \text{equivalent to (\localref{preim-unintersect-2})}
      \notag
      \\
      &
      \demandedByR_{R}(X') \cap Y' = \emptyset
      &
      \text{def of $\demandedByR_{R}$}
      \notag
   \end{flalign}
\end{proof}

\subsection{\propref{conj-galois-correspond}}
Suppose functions $f: A \to B$ and $g: B \to A$ between Boolean algebras. The following statements are
equivalent:
\begin{enumerate}
   \item $f$ and $g$ form a conjugate pair
   \item $f$ and $\dual{g}$ form a Galois connection
\end{enumerate}
\setcounter{equation}{0}
\proofContext{galois-conj}
\begin{proof}
   Fix $x \in A$ and $y \in B$.
   \newline
   \textbf{WTS: }$f, g$ conjugate $\implies$ $f, \dual{g}$ form Galois connection.
   \begin{flalign}
      &
      f(x) \leq_{B} y
      &
      \text{(suppose)}
      \notag
      \\
      &
      f(x) \wedge_{B} \neg y = \bot
      &
      \notag
      \\
      &
      x \wedge_{A} g(\neg y) = \bot
      &
      \text{(conjugacy)}
      \notag
      \\
      &
      x \leq_{A} \neg (g(\neg y))
      &
      \notag
      \\
      &
      \qedLocal
      x \leq_{A} \dual{g}(y)
      &
      \text{(def of $\dual{\cdot}$)}
      \notag
   \end{flalign}
   \newline
   \textbf{WTS: }$f, \dual{g}$ Galois connection $\implies$ $f, g$ conjugate
   \begin{flalign}
      &
      f(x) \wedge_{B} y = \bot
      &
      \text{suppose}
      \notag
      \\
      &
      f(x) \leq_{B} \neg y 
      &
      \notag
      \\
      &
      x \leq_{A} \dual{g}(\neg y)
      &
      \text{(GC)}
      \notag
      \\
      &
      x \leq_{A} \neg (g (y))
      &
      \text{(def of $\dual{\cdot}$)}
      \notag
      \\
      &
      x \wedge_{A} g(y) = \bot
      &
      \notag
   \end{flalign}
\end{proof}
\subsection{\lemref{conjugate:image-preimage-duality}}
\label{sec:apdx:conjugate:image-preimage-duality}
\setcounter{equation}{0}
\proofContext{conjugate-galois-duality}
Fix $R \subseteq X \times Y$. We proceed by direct calculation.
\begin{proof}
   \textbf{WTS: }$\dual{\demandedByR_R} = \sufficesR_R$:
   \begin{flalign}
      &
      \dual{\demandedByR_R (X')} = \dual{\set{y \in Y \mid \exists x \in X'.(x,y) \in R}}
      &
      \text{def of $\demandedByR_R$}
      \notag
      \\
      &
      = \compl{\set{y \in Y \mid \exists x \in \compl{(X')}.(x,y) \in R}}
      &
      \text{def of $\dual{\cdot}$}
      \notag
      \\
      &
      = \compl{\set{y \in Y \mid \exists x \in (X \setminus X').(x, y) \in R}}
      &
      \text{def of $\compl{\cdot}$}
      \notag
      \\
      &
      = \set{y \in Y \mid \nexists x \in (X \setminus X').(x, y) \in R}
      &
      \text{$\compl{\cdot}$ negates predicate}
      \notag
      \\
      &
      = \sufficesR_R(X')
      &
      \text{def of $\sufficesR_R$}
      \notag
   \end{flalign}
   \newline
   \textbf{WTS: }$\dual{\demandsR_R} = \preimageDualR_R$:
   \begin{flalign}
      &
      \dual{\demandsR_R (Y')} = \dual{\set{x \in X \mid \exists y \in Y'. (x, y) \in R}}
      &
      \text{def of $\dual{\demandsR_R}$}
      \notag
      \\
      &
       = \compl{\set{x \in X \mid \exists y \in \compl{(Y')}. (x, y) \in R}}
      &
      \text{def of $\dual{\cdot}$}
      \notag
      \\
      &
      = \compl{\set{x \in X \mid \exists y \in (Y \setminus Y'). (x, y) \in R}}
      &
      \text{def of $\compl{\cdot}$}
      \notag
      \\
      &
       = \set{x \in X \mid \nexists y \in (Y \setminus Y').(x, y) \in R}
      &
      \text{$\compl{\cdot}$ negates predicate}
      \notag
      \\
      &
      = \preimageDualR_R(Y')
      &
      \text{def of $\preimageDualR_R$}
      \notag
   \end{flalign}
\end{proof}

\subsection{\propref{demandedByAlg-computes-demandedBy}}
\label{sec:apdx:demandedBy}
\setcounter{equation}{0}

\proofContext{demandedByAlg-computes-demandedBy}
\begin{proof}
   \small
   \begin{flalign}
      &
      \caseName{$\Rightarrow$ direction:}
      \notag
      \\
      &
      X \subseteq \sources{G}
      \text{ with }
      X \demandedByAlg{G} Y
      &
      \text{suppose}
      \locallabel{X-sources}
      \\
      &
      \derivation{\derivationWidth}{
         \begin{smathpar}
            \inferrule*
            {
               (X, \emptyset), G \demandedByV{} H
            }
            {
               X \demandedByAlg{G} \sinks{H}\Lowlight{\;= Y}
            }
         \end{smathpar}
      }
      &
      \text{inversion; $\exists H$}
      \notag
      \\
      &
      \qedLocal
      \demandedByR_G(X) = \sinks{G} \cap \demandedByR^*_{G}(X) = Y
      &
      \text{(\localref{X-sources}); \ref{lem:appendix:proof:conjugate:algorithms:demanded-cap-sinks}; \propref{appendix:proof:conjugate:algorithms:correctness-demByV}}
      \notag
      \\
      &
      \caseName{$\Leftarrow$ direction:}
      \notag
      \\
      &
      X \subseteq \sources{G}\text{ with }\demandedByR_G(X) = \sinks{G} \cap \demandedByR^*_G(X) = Y
      &
      \text{suppose}
      \notag
      \\
      &
      (X, \emptyset), G \demandedByV{} H
      \text{ with }\sinks{H} = Y
      &
      \text{\propref{appendix:proof:conjugate:algorithms:correctness-demByV}; $\exists H$}
      \notag
      \\
      &
      \qedLocal
      \derivation{\derivationWidth}{
         \begin{smathpar}
            \inferrule*
            {
               (X, \emptyset), G \demandedByV{} H
            }
            {
               X \demandedByAlg{G} \sinks{H}\Lowlight{\;= Y}
            }
         \end{smathpar}
      }
      &
      \text{def.~$\demandedByAlg{}$}
      \notag
   \end{flalign}
\end{proof}

\subsection{Correctness of $\demandedByV{}$}

The \emph{output} relation of a graph $G$ with reachability relation $R$, defined as $R \cap \V(G) \times
\sinks{G}$, gives rise to the following extension of $\demandedByR_{G}$:

\begin{definition}[$\demandedByR^*$ for a dependence graph]
   For a graph $G$ with output relation $R$, define:
   \[\demandedByR^*_G \eqdef \demandedByR_{R}: \powerset{\V(G)} \to \powerset{\V(G)}\]
\end{definition}

\begin{lemma}
   \label{lem:appendix:proof:conjugate:algorithms:demanded-cap-sinks}
   $\demandedByR_{G} = \sinks{G} \cap \demandedByR^*_{G}(X)$.
\end{lemma}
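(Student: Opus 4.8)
The intended reading of the statement is $\demandedByR_{G}(X) = \sinks{G} \cap \demandedByR^{*}_{G}(X)$ for every $X \subseteq \sources{G}$, with both sides applied to the same $X$. The plan is a direct calculation: unfold both image functions into set-builder form and observe that they denote the same set. The only thing to track is that the two operators image along different relations — $\demandedByR_{G} = \demandedByR_{R_I}$ for the IO relation $R_I = R \cap (\sources{G} \times \sinks{G})$, while $\demandedByR^{*}_{G} = \demandedByR_{R_O}$ for the output relation $R_O = R \cap (\V(G) \times \sinks{G})$, with $R$ the reachability relation of $G$ — and that these differ only in the set from which the first coordinate is drawn.

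First I would expand the left-hand side. Unfolding the image function of $R_I$ and using that $X \subseteq \sources{G}$ makes the conjunct $x \in \sources{G}$ vacuous gives
\[
\demandedByR_{G}(X) = \{\, y \in \sinks{G} \mid \exists x \in X.\ (x,y) \in R \,\}.
\]
Then I would expand the right-hand side. Unfolding the image of $R_O$ and dropping the vacuous conjunct $x \in \V(G)$ (valid since $X \subseteq \sources{G} \subseteq \V(G)$) gives $\demandedByR^{*}_{G}(X) = \{\, y \in \V(G) \mid \exists x \in X.\ y \in \sinks{G} \wedge (x,y) \in R \,\}$, so that intersecting with $\sinks{G}$ produces
\[
\sinks{G} \cap \demandedByR^{*}_{G}(X) = \{\, y \in \sinks{G} \mid \exists x \in X.\ (x,y) \in R \,\}.
\]
This is exactly the set computed for the left-hand side, and the two are therefore equal.

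There is no real obstacle here: the lemma is essentially a definitional unfolding, and the only care required is the bookkeeping of the two relations. The single substantive point is that the IO relation restricts the first coordinate to $\sources{G}$ whereas the output relation does not, and that this discrepancy is exactly annihilated by the standing hypothesis $X \subseteq \sources{G}$; the intersection with $\sinks{G}$ on the right supplies the complementary restriction on the second coordinate (and simultaneously lands the result in the declared codomain $\powerset{\V(G)}$ of $\demandedByR^{*}_{G}$). No property of $R$ beyond treating membership $(x,y) \in R$ as opaque is needed — in particular a vertex that is both a source and a sink is counted on both sides via reflexivity, $(x,x) \in R$. The purpose of the lemma is to license, in the proof of \propref{demandedByAlg-computes-demandedBy}, rewriting the sinks $\sinks{H}$ of the computed slice (which equal $\sinks{G} \cap \V(H) = \sinks{G} \cap \demandedByR^{*}_{G}(X)$) as $\demandedByR_{G}(X)$.
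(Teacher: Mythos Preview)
Your reading of the statement and your unfolding argument are correct; the paper states this lemma without proof, treating it as immediate, and your direct calculation is precisely the verification one would supply. One minor remark: since the output relation already restricts its second coordinate to $\sinks{G}$, the intersection with $\sinks{G}$ on the right-hand side is in fact redundant (your own expansion of $\demandedByR^{*}_{G}(X)$ already contains the conjunct $y \in \sinks{G}$), so the lemma amounts to $\demandedByR_{G}(X) = \demandedByR^{*}_{G}(X)$ for $X \subseteq \sources{G}$; this does not affect your argument.
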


\begin{lemma}
   \label{lem:appendix:proof:conjugate:algorithms:sinks-fixpoint}
   If $X \subseteq \sinks{G}$ then $\demandedByR^*_{G}(X) = X$.
\end{lemma}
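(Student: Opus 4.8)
The plan is to unfold the definition of $\demandedByR^*_G$ into a concrete description and then prove the claimed equality by two inclusions. Let $R$ denote the reachability relation of $G$, i.e.~the reflexive transitive closure of $\E(G)$, so that the output relation is $R \cap (\V(G) \times \sinks{G})$. Expanding the image function $\demandedByR_{(\param)}$ at this relation gives
$\demandedByR^*_G(X) = \set{\beta \in \sinks{G} \mid \exists \alpha \in X.\,(\alpha,\beta) \in R}$,
the condition $\alpha \in \V(G)$ being automatic since $X \subseteq \V(G)$. In words, $\demandedByR^*_G(X)$ is the set of sinks reachable from some vertex of $X$. With this description in hand I would establish $X \subseteq \demandedByR^*_G(X)$ and $\demandedByR^*_G(X) \subseteq X$ separately.

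For $X \subseteq \demandedByR^*_G(X)$, I would take an arbitrary $\alpha \in X$. Since $X \subseteq \sinks{G}$ we have $\alpha \in \sinks{G}$, and reflexivity of $R$ gives $(\alpha,\alpha) \in R$, so $\alpha$ witnesses its own membership in the set displayed above. This direction is immediate and uses only reflexivity of reachability together with the hypothesis $X \subseteq \sinks{G}$.

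For the reverse inclusion $\demandedByR^*_G(X) \subseteq X$, suppose $\beta \in \demandedByR^*_G(X)$, so there is some $\alpha \in X$ with $(\alpha,\beta) \in R$. The key observation is that $\alpha \in X \subseteq \sinks{G}$ means $\alpha$ is a sink, i.e.~$\outE{G}(\alpha) = \emptyset$; consequently the only vertex reachable from $\alpha$ is $\alpha$ itself, forcing $\beta = \alpha \in X$. This sink-reachability fact is the main (and essentially the only) obstacle, though a minor one: I would justify it by induction on the transitive-closure derivation of $(\alpha,\beta) \in R$, where the reflexive base case yields $\beta = \alpha$ directly, and the compositional step is vacuous because any nontrivial path out of $\alpha$ must begin with an edge in the empty set $\outE{G}(\alpha)$. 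Combining the two inclusions yields $\demandedByR^*_G(X) = X$.
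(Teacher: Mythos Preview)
Your proof is correct. The paper states this lemma without proof, treating it as immediate; your argument is exactly the natural elaboration one would supply, unfolding the output relation and using reflexivity of reachability for one inclusion and the fact that sinks reach only themselves for the other.
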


\begin{lemma}
   \label{lem:appendix:proof:conjugate:algorithms:edge-conservation}
   Suppose $\V(H) \subseteq \V(G)$. For any $\alpha \in \sinks{H}$ and any
   $E = \outE{G}(\alpha) \neq \emptyset$:
   $$
      \demandedByR^*_G(\sinks{H}) = \demandedByR^*_{G \edgeMinus E} (\sinks{H \cup E})
   $$
\end{lemma}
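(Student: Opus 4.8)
The plan is to unfold $\demandedByR^*$ into its reachability meaning and reduce the claim to a statement purely about reachable sinks. Recall that, writing $R$ for the reachability relation of a graph, $\demandedByR^*_{G}(W)$ is exactly the set of sinks of $G$ reachable (in $G$) from some vertex of $W$. So the two sides of the equation are: on the left, the sinks of $G$ reachable from $\sinks{H}$ using the edges of $G$; on the right, the sinks of $G \edgeMinus E$ reachable from $\sinks{H \cup E}$ using the edges of $G \edgeMinus E$. I would prove the set equality by double inclusion, after first pinning down how the two ingredients --- the sink set and the start set --- change when the out-edges $E$ of $\alpha$ are transferred from $G$ to $H$.

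Two structural facts do most of the bookkeeping. First, since $E = \outE{G}(\alpha)$ is \emph{all} of $\alpha$'s out-edges, deleting $E$ turns $\alpha$ into a sink and affects no other vertex, so $\sinks{G \edgeMinus E} = \sinks{G} \cup \set{\alpha}$ with $\alpha \notin \sinks{G}$. Second, adding $E$ to $H$ removes $\alpha$ from $\sinks{H}$ (it acquires out-edges) and promotes the targets of $E$ to the frontier: every target $\beta$ of an edge in $E$ with no out-edge in $H$ becomes a sink of $H \cup E$, while the remaining sinks of $H$ are unchanged. The heart of the argument is then a path-rerouting step: any $G$-path from a vertex of $\sinks{H}$ to a sink $s$ of $G$ either avoids every edge of $E$, in which case it already lives in $G \edgeMinus E$ and starts from a vertex still in $\sinks{H \cup E}$; or it crosses $\alpha$ via some edge $(\alpha,\beta) \in E$, in which case (using acyclicity) its suffix from $\beta$ avoids $E$, lives in $G \edgeMinus E$, and starts from the promoted frontier vertex $\beta \in \sinks{H \cup E}$. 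Running this correspondence in reverse gives the opposite inclusion, and the fixpoint lemma for sinks (\lemref{appendix:proof:conjugate:algorithms:sinks-fixpoint}) handles the degenerate reflexive paths.

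The step I expect to be the genuine obstacle is controlling the sinks created or exposed by the edge move, i.e.~ruling out \emph{spurious} sinks. Deleting $E$ makes $\alpha$ a sink of $G \edgeMinus E$, so I must check that $\alpha$ is never reached from $\sinks{H \cup E}$ (otherwise it pollutes the right-hand side); this holds precisely when $\alpha$'s in-edges have already been moved into $H$, i.e.~$\inE{G}(\alpha) = \emptyset$. Symmetrically, a target $\beta$ of $E$ that is a \emph{current} sink of $G$ but not a sink of the original graph is counted on the left yet not recovered on the right unless $\beta$ has not been ``settled'' prematurely. Both issues amount to the requirement that $H$ be an ancestor-respecting slice of the original graph --- that no vertex is processed before its predecessors --- which the $\demandedByV{}$ algorithm maintains when started from $X \subseteq \sources{G}$. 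I would therefore isolate this as an explicit invariant, strengthening the bare hypothesis $\V(H) \subseteq \V(G)$ (e.g.~also assuming $\inE{G}(\alpha) = \emptyset$ and that $\E(H)$ and $\E(G)$ partition the original edge set), and discharge the two inclusions relative to it; without such an invariant a prematurely-settled vertex leaves a transient spurious sink on one side only, so the delicate part is pinning down exactly the invariant that makes the rerouting a bijection on reachable sinks.
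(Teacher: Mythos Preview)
The paper does not actually prove this lemma; it is stated and then immediately used in the proof of \propref{appendix:proof:conjugate:algorithms:correctness-demByV}. Your instinct that something is off is correct --- in fact the lemma as stated is \emph{false}, and your diagnosis of \emph{why} is essentially right: removing $E$ exposes $\alpha$ (and potentially other vertices) as spurious sinks of $G \edgeMinus E$ that may be reachable from $\sinks{H \cup E}$ but are not sinks of $G$. Here is a concrete counterexample that arises during a legitimate run of $\demandedByV{}$. Take $G_0$ with vertices $\{a,b,c,d\}$ and edges $\{(a,b),(a,c),(b,c),(c,d)\}$, start from $X = \{a\}$, and process $a$ first. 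This yields $H = (\{a,b,c\},\{(a,b),(a,c)\})$ and $G = (\{a,b,c,d\},\{(b,c),(c,d)\})$, with $\sinks{H} = \{b,c\}$. Now pick $\alpha = c$ and $E = \{(c,d)\}$. Then $\demandedByR^*_G(\{b,c\}) = \{d\}$, but $\demandedByR^*_{G \edgeMinus E}(\sinks{H \cup E}) = \demandedByR^*_{G \edgeMinus E}(\{b,d\}) = \{c,d\}$, since $c$ is a sink of $G \edgeMinus E$ and is reached from $b$ via the surviving edge $(b,c)$.

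Where your proposal goes wrong is in the fix. You suggest strengthening the hypothesis to require $\inE{G}(\alpha) = \emptyset$ (an ``ancestor-respecting'' invariant), but the algorithm does \emph{not} maintain this: in the example above, $c$ is processed while its in-edge $(b,c)$ is still in $G$. Your other suggested invariant, that $\E(H)$ and $\E(G)$ partition the original edge set, \emph{is} maintained but is already satisfied in the counterexample and does not help. The repair that actually makes the Proposition's proof go through is not to strengthen the hypothesis but to \emph{weaken the conclusion}: what is really preserved by the extend step is
\[
   \sinks{H \cup G} \cap \demandedByR^*_G(\sinks{H}) \;=\; \sinks{H \cup G} \cap \demandedByR^*_{G \edgeMinus E}(\sinks{H \cup E}),
\]
i.e.~the equation holds after intersecting both sides with the fixed set $\sinks{H \cup G} = \sinks{G_0}$. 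This is exactly the quantity tracked in \propref{appendix:proof:conjugate:algorithms:correctness-demByV}, and it kills the spurious sinks (like $c$ above) because they are never sinks of $G_0$. Your path-rerouting argument then goes through cleanly for this weakened statement.
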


\begin{proposition}
   \label{prop:appendix:proof:conjugate:algorithms:correctness-demByV}
   For $\V(H) \subseteq \V(G)$, and any $Y \subseteq \sinks{G}$:
   \[(\exists H'.\;H, G \demandedByV{} H'\text{ with }\sinks{H'} = Y) \iff \sinks{H \cup G} \cap \demandedByR^*_G(\sinks{H}) = Y\]
\end{proposition}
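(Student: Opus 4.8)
The plan is to prove both directions by exploiting a single invariant: the quantity $\sinks{H \cup G} \cap \demandedByR^*_G(\sinks{H})$ is preserved by each \ruleName{extend} step and collapses to $\sinks{H}$ whenever \ruleName{done} applies. First I would record three structural facts about a single \ruleName{extend} step, in which $\alpha \in \sinks{H}$ and $E = \outE{G}(\alpha) \neq \emptyset$ are moved out of $G$ into $H$. \textbf{(i)} As graphs, $(H \cup E) \cup (G \edgeMinus E) = H \cup G$: since $E \subseteq \E(G)$ the edge sets agree ($\E(H) \cup E \cup (\E(G) \setminus E) = \E(H) \cup \E(G)$), and since $\alpha \in \V(H)$ while the targets of $E$ lie in $\V(G)$ no new vertices are introduced; hence $\sinks{H \cup G}$ is unchanged. \textbf{(ii)} The hypothesis $\V(H) \subseteq \V(G)$ is preserved, because $\V(H \cup E) = \V(H) \cup \{\,\beta : (\alpha,\beta) \in E\,\} \subseteq \V(G) = \V(G \edgeMinus E)$. \textbf{(iii)} By \lemref{appendix:proof:conjugate:algorithms:edge-conservation}, $\demandedByR^*_G(\sinks{H}) = \demandedByR^*_{G \edgeMinus E}(\sinks{H \cup E})$. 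Combining (i) and (iii) shows the invariant quantity is identical before and after the step.

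Next I would dispatch the \ruleName{done} configuration, where $\sinks{H} \subseteq \sinks{G}$. Here \lemref{appendix:proof:conjugate:algorithms:sinks-fixpoint} gives $\demandedByR^*_G(\sinks{H}) = \sinks{H}$, and $\sinks{H} \subseteq \sinks{G}$ also forces $\sinks{H} \subseteq \sinks{H \cup G}$ (a vertex with no out-edge in either $H$ or $G$ has none in the union). Hence the invariant quantity equals $\sinks{H \cup G} \cap \sinks{H} = \sinks{H}$, matching the output $\sinks{H'} = \sinks{H}$ of rule \ruleName{done}.

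With these in hand, the forward direction ($\Rightarrow$) is an induction on the derivation of $H, G \demandedByV{} H'$: the \ruleName{done} case is exactly the previous paragraph, and the \ruleName{extend} case applies the induction hypothesis to the premise (legitimate by fact (ii)) and then rewrites back along the invariance from facts (i) and (iii) to conclude $\sinks{H'} = \sinks{H \cup G} \cap \demandedByR^*_G(\sinks{H})$. For the backward direction ($\Leftarrow$), I would instead induct on $|\E(G)|$. The key observation is that the \ruleName{done} and \ruleName{extend} side conditions are complementary: if $\sinks{H} \not\subseteq \sinks{G}$ then some $\alpha \in \sinks{H}$ has $\outE{G}(\alpha) \neq \emptyset$, so \ruleName{extend} applies and strictly decreases the edge count. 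Thus a derivation always exists; if it ends in \ruleName{done} immediately the result follows from the \ruleName{done} computation, and otherwise the induction hypothesis (applied to the strictly smaller $G \edgeMinus E$) produces $H'$ whose $\sinks{H'}$ equals the invariant quantity of the premise, which by invariance equals that of $(H, G)$, namely $Y$.

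The step I expect to require the most care is the structural bookkeeping in fact (i) — keeping vertices and edges straight when asserting $(H \cup E) \cup (G \edgeMinus E) = H \cup G$ and that sinks are thereby preserved — together with verifying that the $\V(H) \subseteq \V(G)$ invariant is genuinely maintained so that \lemref{appendix:proof:conjugate:algorithms:edge-conservation} can be legitimately invoked at each inductive step. The genuinely mathematical content (that moving the out-edges of a sink of $H$ does not change the reachable output sinks) is already isolated in \lemref{appendix:proof:conjugate:algorithms:edge-conservation} and \lemref{appendix:proof:conjugate:algorithms:sinks-fixpoint}, so the proposition itself reduces to assembling these lemmas through the invariant, plus the termination/complementarity argument that powers the backward direction.
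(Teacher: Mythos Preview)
Your proposal is correct and follows essentially the same approach as the paper: both directions use the same inductions (on the derivation for $\Rightarrow$, on $|\E(G)|$ for $\Leftarrow$) and rely on the same two lemmas, \lemref{appendix:proof:conjugate:algorithms:sinks-fixpoint} for the \ruleName{done} case and \lemref{appendix:proof:conjugate:algorithms:edge-conservation} for the \ruleName{extend} step. Your explicit articulation of facts (i)--(iii) as an invariant, and the check that $\V(H) \subseteq \V(G)$ is maintained, make precise some steps that the paper's proof leaves implicit, but the underlying argument is the same.
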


\proofContext{correctness-demByV}
\begin{proof}
   \small
   \begin{flalign}
      &
      \V(H)\subseteq \V(G)\text{ and }Y \subseteq \sinks{G}
      &
      \text{suppose}
      \notag
      \\
      &
      \caseName{$\Rightarrow$ direction:}
      \notag
      \\
      &
      H, G \demandedByV{} H'\text{ with }\sinks{H'} = Y
      &
      \text{suppose; $\exists H'$}
      \notag
      \\
      &
      \text{By induction on derivation of $\demandedByV{}$.}
      &
      \notag
      \\
      &
      \subcaseDerivation{\derivationWidth}{
         \begin{smathpar}
            \inferrule*[
            lab={\ruleName{done}},
            right={$\sinks{H} \subseteq \sinks{G}$}
            ]
            {
               \strut
            }
            {
               H, G \demandedByV{} H\Lowlight{\;=H'}
            }
         \end{smathpar}
      }
      &
      \notag
      \\
      &
      \qedLocal
      \sinks{H \cup G} \cap \demandedByR^*_G(\sinks{H}) = \sinks{H} \cap \demandedByR^*_{G}(\sinks{H}) = \sinks{H}
      &
      \text{\lemref{appendix:proof:conjugate:algorithms:sinks-fixpoint}}
      \notag
      \\
      &
      \subcaseDerivation{\derivationWidth}{
         \begin{smathpar}
            \inferrule*[
               lab={\ruleName{extend}},
               right={$\alpha \in \sinks{H} \wedge E = \outE{G}(\alpha) \neq \emptyset$}
            ]
            {
               H \cup E, G \edgeMinus E
               \demandedByV{}
               H'
            }
            {
               H, G \demandedByV{} H'
            }
         \end{smathpar}
      }
      &
      \notag
      \\
      &
      \demandedByR^*_{G \edgeMinus E}(\sinks{H \cup E}) = \sinks{H'}
      &
      \text{IH}
      \notag
      \\
      &
      \qedLocal
      \demandedByR^*_{G}(\sinks{H}) = \sinks{H'}
      &
      \text{\lemref{appendix:proof:conjugate:algorithms:edge-conservation}}
      \notag
      \\
      &
      \caseName{$\Leftarrow$ direction:}
      \notag
      \\
      &
      \text{By strong induction over $\length{\E(G)}$.}
      &
      \notag
      \\
      &
      \sinks{G \cup H} \cap \demandedByR^*_G(\sinks{H}) = Y
      &
      \text{suppose}
      \locallabel{assumption}
      \\
      &
      \subcase{\sinks{H} \subseteq \sinks{G}}
      \notag
      \\
      &
      \derivation{\derivationWidth}{
         \begin{smathpar}
            \inferrule*[
            lab={\ruleName{done}},
            right={$\sinks{H} \subseteq \sinks{G}$}
            ]
            {
               \strut
            }
            {
               H, G \demandedByV{} H
            }
         \end{smathpar}
      }
      &
      \text{def.~$\demandedByV{}$}
      \notag
      \\
      &
      \qedLocal
      \sinks{G \cup H} \cap \sinks{H} = \sinks{G \cup H} \cap \demandedByR^*_{G}(\sinks{H}) = Y
      &
      \text{\lemref{appendix:proof:conjugate:algorithms:sinks-fixpoint}; (\localref{assumption})}
      \notag
      \\
      &
      \subcase{\alpha \in \sinks{H} \wedge E = \outE{G}(\alpha) \neq \emptyset}
      &
      \text{$\exists \alpha$, $\exists E$}
      \notag
      \\
      &
      \E(G \edgeMinus E) \subset \E(G)
      &
      \notag
      \\
      &
      H \cup E, G \edgeMinus E \demandedByV{} H'
      \text{ with }
      \sinks{H'} = \sinks{(G \edgeMinus E) \cup (H \cup E)} \cap \demandedByR^*_{G \edgeMinus E}(\sinks{H \cup E})
      &
      \text{IH; $\exists H'$}
      \notag
      \\
      &
      \sinks{H'} = \sinks{G\cup H} \cap \demandedByR^*_{G}(\sinks{H}) = Y
      &
      \text{\lemref{appendix:proof:conjugate:algorithms:edge-conservation}}
      \notag
      \\
      &
      \qedLocal
      \derivation{\derivationWidth}{
         \begin{smathpar}
            \inferrule*[
               lab={\ruleName{extend}},
               right={$\alpha \in \sinks{H} \wedge E = \outE{G}(\alpha) \neq \emptyset$}
            ]
            {
               H \cup E, G \edgeMinus E
               \demandedByV{}
               H'
            }
            {
               H, G \demandedByV{} H'
            }
         \end{smathpar}
      }
      &
      \text{def.~$\demandedByV{}$}
      \notag
   \end{flalign}
\end{proof}

\subsection{\propref{sufficesAlg-computes-suffices}}
\setcounter{equation}{0}
\label{sec:apdx:suffices-alg}
\proofContext{sufficesAlg-suffices}
\begin{proof}
   \small
   \begin{flalign}
      &
      \caseName{$\Rightarrow$ direction:}
      &
      \notag
      \\
      &
      X \subseteq \sources{G}
      \text{ and }
      Y \subseteq \sinks{G}
      \text{ with }
      X \sufficesAlg{G} Y
      &
      \text{suppose}
      \notag
      \\
      &
      \derivation{\derivationWidth}{
         \begin{smathpar}
            \inferrule*
            {
               (X,\emptyset), (\emptyset,\emptyset), G \sufficesE{} H
            }
            {
               X \sufficesAlg{G} \V(H) \cap \sinks{G} \Lowlight{\;=Y}
            }
         \end{smathpar}
      }
      &
      \text{inversion; $\exists H$}
      \notag
      \\
      &
      (X, \emptyset), (\emptyset,\emptyset), G
      \text{ form a partial slice of } G
      &
      \text{immediate}
      \notag
      \\
      &
      \qedLocal
      \sufficesR_{G}(X) = \sinks{G} \cap \sufficesR^*_{G}(X)= Y
      &
      \text{\lemref{appendix:proof:conjugate:algorithms-suff:star-intersect-sinks}, \propref{appendix:proof:conjugate:algorithms-suff:correctness-sufficesE}}
      \notag
      \\
      &
      \caseName{$\Leftarrow$ direction: }
      \notag
      \\
      &
      X \subseteq \sources{G}
      \text{ and }
      Y \subseteq \sinks{G}
      \text{ with }
      \sufficesR_{G}(X) = Y
      &
      \text{suppose}
      \notag
      \\
      &
      \sufficesR_{G}(X) = \sinks{G} \cap \sufficesR^*_{G}(X)
      &
      \text{\lemref{appendix:proof:conjugate:algorithms-suff:star-intersect-sinks}}
      \notag
      \\
      &
      (X, \emptyset), (\emptyset,\emptyset), G
      \text{ form a partial slice of } G
      &
      \text{immediate}
      \notag
      \\
      &
      (X, \emptyset), (\emptyset,\emptyset), G \sufficesE{} H
      &
      \text{\propref{appendix:proof:conjugate:algorithms-suff:correctness-sufficesE}; $\exists H$}
      \notag
      \\
      &
      \qedLocal
      \derivation{\derivationWidth}{
         \begin{smathpar}
            \inferrule*
            {
               (X,\emptyset), (\emptyset,\emptyset), G \sufficesE{} H
            }
            {
               X \sufficesAlg{G} \V(H) \cap \sinks{G} \Lowlight{\;=Y}
            }
         \end{smathpar}
      }
      &
      \text{def.~$\sufficesAlg{}$}
      \notag
   \end{flalign}
\end{proof}
\subsection{Correctness of $\sufficesE{}$}

\begin{definition}[Partial Slice]
   \label{def:partial-slice}
   Graphs $H,P,G$ constitute a \textit{partial slice} of the graph $G_0$ iff $H,P,G$ form a partition of $G_0$ and:
   \begin{enumerate}[label=(\roman*)]
      \item $\V(H) \subseteq \V(P) \subseteq \V(G) = \V(G_0)$
      \item $\V(H) \supseteq \compl{\sinks{P}}$
      \item $\V(H) = \sources{P}$
      \item $\V(P) = \sources{P} \cap \sinks{P}$
      \item $\forall (\alpha,\beta) \in (\sources{G_0} \setminus \sources{H}) \times \V(H).\;(\alpha,\beta) \notin R_{G_0}$
      \item $\sources{G}\setminus(\sinks{P}\setminus\sources{P}) = \V(H) \uplus (\sources{G_0}\setminus\sources{H})$
   \end{enumerate}
\end{definition}
\begin{corollary}
   $H,P,G$ a partial slice $\iff$ $H,P,G$ edge disjoint and $\forall (\alpha,\beta) \in \E(G) \cup \E(P).\;\beta \notin \V(H)$
\end{corollary}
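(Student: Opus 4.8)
The plan is to prove the biconditional by establishing each implication separately, treating the covering $\E(H) \disjunion \E(P) \disjunion \E(G) = \E(G_0)$ contributed by the word ``partition'' as a standing assumption on a fixed $G_0$ (so that ``edge-disjoint'' and ``partition'' differ only in this covering, which the algorithm maintains throughout, together with the convention $\V(G) = \V(G_0)$ coming from the fact that $\edgeMinus$ deletes only edges). The mathematical content then reduces to showing that, under this decomposition, the six clauses of \defref{partial-slice} are jointly equivalent to the single edge condition. The first move I would make is to restate that condition: since the three edge sets partition $\E(G_0)$, the assertion that no edge of $\E(G) \cup \E(P)$ has its head in $\V(H)$ is the same as the assertion that every $G_0$-edge whose head lies in $\V(H)$ already belongs to $\E(H)$. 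Equivalently, it says that $\V(H)$ is \emph{closed under immediate predecessors} in $G_0$; I would use this predecessor-closed reading as the pivot throughout.

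The forward implication (partial slice $\Rightarrow$ edge-disjoint $\wedge$ edge condition) I expect to be clean and free of reachability arguments. Edge-disjointness is immediate from ``partition''. For the edge condition I would split into its two halves. The $\E(P)$ half follows from clause (iii), $\V(H) = \sources{P}$: a $P$-edge with head in $\V(H)$ would give that head a $P$-in-edge, contradicting its being a source of $P$. The $\E(G)$ half follows from clause (vi): its left-hand side $\sources{G}\setminus(\sinks{P}\setminus\sources{P})$ is contained in $\sources{G}$, while its right-hand side contains $\V(H)$, so $\V(H) \subseteq \sources{G}$, i.e.\ no $G$-edge targets a vertex of $H$. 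Thus the edge condition drops out of (iii) and (vi) alone.

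The reverse implication (edge-disjoint $\wedge$ edge condition $\Rightarrow$ all of (i)--(vi)) is where the work lies, and clause (v) is the step I expect to be the main obstacle: it is the only clause phrased via the reachability relation $R_{G_0}$ rather than single edges, so it cannot be read off from the edge condition by incidence bookkeeping. Here I would exploit that $G_0$ is acyclic. From the predecessor-closed reading, I would argue by induction along a path (equivalently, on a topological ordering of $G_0$) that predecessor-closedness of $\V(H)$ lifts from single edges to whole paths, so every \emph{ancestor} of a vertex of $H$ is again in $\V(H)$. In particular, any $G_0$-source that reaches $\V(H)$ lies in $\V(H)$, and, having no $G_0$-in-edges, has no $H$-in-edges either, hence is a source of $H$ --- which is exactly the contrapositive of (v). The remaining clauses (i), (ii), (iii), (iv) and (vi) are purely incidence statements about which vertices are sources/sinks of $H$, $P$, $G$; I would discharge them by rewriting each $\sources{\cdot}$ and $\sinks{\cdot}$ set in terms of the partitioned edge sets and checking the set identities directly, using the edge condition together with clause (i)'s nesting $\V(H) \subseteq \V(P) \subseteq \V(G)$ to obtain the reverse inclusions (e.g.\ to upgrade $\V(H) \subseteq \sources{P}$ to the equality in (iii)).

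Assembling these, the corollary follows once the single-edge $\leftrightarrow$ reachability bridge for (v) is in place. The two genuine risks I would watch for are bookkeeping subtleties and an apparent typo. First, the seed vertices of $H$ (the isolated input addresses $X$ that carry no $H$-edge) must be handled explicitly, so that $\V(H)$ is never conflated with the endpoints of $\E(H)$; fortunately these seeds lie in $\sources{G_0}$ and so trivially receive no edges, which keeps them consistent with the edge condition. Second, clause (iv) as written, $\V(P) = \sources{P} \cap \sinks{P}$, would force $P$ to be edgeless; I read it as $\V(P) = \sources{P} \cup \sinks{P}$ (i.e.\ $P$ has no internal vertices, every $P$-vertex being a source or a sink of the current frontier), and I would confirm this intended reading against the \ruleName{pending}/\ruleName{extend} rules before relying on it.
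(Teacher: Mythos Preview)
The paper states this corollary without proof, so there is no reference argument to compare your plan against; I can only assess it on its own terms. Your forward direction is fine and matches how the paper actually uses the corollary (see cases (1.V) and (2.V) in the proof of \lemref{appendix:proof:conjugate:algorithms-suff:partial-slice-preserved}): clause (iii) gives $\V(H)=\sources{P}$, so no $P$-edge can target $\V(H)$; and clause (vi) forces $\V(H)\subseteq\sources{G}$, so no $G$-edge can either. Your treatment of (v) via predecessor-closure plus acyclicity is also the right idea, and your flag on (iv) is well-founded --- the paper's own proofs use $\sources{P}\cup\sinks{P}$, confirming the $\cap$ is a typo.

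The reverse direction, however, cannot go through as you have sketched it. You say you will ``use the edge condition together with clause (i)'s nesting $\V(H)\subseteq\V(P)\subseteq\V(G)$'' to recover the remaining clauses, but (i) is one of the things you must \emph{derive}, not assume --- and it is not derivable from edge-disjointness plus the edge condition. The algorithm's own initial configuration $H=(X,\emptyset)$, $P=(\emptyset,\emptyset)$, $G=G_0$ with $\emptyset\neq X\subseteq\sources{G_0}$ is a counterexample: it is edge-disjoint and satisfies the edge condition (no $G_0$-edge points into a source), yet $\V(H)=X\not\subseteq\emptyset=\V(P)$, so (i) fails, and $\sources{P}=\emptyset\neq X$, so (iii) fails too. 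Note that the paper nonetheless asserts this very configuration ``form[s] a partial slice'' with justification ``immediate'', so the definition and corollary carry unstated conventions about vertex sets that are not reconcilable with a literal reading. Your instinct to absorb the edge-covering and clause (i) into the standing hypotheses is the right repair, but then you should state and prove the corrected biconditional you actually have in mind, rather than the one printed.
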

\begin{definition}[Input Relation]
   For a graph $G$ with reachability relation $R$, define it's \text{input relation}
   $R_{I} \eqdef (\sources{G} \times \V(G)) \cap R$
\end{definition}

\begin{definition}[$\sufficesR^*$ for a dependence graph]
   For a graph $G$, with an input relation $R_{I}$, and a set of input vertices $X \subseteq \sources{G}$:
   \[\sufficesR^*_{G}(X) \eqdef \sufficesR_{R_{I}}(X) = \set{y \in \V(G) \mid \nexists x \in \sources{G} \setminus X.\; (x,y) \in R_{I}}\]
\end{definition}

\begin{lemma}
   \label{lem:appendix:proof:conjugate:algorithms-suff:star-intersect-sinks}
   $\sufficesR_{G}(X) = \sinks{G} \cap \sufficesR^*_{G}(X)$
\end{lemma}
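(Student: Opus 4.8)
The plan is to prove the identity by unfolding both sides to explicit set-builder form and observing that the only difference between them is which relation appears in the membership condition: the \emph{IO relation} $R_{IO} \eqdef R \cap (\sources{G} \times \sinks{G})$ underlying $\sufficesR_{G}$, or the \emph{input relation} $R_I \eqdef (\sources{G} \times \V(G)) \cap R$ underlying $\sufficesR^*_{G}$. Since $\sufficesR_{G} = \sufficesR_{R_{IO}}$ is the dual image function for the IO relation, with domain $\sources{G}$ and codomain $\sinks{G}$, I would first unfold the left-hand side to
$$\sufficesR_{G}(X) = \set{y \in \sinks{G} \mid \nexists x \in \sources{G} \setminus X.\;(x,y) \in R_{IO}}.$$
For the right-hand side, $\sufficesR^*_{G}(X) = \sufficesR_{R_I}(X)$ has domain $\sources{G}$ and codomain $\V(G)$, so intersecting with $\sinks{G}$ gives
$$\sinks{G} \cap \sufficesR^*_{G}(X) = \set{y \in \sinks{G} \mid \nexists x \in \sources{G} \setminus X.\;(x,y) \in R_I}.$$
Crucially, both quantifiers range over the same set $\sources{G} \setminus X$, because $R_{IO}$ and $R_I$ have the same domain $\sources{G}$; so the two expressions differ only in whether $(x,y)$ is tested against $R_{IO}$ or against $R_I$.

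The one piece of real content is then to show that, for $x \in \sources{G}$ and $y \in \sinks{G}$, we have $(x,y) \in R_{IO} \iff (x,y) \in R_I$. This is immediate from the two definitions: both relations are $R$ intersected with a product of the sources with a larger vertex set, and since $x \in \sources{G}$ and $y \in \sinks{G} \subseteq \V(G)$, the side-conditions defining each relation ($y \in \sinks{G}$ for $R_{IO}$, $y \in \V(G)$ for $R_I$) are automatically satisfied, so in both cases membership collapses to $(x,y) \in R$. Substituting this equivalence into the two set-builder forms above shows their defining predicates agree for every $y \in \sinks{G}$, whence the sets are equal.

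I do not anticipate any genuine obstacle here, as the statement is essentially bookkeeping about the domains and codomains of the two dual image functions. The only points requiring care are (i) confirming that the quantification set $\sources{G} \setminus X$ is literally the same on both sides, and (ii) noting that the restriction to $y \in \sinks{G}$---imposed on the right by the explicit intersection and on the left by the codomain of $\sufficesR_{G}$---is exactly what licenses replacing $R_I$ by $R_{IO}$. This lemma plays the same role for $\sufficesR_{G}$ that \lemref{appendix:proof:conjugate:algorithms:demanded-cap-sinks} plays for $\demandedByR_{G}$, letting the correctness argument for $\sufficesE{}$ reason over the full vertex set via $\sufficesR^*_{G}$ and only project onto sinks at the very end.
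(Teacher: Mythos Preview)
Your proposal is correct. The paper does not actually spell out a proof of this lemma, treating it (like the analogous \lemref{appendix:proof:conjugate:algorithms:demanded-cap-sinks}) as immediate from the definitions; your argument by unfolding both sides and observing that for $x \in \sources{G}$ and $y \in \sinks{G}$ membership in $R_{IO}$ and $R_I$ both reduce to membership in the reachability relation $R$ is exactly the routine verification the paper elides.
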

\begin{lemma}
   \label{lem:appendix:proof:conjugate:algorithms-suff:suffices-sources}
   If $H,P,G$ a partial slice of $G_0$ with $\sources{G} \cap \V(P) \subseteq \sources{P}$ and $\V(H) \subseteq \sinks{G}$:
   \[ \sufficesR^*_{G_0}(\sources{H}) = \V(H) \]
\end{lemma}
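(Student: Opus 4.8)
The plan is to prove the two inclusions of $\sufficesR^*_{G_0}(\sources{H}) = \V(H)$ separately. First I would unfold the definitions: writing $D \eqdef \sufficesR^*_{G_0}(\sources{H})$ and recalling that the input relation $R_I = (\sources{G_0}\times\V(G_0)) \cap R_{G_0}$ agrees with reachability $R_{G_0}$ on any pair whose first component is already a $G_0$-source, membership $y \in D$ is equivalent to the statement that no source of $G_0$ lying outside $\sources{H}$ reaches $y$ — equivalently, every $G_0$-source reaching $y$ lies in $\sources{H}$. With this reformulation the inclusion $\V(H) \subseteq D$ is immediate from partial-slice condition (v): for any $\beta \in \V(H)$ and any $\alpha \in \sources{G_0} \setminus \sources{H}$ we have $(\alpha,\beta) \notin R_{G_0}$, so no forbidden source reaches $\beta$ and hence $\beta \in D$. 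This direction uses neither of the two extra hypotheses.

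The substantive work is the reverse inclusion $D \subseteq \V(H)$, which I would prove by a minimal-counterexample argument exploiting that $G_0$ is a finite DAG. Suppose $D \setminus \V(H) \neq \emptyset$ and choose $y$ minimal in $D \setminus \V(H)$ with respect to $G_0$-reachability. I would first note that $D$ is ancestor-closed: any $G_0$-ancestor $\alpha$ of a $D$-vertex $y$ is again in $D$, since every source reaching $\alpha$ also reaches $y$ and hence lies in $\sources{H}$. If $y$ had no predecessors it would be a $G_0$-source, and then $y \in D$ would force $y \in \sources{H} \subseteq \V(H)$, a contradiction; so $y$ has at least one predecessor. For each predecessor $\alpha$, i.e. each edge $(\alpha,y) \in \E(G_0)$, ancestor-closure gives $\alpha \in D$ and minimality of $y$ then forces $\alpha \in \V(H)$.

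Now the two termination hypotheses enter. Since $y \notin \V(H)$ and $\E(H)$-edges have both endpoints in $\V(H)$, the edge $(\alpha,y)$ cannot lie in $\E(H)$, so by the edge-partition it lies in $\E(P) \cup \E(G)$. If $(\alpha,y) \in \E(G)$ then $\alpha$ has an outgoing $G$-edge, contradicting $\alpha \in \V(H) \subseteq \sinks{G}$; hence every incoming edge of $y$ lies in $\E(P)$. But then $y$ has no incoming $G$-edge (so $y \in \sources{G}$, using $\V(G) = \V(G_0)$) while carrying an incoming $P$-edge (so $y \in \V(P)$ and $y \notin \sources{P}$), and the hypothesis $\sources{G} \cap \V(P) \subseteq \sources{P}$ yields the final contradiction. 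Thus $D \setminus \V(H) = \emptyset$ and the reverse inclusion holds.

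I expect the main obstacle to be organising the reverse inclusion so that both termination hypotheses are invoked at exactly the right point: choosing a reachability-minimal counterexample is precisely what guarantees that all predecessors of $y$ already sit in $\V(H)$, and only with that in hand can the case split ``incoming edge in $\E(G)$ versus in $\E(P)$'' be closed — the first case by $\V(H) \subseteq \sinks{G}$, the second by $\sources{G}\cap\V(P) \subseteq \sources{P}$. A secondary point to verify carefully is the bookkeeping that $\E(H)$-edges have both endpoints in $\V(H)$ (so that $y \notin \V(H)$ genuinely excludes incoming $H$-edges), which follows from the edge-partition corollary for partial slices.
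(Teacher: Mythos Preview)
Your proof is correct, but the route for the inclusion $\sufficesR^*_{G_0}(\sources{H}) \subseteq \V(H)$ differs from the paper's. The paper argues directly: given $y$ in the suffices set, it picks a $G$-source $x$ reaching $y$ in $G$ (which exists since $G$ is a finite DAG with $\V(G) = \V(G_0)$), then invokes partial-slice condition~(vi) to place $x$ in $\V(H) \uplus (\sources{G_0}\setminus\sources{H}) \uplus (\sinks{P}\setminus\sources{P})$. The middle option is excluded by the defining property of $y$, the last by the hypothesis $\sources{G}\cap\V(P)\subseteq\sources{P}$, leaving $x \in \V(H)$; then $\V(H)\subseteq\sinks{G}$ forces $x = y$.

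Your minimal-counterexample argument instead exploits ancestor-closure of the suffices set and the edge partition of $G_0$, never touching condition~(vi). This is arguably more self-contained: condition~(vi) is the most opaque of the partial-slice invariants, and your route shows it is not needed here. The trade-off is that the paper's argument is shorter and non-inductive once~(vi) is available. Both approaches invoke the two termination hypotheses at structurally analogous points: the paper applies $\V(H)\subseteq\sinks{G}$ to the chosen $G$-source, you apply it to each predecessor of the minimal counterexample; the paper applies $\sources{G}\cap\V(P)\subseteq\sources{P}$ to rule out $x \in \sinks{P}\setminus\sources{P}$, you apply it to $y$ itself after establishing that all its incoming edges lie in $\E(P)$. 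One minor remark: your final ``secondary point'' about $\E(H)$-edges having both endpoints in $\V(H)$ needs no special corollary---it is immediate from $H$ being a graph on vertex set $\V(H)$.
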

\begin{proof}
   \small
   \begin{flalign}
      &
      H,P,G \text{ a partial slice}
      &
      \text{suppose}
      \notag
      \\
      &
      \sources{G} \cap \V(P) \subseteq \sources{P}
      &
      \text{suppose}
      \locallabel{sources-p-subset-sources-g}
      \\
      &
      \V(H) \subseteq \sinks{G}
      &
      \text{suppose}
      \locallabel{sinks-h-subset-sinks-g}
      \\
      &
      \caseName{$\subseteq$ direction}
      \notag
      \\
      &
      y \in \sufficesR^*_{G_0}(\sources{H})
      &
      \text{suppose}
      \notag
      \\
      &
      \forall x \in \sources{G_0} \setminus \sources{H}.\; (x,y) \notin R_{G_0}
      &
      \text{def. $\sufficesR^*$}
      \notag
      \\
      &
      y \in \V(G)
      &
      \notag
      \\
      &
      x \in \sources{G}.\;(x,y) \in R_{G}
      &
      \text{$\exists x$, $G$ acyclic}
      \notag
      \\
      &
      x \in \V(H) \cup \sources{G_0} \setminus \sources{H} \cup \sinks{P} \setminus \sources{P}
      &
      \text{$H,P,G$ a partial slice}
      \notag
      \\
      &
      x \notin \sources{G_0} \setminus \sources{H}
      &
      \text{immediate}
      \notag
      \\
      &
      x \notin \sinks{P} \setminus \sources{P}
      &
      \text{contradicts \localref{sources-p-subset-sources-g}}
      \notag
      \\
      &
      \qedLocal
      x \in \V(H)
      &
      \text{$\sources{G}$ invariant}
      \notag
      \\
      &
      x \neq y \implies x \notin \sinks{H} \implies \bot
      &
      \text{contradicts \localref{sinks-h-subset-sinks-g}}
      \notag
      \\
      &
      \qedLocal
      x = y
      &
      \notag
      \\
      &
      \qedLocal y \in \V(H)
      &
      \notag
      \\
      &
      \caseName{$\supseteq$ direction}
      \notag
      \\
      &
      y \in \V(H)
      &
      \text{suppose}
      \notag
      \\
      &
      y \in \V(G_0)
      &
      \notag
      \\
      &
      \sources{G_0} \setminus \sources{H} \subseteq (\V(G \cup P) \setminus \V(H))
      &
      \notag
      \\
      &
      \forall \alpha \in \sources{G_0} \setminus \sources{H}.\; (\alpha,y) \notin R_{G_0}
      &
      \text{def. partial slice}
      \notag
      \\
      &
      \qedLocal
      y \in \sufficesR^*_{G_0}(\sources{H})
      &
      \notag
   \end{flalign}
\end{proof}
\begin{lemma}
   \label{lem:appendix:proof:conjugate:algorithms-suff:partial-slice-preserved}
   \item
   \begin{enumerate}
      \item If $\alpha \in \V(H)$, $E = \outE{G}(\alpha) \neq \emptyset$ then: \newline $H,P,G$ a partial slice of $G_0$ $ \iff H,P \cup E,G \edgeMinus E$ a partial slice of $G_0$
      \item If $\alpha \in \sources{G}$, $E = \inE{P}(\alpha) \neq \emptyset$ then: $H,P,G$ a partial slice $G_0$ $\iff H \cup E, P \edgeMinus E, G$ a partial slice of $G_0$
   \end{enumerate}
\end{lemma}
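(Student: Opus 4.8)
The plan is to work entirely through the characterisation provided by the corollary following \defref{partial-slice}: a triple $(H,P,G)$ is a partial slice of $G_0$ exactly when the three edge sets partition $\E(G_0)$ (with $\V(H) \subseteq \V(P) \subseteq \V(G) = \V(G_0)$) and no edge of $G$ or $P$ has its head in $\V(H)$. I will call the latter the \emph{head condition}. Both rules merely relocate a fixed edge set $E$ from one bucket to another while the underlying vertex set $\V(G_0)$ is left untouched (recall that $\edgeMinus$ deletes edges but retains their endpoints), so the partition requirement is maintained in each direction automatically; the whole content of the lemma therefore reduces to checking that the head condition survives the relocation. Since each move is a reversible relocation of $E$, I will argue each biconditional by verifying the head condition for the post-move configuration and noting the reverse direction is symmetric.

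For part 1 (\ruleName{pending}) we move $E = \outE{G}(\alpha)$, with $\alpha \in \V(H)$, from $G$ into $P$. Because $E \subseteq \E(G)$, the combined edge set $\E(G) \cup \E(P)$ is literally unchanged ($\E(G \edgeMinus E) \cup \E(P \cup E) = \E(G) \cup \E(P)$), and $\V(H)$ is unchanged too. Hence the head condition for $(H, P \cup E, G \edgeMinus E)$ is word-for-word that of $(H,P,G)$, and the biconditional is immediate. The only point worth recording is that the tail of every relocated edge is $\alpha \in \V(H)$, so the ``frontier'' shape of $P$ is preserved --- this is precisely the invariant that makes part 2 go through.

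Part 2 (\ruleName{extend}) is the substantive case and the expected obstacle, since here moving $E = \inE{P}(\alpha)$ (with $\alpha \in \sources{G}$) from $P$ into $H$ enlarges $\V(H)$, so the head condition genuinely changes. The key structural fact to extract from \defref{partial-slice} is that every tail of a $P$-edge already lies in $\V(H)$ --- a consequence of the invariant $\sources{P} = \V(H)$ together with the fact that $P$ is a height-one ``frontier'' whose edges run from $\V(H)$ to freshly discovered vertices. Consequently the endpoints of $E$ contribute only one new vertex to $H$, namely the common head $\alpha$, so $\V(H \cup E) = \V(H) \cup \set{\alpha}$. To verify the head condition for $(H \cup E, P \edgeMinus E, G)$ it then suffices to: (a) reuse the old head condition for the vertices of $\V(H)$, since $\E(G) \cup \E(P \edgeMinus E) \subseteq \E(G) \cup \E(P)$; and (b) show no edge of $\E(G) \cup \E(P \edgeMinus E)$ points into $\alpha$. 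For (b), $\alpha \in \sources{G}$ rules out any $G$-edge into $\alpha$, while $E = \inE{P}(\alpha)$ being \emph{all} in-edges of $\alpha$ in $P$ means $P \edgeMinus E$ has none either.

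The converse direction is symmetric: assuming $(H \cup E, P \edgeMinus E, G)$ is a partial slice, I recover the head condition for $(H,P,G)$ by checking that the reinstated edges $E$ do not violate it, using that $\alpha \notin \V(H)$ in the pre-move configuration (forced because $\alpha$ carries an incoming $P$-edge, which the head condition forbids from landing in $\V(H)$). The main work, then, is isolating the ``tails-in-$\V(H)$'' invariant and confirming both rules preserve it; everything else is routine edge-set bookkeeping against the corollary.
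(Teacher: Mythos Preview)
Your approach differs substantially from the paper's. Where the paper works through conditions (ii)--(vi) of \defref{partial-slice} one at a time, in both directions and for both parts --- a case analysis running to several pages --- you reduce the problem to the Corollary's two-part characterisation (edge-partition plus the head condition). This buys real economy: for Part~1, since relocating $E$ between $G$ and $P$ leaves both $\E(G) \cup \E(P)$ and $\V(H)$ unchanged, the head condition is word-for-word invariant and you are done immediately. The paper instead re-establishes each of (ii)--(vi) for the post-move configuration by hand. For the forward direction of Part~2 your argument is also sound; the step $\V(H \cup E) = \V(H) \cup \set{\alpha}$ does require dipping back into the Definition (conditions (iii) and (iv) together give that every tail of a $P$-edge lies in $\sources{P} = \V(H)$), but since the Corollary is stated as a biconditional this is legitimate.

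There is one genuine gap. In the converse of Part~2, your justification of $\alpha \notin \V(H)$ is circular: you invoke the head condition for $(H,P,G)$ (``$\alpha$ carries an incoming $P$-edge, which the head condition forbids from landing in $\V(H)$''), but that is precisely the condition you are trying to establish. The hypothesis at hand is only that $(H \cup E, P \edgeMinus E, G)$ is a partial slice, and in \emph{that} configuration the edges of $E$ live in $H \cup E$, not in $P \edgeMinus E$ or $G$, so its head condition says nothing about~$\alpha$. In fairness, the paper's own proof is equally opaque at this step --- it asserts ``$\alpha \notin \V(H)$, def.~$E$'' without further argument --- so the difficulty is not peculiar to your route. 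You should either extract $\alpha \notin \V(H)$ from the post-move invariants directly, or observe that only the forward directions of the lemma are actually invoked downstream in \propref{appendix:proof:conjugate:algorithms-suff:correctness-sufficesE}.
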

\proofContext{partial-slice-lem}
\begin{proof}
   \small
   \begin{flalign}
      &
      \caseName{(1.II), $\Rightarrow$}
      \notag
      \\
      &
      E = \set{\alpha} \times B
      &
      \text{suppose}
      \notag
      \\
      &
      \compl{\sinks{P}} \subseteq \V(H)
      &
      \text{suppose}
      \notag
      \\
      &
      \compl{\sinks{P \cup E}} = \compl{\sinks{P}} \cup \set{\alpha}
      &
      \text{def. $E$}
      \notag
      \\
      &
      \qedLocal
      \compl{\sinks{P \cup E}} \subseteq \V(H)
      &
      \text{$\alpha \in \V(H)$}
      \notag
      \\
      &
      \caseName{(1.II), $\Leftarrow$}
      \notag
      \\
      &
      \compl{\sinks{P \cup E}} \subseteq \V(H)
      &
      \text{suppose}
      \notag
      \\
      &
      \compl{\sinks{P}} \subseteq \compl{\sinks{P \cup E}}
      &
      \notag
      \\
      &
      \qedLocal
      \compl{\sinks{P}} \subseteq \V(H)
      &
      \notag
      \\
      &
      \caseName{(1.III),$\Rightarrow$}
      \notag
      \\
      &
      E = \set{\alpha} \times {B}
      &
      \text{suppose}
      \notag
      \\
      &
      \V(H) = \sources{P}
      &
      \text{suppose}
      \notag
      \\
      &
      \text{Need to show: }
      \V(H) = \sources{P \cup E}
      &
      \notag
      \\
      &
      \sources{P \cup E} = \sources{P}\setminus(\sources{P}\cap B)
      &
      \notag
      \\
      &
      \text{Need to show: }
      \sources{P}\cap B = \emptyset
      &
      \notag
      \\
      &
      (\alpha, \beta) \in \E(G)
      &
      \text{$\forall (\alpha,\beta) \in E$}
      \notag
      \\
      &
      \beta \in \sources{P} = \V(H) \implies \bot
      &
      \text{Contra partial slice(V)}
      \notag
      \\
      &
      \qedLocal
      \beta \notin \sources{G}
      &
      \notag
      \\
      &
      \qedLocal
      \sources{P} \cap \beta = \emptyset
      &
      \notag
      \\
      &
      \qedLocal
      \V(H) = \sources{P \cup E}
      &
      \notag
      \\
      &
      \caseName{(1.III), $\Leftarrow$}
      \notag
      \\
      &
      \V(H) = \sources{P \cup E}
      &
      \text{suppose}
      \notag
      \\
      &
      E = \set{\alpha} \times B
      &
      \text{suppose}
      \notag
      \\
      &
      \text{Want to show: }
      \V(H) = \sources{P}
      &
      \notag
      \\
      &
      \sources{P} = \sources{P \cup E} \setminus B
      &
      \notag
      \\
      &
      \text{suffices to show: }
      \sources{P \cup E} \cap B = \emptyset
      &
      \notag
      \\
      &
      (\alpha, \beta) \in E.\; (\alpha,\beta) \in \E(G)
      &
      \notag
      \\
      &
      \qedLocal
      \beta \notin \V(H) = \sources{P \cup E}
      &
      \text{partial slice (V)}
      \notag
      \\
      &
      \qedLocal
      \sources{P \cup E} \cap B = \emptyset
      &
      \notag
      \\
      &
      \qedLocal
      \sources{P} = \sources{P \cup E}
      &
      \notag
      \\
      &
      \qedLocal
      \V(H) = \sources{P}
      &
      \notag
      \\
      &
      \caseName{(1.IV), $\Rightarrow$}
      \notag
      \\
      &
      \V(P) = \sources{P} \cup \sinks{P}
      &
      \text{suppose}
      \notag
      \\
      &
      E = B \times \set{\alpha}
      &
      \text{suppose}
      \notag
      \\
      &
      \V(P \cup E) = \V(P) \cup B
      &
      \notag
      \\
      &
      \text{suffices to show that:}
      \sources{P}\cup\sinks{P}\cup B = \sources{P \cup E} \cup \sinks{P \cup E}
      &
      \notag
      \\
      &
      \sources{P} = \sources{P \cup E}
      &
      \text{proved in proof of (iii)}
      \notag
      \\
      &
      \alpha \in \sources{P} = \sources{P \cup E}
      &
      \text{$\alpha \in \V(H)$}
      \notag
      \\
      &
      \sinks{P \cup E} = (\sinks{P} \cup B)\setminus {\alpha}
      &
      \text{def. $E$}
      \notag
      \\
      &
      \qedLocal
      \sources{P} \cup \sinks{P} \cup B = \sources{P \cup E} \cup ((\sinks{P} \cup B)\setminus \set{\alpha})
      &
      \notag
      \\
      &
      \qedLocal
      \sources{P} \cup \sinks{P} \cup B = \sources{P \cup E} \cup \sinks{P \cup E}
      &
      \notag
      \\
      &
      \caseName{(1.IV), $\Leftarrow$}
      \notag
      \\
      &
      \V(P \cup E) = \sources{P \cup E} \cup \sinks{P \cup E}
      &
      \text{suppose}
      \notag
      \\
      &
      \V(P \cup E) = \V(P) \cup B
      &
      \notag
      \\
      &
      \sources{P \cup E} \cup \sinks{P \cup E} = \V(P) \cup B
      &
      \notag
      \\
      &
      \sources{P \cup E} = \sources{P}
      &
      \text{def. $E$}
      \notag
      \\
      &
      \sinks{P \cup E} = (\sinks{P} \cup B) \setminus \set{\alpha}
      &
      \text{def. $E$}
      \notag
      \\
      &
      B \cap \sources{P} = \emptyset
      &
      \notag
      \\
      &
      \sources{P} \setminus B = \sources{P}
      &
      \notag
      \\
      &
      (\sources{P \cup E} \cup \sinks{P \cup E})\setminus B = \V(P)
      &
      \notag
      \\
      &
      (\sources{P \cup E}\setminus B) \cup ((\sinks{P}\cup B \setminus \set{\alpha}) \setminus B) = \V(P)
      &
      \notag
      \\
      &
      \qedLocal
      \sources{P} \cup (\sinks{P} \setminus \set{\alpha}) = \V(P)
      &
      \notag
      \\
      &
      \qedLocal
      \sources{P} \cup \sinks{P} = \V(P)
      &
      \text{$\alpha \in \sources{P}$}
      \notag
      \\
      &
      \caseName{(1.V), $\Rightarrow$}
      \notag
      \\
      &
      H,P,G
      \text{ a partial slice}
      &
      \text{suppose}
      \notag
      \\
      &
      \forall (\alpha, \beta) \in \E(G) \cup \E(P).\; \beta \notin \V(H)
      &
      \text{def. partial slice}
      \notag
      \\
      &
      \E(G) \cup \E(P) = \E(G \edgeMinus E) \cup \E(P \cup E)
      &
      \notag
      \\
      &
      \qedLocal
      \forall (\alpha, \beta) \in \E(G \edgeMinus E) \cup \E(P \cup E).\; \beta \notin \V(H)
      &
      \notag
      \\
      &
      \caseName{(1.V), $\Leftarrow$}
      \notag
      \\
      &
      H,P \cup E, G \edgeMinus E \text{ a partial slice}
      &
      \text{suppose}
      \notag
      \\
      &
      \forall (\alpha, \beta) \in \E(G \edgeMinus E) \cup \E(P \cup E).\; \beta \notin \V(H)
      &
      \text{def. partial slice}
      \notag
      \\
      &
      \E(G) \cup \E(P) = \E(G \edgeMinus E) \cup \E(P \cup E)
      &
      \notag
      \\
      &
      \qedLocal
      \forall (\alpha, \beta) \in \E(G) \cup \E(P).\; \beta \notin \V(H)
      &
      \notag
      \\
      &
      \caseName{(1.VI)}
      \notag
      \\
      &
      \text{Suffices to show: }
      \sources{G}\setminus(\sinks{P}\setminus\sources{P}) = \sources{G \edgeMinus E}\setminus(\sinks{P \cup E}\setminus \sources{P \cup E})
      &
      \notag
      \\
      &
      \sources{G}\setminus(\sinks{P}\setminus\sources{P}) = \V(H) \uplus (\sources{G_0}\setminus\sources{H})
      &
      \text{suppose}
      \notag
      \\
      &
      E = \set{\alpha} \times B
      &
      \notag
      \\
      &
      \subcase{x \in \sources{G}\setminus(\sinks{P}\setminus\sources{P})}
      \notag
      \\
      &
      x \in \sources{G} \subseteq \sources{G\edgeMinus E}
      &
      \notag
      \\
      &
      x \notin \sinks{P}\setminus\sources{P}
      &
      \\
      &
      x \notin B
      &
      \text{$x \in \sources{G}$}
      \notag
      \\
      &
      \qedLocal
      x \notin \sinks{P\cup E}\setminus\sources{P\cup E}
      &
      \notag
      \\
      &
      x \in \sources{G \edgeMinus E}\setminus(\sinks{P \cup E}\setminus \sources{P \cup E})
      &
      \notag
      \\
      &
      \qedLocal
      \sources{G}\setminus{\sinks{P}\setminus\sources{P}} \subseteq \sources{G \edgeMinus E}\setminus(\sinks{P \cup E}\setminus \sources{P \cup E})
      &
      \locallabel{unmodified-subset-modified}
      \\
      &
      \subcase{x \notin \sources{G}\setminus(\sinks{P}\setminus\sources{P})}
      \notag
      \\
      &
      \subcase{x \in \sources{G}}
      \notag
      \\
      &
      x \in \sinks{P}\setminus\sources{P}
      &
      \\
      &
      x \in \sinks{P \cup E}\setminus\sources{P \cup E}
      &
      \notag
      \\
      &
      \qedLocal
      x \notin \sources{G\edgeMinus E}\setminus(\sinks{P \cup E}\setminus\sources{P \cup E})
      &
      \notag
      \\
      &
      \subcase{x \notin \sources{G} \wedge x \in \sources{G \edgeMinus E}}
      \notag
      \\
      &
      x \notin \sources{G}\setminus(\sinks{P}\setminus\sources{P})
      &
      \notag
      \\
      &
      (\alpha,x) \in E
      &
      \text{$x \in \sources{G \edgeMinus E}\setminus \sources{G}$}
      \\
      &
      x \in (\sinks{P \cup E} \setminus \sources{P \cup E})
      &
      \notag
      \\
      &
      \qedLocal
      x \notin \sources{G \edgeMinus E} \setminus (\sinks{P \cup E} \setminus \sources{P \cup E})
      &
      \notag
      \\
      &
      \qedLocal
      \sources{G\edgeMinus E}\setminus(\sinks{P \cup E}\setminus\sources{P \cup E}) \subseteq \sources{G}\setminus(\sinks{P}\setminus\sources{P})
      &
      \\
      &
      \qedLocal
      \sources{G}\setminus{\sinks{P}\setminus\sources{P}} = \sources{G \edgeMinus E}\setminus(\sinks{P \cup E}\setminus \sources{P \cup E})
      &
      \notag
      \\
      &
      \caseName{(2.II), $\Rightarrow$}
      \notag
      \\
      &
      E = B \times \set{\alpha}
      &
      \text{suppose}
      \notag
      \\
      &
      \compl{\sinks{P}} \subseteq \V(H)
      &
      \text{suppose}
      \notag
      \\
      &
      \V(H \cup E) = \V(H) \cup \set{\alpha}
      &
      \notag
      \\
      &
      \compl{\sinks{P \edgeMinus E}} \cup B = \compl{\sinks{P}}
      &
      \notag
      \\
      &
      B \subseteq \compl{\sinks{P}} \subseteq \V(H)
      &
      \notag
      \\
      &
      \qedLocal
      \compl{\sinks{P \edgeMinus E}} \subseteq \V(H) \cup \set{\alpha}
      &
      \notag
      \\
      &
      \caseName{(2.II), $\Leftarrow$}
      \notag
      \\
      &
      \compl{\sinks{P \edgeMinus E}} \subseteq \V(H \cup E)
      &
      \text{suppose}
      \notag
      \\
      &
      \compl{\sinks{P}} \subseteq \compl{\sinks{P \edgeMinus E}} \cup B
      &
      \notag
      \\
      &
      \forall \beta \in B.\;\beta \in \V(H) \cup \set{\alpha}
      &
      \text{def. $E$}
      \notag
      \\
      &
      \qedLocal
      \forall \beta \in B.\; \beta \in \V(H)
      &
      \text{$\beta \neq \alpha$}
      \notag
      \\
      &
      \qedLocal
      \compl{\sinks{P \edgeMinus E}} \cup B \subseteq \V(H) \cup \set{\alpha}
      &
      \notag
      \\
      &
      \qedLocal
      \compl{\sinks{P}} \subseteq \V(H)
      &
      \\
      &
      \caseName{(2.III), $\Rightarrow$}
      \notag
      \\
      &
      \V(H) = \sources{P}
      &
      \text{suppose}
      \notag
      \\
      &
      \text{want to show: }
      \V(H \cup E) = \sources{P \edgeMinus E}
      &
      \notag
      \\
      &
      \V(H \cup E) = \V(H) \cup \set{\alpha} = \sources{P} \cup \set{\alpha}
      &
      \notag
      \\
      &
      \sources{P \edgeMinus E} = \sources{P} \cup \set{\alpha}
      &
      \text{def. $E$}
      \notag
      \\
      &
      \qedLocal
      \sources{P \edgeMinus E} = \V(H \cup E)
      &
      \notag
      \\
      &
      \caseName{(2.III), $\Leftarrow$}
      \notag
      \\
      &
      \V(H \cup E) = \sources{P \edgeMinus E}
      &
      \text{suppose}
      \notag
      \\
      &
      \V(H) = \V(H \cup E) \setminus \set{\alpha}
      &
      \notag
      \\
      &
      \sources{P \edgeMinus E} = \sources{P} \cup \set{\alpha}
      &
      \text{def. $E$}
      \notag
      \\
      &
      \qedLocal
      \sources{P \edgeMinus E} \setminus \set{\alpha} = \sources{P}
      &
      \notag
      \\
      &
      \qedLocal
      \V(H) = \sources{P}
      &
      \notag
      \\
      &
      \caseName{(2.IV), $\Rightarrow$}
      \notag
      \\
      &
      \V(P) = \sources{P} \cup \sinks{P}
      &
      \text{suppose}
      \locallabel{vp-equals-sources-sinks}
      \\
      &
      \text{want to show: }
      \V(P) = \sources{P \edgeMinus E} \cup \sinks{P \edgeMinus E}
      &
      \notag
      \\
      &
      \sources{P} \subseteq \sources{P \edgeMinus E} \wedge \sinks{P} \subseteq \sinks{P \edgeMinus E}
      &
      \text{straightforward}
      \notag
      \\
      &
      \text{Remains to show: }
      \sources{P \edgeMinus E} \cup \sinks{P \edgeMinus E} \subseteq \sources{P} \cup \sinks{P}
      &
      \notag
      \\
      &
      x \in \V(P \edgeMinus E)
      &
      \text{suppose}
      \notag
      \\
      &
      \subcase{x \in \sources{P \edgeMinus E}}
      \notag
      \\
      &
      x \in \V(P)
      &
      \notag
      \\
      &
      \qedLocal
      x \in \sources{P} \cup \sinks{P}
      &
      \text{\localref{vp-equals-sources-sinks}}
      \\
      &
      \subcase{x \in \sinks{P \edgeMinus E}}
      \notag
      \\
      &
      x \in \V(P)
      &
      \notag
      \\
      &
      \qedLocal
      x \in \sources{P} \cup \sinks{P}
      &
      \text{\localref{vp-equals-sources-sinks}}
      \notag
      \\
      &
      \caseName{(2.IV), $\Leftarrow$}
      \notag
      \\
      &
      \V(P) = \V(P \edgeMinus E) = \sources{P \edgeMinus E} \cup \sinks{P \edgeMinus E}
      &
      \text{suppose}
      \notag
      \\
      &
      \text{want to show: }
      \sources{P \edgeMinus E} \cup \sinks{P \edgeMinus E} = \sources{P} \cup \sinks{P}
      &
      \notag
      \\
      &
      \sources{P} \cup \sinks{P} \subseteq \sources{P \edgeMinus E} \cup \sinks{P \edgeMinus E}
      &
      \notag
      \\
      &
      \text{remains to show: }
      \sources{P \edgeMinus E} \cup \sinks{P \edgeMinus E} \subseteq \sources{P} \cup \sinks{P}
      &
      \notag
      \\
      &
      x \in \V(P)
      &
      \text{suppose}
      \notag
      \\
      &
      \subcase{x \in \sources{P \edgeMinus E}}
      \notag
      \\
      &
      \sources{P \edgeMinus E}\setminus \sources{P} = \set{\alpha}
      &
      \text{def. $E$}
      \notag
      \\
      &
      \subcase{x = \alpha}
      \notag
      \\
      &
      \alpha \notin \sinks{P}
      &
      \text{assume}
      \notag
      \\
      &
      \alpha \in \V(H)
      &
      \text{partial slice (II)}
      \notag
      \\
      &
      (\beta, \alpha) \in \E(P)
      &
      \text{def. $E$}
      \notag
      \\
      &
      \qedLocal
      \bot
      &
      \text{contra partial slice (V)}
      \\
      &
      \qedLocal
      \alpha \in \sinks{P}
      &
      \notag
      \\
      &
      \qedLocal x \in \sources{P} \cup \sinks{P}
      &
      \notag
      \\
      &
      \subcase{x \in \sources{P}}
      \notag
      \\
      &
      \qedLocal
      x \in \sources{P} \cup \sinks{P}
      &
      \notag
      \\
      &
      \subcase{x \in \sinks{P \edgeMinus E}}
      \notag
      \\
      &
      \sinks{P \edgeMinus E} \setminus \sinks{P} \subseteq B
      &
      \text{def. $E$}
      \notag
      \\
      &
      \subcase{x \in B}
      \notag
      \\
      &
      B \subseteq \V(H \cup E)
      &
      \text{def. $E$}
      \\
      &
      \alpha \notin B
      &
      \notag
      \\
      &
      \qedLocal
      B \in \sources{P}
      &
      \notag
      \\
      &
      \qedLocal
      x \in \sources{P} \subseteq \sources{P} \cup \sinks{P}
      &
      \notag
      \\
      &
      \subcase{x \in \sinks{P}}
      \notag
      \\
      &
      \qedLocal
      x \in \sources{P} \cup \sinks{P}
      &
      \notag
      \\
      &
      \qedLocal
      \sinks{P \edgeMinus E} \subseteq \sources{P} \cup \sinks{P}
      &
      \\
      &
      \caseName{(2.V), $\Rightarrow$}
      \notag
      \\
      &
      H,P,G
      \text{ a partial slice}
      &
      \text{suppose}
      \notag
      \\
      &
      \forall \beta \in \V(G).\;\nexists (\beta, \alpha) \in \E(G)
      &
      \text{$\alpha \in \sources{G}$}
      \locallabel{alpha-source}
      \\
      &
      \forall (x,y) \in \E(G). y \notin \V(H \cup E)
      &
      \text{def. partial slice, \localref{alpha-source}}
      \notag
      \\
      &
      \V(H \cup E) = \V(H) \cup \set{\alpha}
      &
      \text{def. $E$}
      \notag
      \\
      &
      (x, y) \in \E(P)
      &
      \text{suppose}
      \notag
      \\
      &
      \subcase{(x,y) \in \E(P) \edgeMinus E}
      \notag
      \\
      &
      y \notin \V(H)
      &
      \notag
      \\
      &
      y \neq \alpha
      &
      \text{$(x,y) \notin E$}
      \notag
      \\
      &
      \qedLocal
      y \notin \V(H \cup E)
      &
      \notag
      \\
      &
      \subcase{(x,y) \in E}
      \notag
      \\
      &
      \qedLocal
      (x,y) \notin \E(P) \edgeMinus E
      &
      \notag
      \\
      &
      \caseName{(2.V), $\Leftarrow$}
      \notag
      \\
      &
      H \cup E, P \edgeMinus E, G \text{ a partial slice}
      &
      \text{suppose}
      \notag
      \\
      &
      \forall (\alpha, \beta) \in \E(G) \cup \E(P \edgeMinus E).\; \beta \notin \V(H \cup E)
      &
      \text{def. partial slice}
      \notag
      \\
      &
      \forall (\alpha, \beta) \in \E(G) \cup \E(P \edgeMinus E).\; \beta \notin \V(H)
      &
      \text{immediate}
      \notag
      \\
      &
      (\beta, \alpha) \in E \subset \E(P \edgeMinus E) \cup E = \E(P)
      &
      \text{suppose}
      \notag
      \\
      &
      \alpha \notin \V(H)
      &
      \text{def. $E$}
      \notag
      \\
      &
      \qedLocal
      \forall (\alpha', \beta') \in \E(P)
      &
      \notag
      \\
      &
      \qedLocal
      \forall (\alpha', \beta') \in \E(G) \cup \E(P).\; \beta \notin \V(H)
      &
      \notag
      \\
      &
      \caseName{(2.VI)}
      \notag
      \\
      &
      \text{Suffices to show: }
      \sinks{P \edgeMinus E}\setminus \sources{P \edgeMinus E} \cup \set{\alpha} = \sinks{P}\setminus\sources{P}
      &
      \notag
      \\
      &
      E = \inN{P}(\alpha) \times \set{\alpha}
      &
      \notag
      \\
      &
      \subcase{\sinks{P \edgeMinus E}\setminus \sources{P \edgeMinus E} \cup \set{\alpha} \subseteq \sinks{P}\setminus\sources{P}}
      \notag
      \\
      &
      \subcase{x = \alpha}
      \notag
      \\
      &
      \alpha \notin \sources{P}
      &
      \text{$E \neq \emptyset$}
      \notag
      \\
      &
      \alpha \in \sinks{P}
      &
      \text{$P$ has depth 1}
      \notag
      \\
      &
      \qedLocal
      x \in \sinks{P}\setminus\sources{P}
      &
      \notag
      \\
      &
      \subcase{x \neq \alpha \wedge x \in \sinks{P \edgeMinus E}\setminus \sources{P \edgeMinus E}}
      \notag
      \\
      &
      \sinks{P \edgeMinus E}\setminus \sinks{P} \subseteq \inN{P}(\alpha)
      &
      \notag
      \\
      &
      \inN{P}(\alpha) \subseteq \sources{P} \subseteq \sources{P \edgeMinus E}
      &
      \notag
      \\
      &
      \qedLocal
      x \in \sinks{P}
      &
      \text{$x \sinks{P \edgeMinus E} \wedge \in \notin \inN{P}(\alpha)$}
      \notag
      \\
      &
      \sources{P \edgeMinus E} \supseteq \sources{P}
      &
      % \text{def. $E$}
      \notag
      \\
      &
      x \notin \sources{P}
      &
      \text{$x \notin \sources{P \edgeMinus E}$}
      \notag
      \\
      &
      \qedLocal
      x \in \sinks{P}\setminus \sources{P}
      &
      \notag
      \\
      &
      \qedLocal
      \sinks{P \edgeMinus E}\setminus \sources{P \edgeMinus E} \cup \set{\alpha} \subseteq \sinks{P}\setminus\sources{P}
      &
      \notag
      \\
      &
      \subcase{\sinks{P}\setminus\sources{P} \subseteq \sinks{P \edgeMinus E}\setminus \sources{P \edgeMinus E} \cup \set{\alpha}}
      \notag
      \\
      &
      x \in \sinks{P}\setminus\sources{P}
      &
      \text{suppose}
      \notag
      \\
      &
      x \notin \sources{P}
      &
      \\
      &
      x \in \sinks{P} \subseteq \sinks{P \edgeMinus E}
      &
      \notag
      \\
      &
      \subcase{x \notin \sources{P \edgeMinus E}}
      \notag
      \\
      &
      \qedLocal
      x \in \sinks{P \edgeMinus E}\setminus \sources{P \edgeMinus E} \subseteq \sinks{P \edgeMinus E}\setminus \sources{P \edgeMinus E} \cup \set{\alpha}
      &
      \notag
      \\
      &
      \subcase{x \in \sources{P \edgeMinus E}}
      \notag
      \\
      &
      x \in \sources{P \edgeMinus E} \setminus \sources{P}
      &
      \notag
      \\
      &
      \sources{P \edgeMinus E} \setminus \sources{P} = \set{\alpha}
      &
      \notag
      \\
      &
      \qedLocal
      x = \alpha \in \sinks{P \edgeMinus E}\setminus \sources{P \edgeMinus E} \cup \set{\alpha}
      &
      \notag
      \\
      &
      \qedLocal
      \sinks{P}\setminus\sources{P} \subseteq \sinks{P \edgeMinus E}\setminus \sources{P \edgeMinus E} \cup \set{\alpha}
      &
      \notag
      \\
      &
      \qedLocal
      \sinks{P}\setminus\sources{P} = \sinks{P \edgeMinus E}\setminus \sources{P \edgeMinus E} \cup \set{\alpha}
      &
      \notag
   \end{flalign}
\end{proof}

\begin{lemma}
   \label{lem:appendix:proof:conjugate:algorithms-suff:aux}
   Suppose $H,P,G$ a partial slice of $G_0$, then:
   \begin{enumerate}
      \item $\sufficesR^*_{G \cup P \cup H}(\sources{H}) = \sufficesR^*_{(G \edgeMinus E)\cup (P \cup E) \cup H}(\sources{H})$
      \item $\sufficesR^*_{G \cup P \cup H}(\sources{H}) = \sufficesR^*_{G \cup (P \edgeMinus E) \cup (H \cup E)}(\sources{H \cup E})$
   \end{enumerate}
\end{lemma}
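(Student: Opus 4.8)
The plan is to exploit the fact that both operations in the lemma merely \emph{relocate} edges among the three edge-disjoint subgraphs $H$, $P$, $G$ that partition $G_0$, so that the union graph on which $\sufficesR^*$ is actually computed never changes. First I would record that, since $H,P,G$ is a partial slice of $G_0$, we have $G \cup P \cup H = G_0$: the vertices agree because $\V(H) \subseteq \V(P) \subseteq \V(G) = \V(G_0)$ by \defref{partial-slice}(i), and the edges agree because the three subgraphs partition $\E(G_0)$. By \lemref{appendix:proof:conjugate:algorithms-suff:partial-slice-preserved}, both rearranged triples $H,\,P\cup E,\,G\edgeMinus E$ and $H \cup E,\,P\edgeMinus E,\,G$ are again partial slices of the \emph{same} $G_0$, so each of their unions is again literally $G_0$. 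Consequently the reachability relation, the induced input relation $R_I$, and the source set $\sources{G_0}$ used in the definition of $\sufficesR^*$ are identical on both sides of each equation.

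Given this, part (1) is immediate: both sides are $\sufficesR^*_{G_0}(\sources{H})$, the same operator applied to the same graph and the same argument. For part (2) the union graphs again coincide (both equal $G_0$), so the only remaining task is to check that the \emph{argument} is unchanged, i.e.\ that $\sources{H \cup E} = \sources{H}$. To do this I would invoke the ``depth-one'' structure of the pending graph: by \defref{partial-slice}(iii)--(iv) we have $\sources{P} = \V(H)$ and $\V(P) = \sources{P} \uplus \sinks{P}$, so every edge of $P$ runs from a vertex of $\V(H)$ to a vertex of $\sinks{P}$. Since $E = \inE{P}(\alpha) \neq \emptyset$, the head $\alpha$ satisfies $\alpha \notin \sources{P} = \V(H)$ and hence $\alpha \in \sinks{P}$, whereas every tail of $E$ already lies in $\V(H)$.

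It then follows that forming $H \cup E$ introduces exactly one new vertex $\alpha$, equips it with incoming edges (so $\alpha \notin \sources{H \cup E}$), and adds only \emph{outgoing} edges to vertices already present in $H$ (so no vertex of $H$ loses its source status). Therefore $\sources{H \cup E} = \sources{H}$, and both sides of part (2) reduce to $\sufficesR^*_{G_0}(\sources{H})$. The one step that needs genuine care is precisely this source-set equality: I must confirm both that no new source is created --- guaranteed because the single fresh vertex $\alpha$ receives the edges of $E$ as \emph{in}-edges --- and that no old source is destroyed --- guaranteed because every edge of $E$ is an \emph{out}-edge of an already-present tail. Everything else is the bookkeeping that the union stays equal to $G_0$, which \lemref{appendix:proof:conjugate:algorithms-suff:partial-slice-preserved} supplies directly.
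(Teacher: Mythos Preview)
Your proposal is correct and follows essentially the same route as the paper: both parts reduce to observing that the union graph is unchanged (it is $G_0$ in every case), and for part (2) that $\sources{H \cup E} = \sources{H}$.

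One remark: you route the union-equality step through \lemref{appendix:proof:conjugate:algorithms-suff:partial-slice-preserved}, but the paper does not; it simply observes the elementary identity $(G \edgeMinus E)\cup(P\cup E)\cup H = G\cup P\cup H$ (and similarly for part (2)), which holds for any edge set $E$ moved between edge-disjoint components, independently of the partial-slice invariants. Your argument is not wrong, just a detour. For the source-set equality in part (2) you supply more justification than the paper, which merely asserts $\sources{H}=\sources{H\cup E}$; your reasoning that $\alpha\notin\V(H)$ acquires only in-edges while the tails in $\V(H)$ acquire only out-edges is exactly the content behind that assertion. (Minor quibble: \defref{partial-slice}(iv) gives $\V(P)=\sources{P}\cup\sinks{P}$, not a disjoint union; isolated vertices can be both. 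This does not affect your argument.)
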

\proofContext{auxiliary-lemma}
\begin{proof}
   \begin{enumerate}
      \item $\Lowlight{G_0 = }\; G \cup P \cup H = (G \edgeMinus E) \cup (P \cup E) \cup H$
      \item $\Lowlight{G_0 = }\; G \cup P \cup H = G \cup (P \edgeMinus E) \cup (H \cup E)$, and $\sources{H} = \sources{H \cup E}$
   \end{enumerate}
\end{proof}
\begin{proposition}
   \label{prop:appendix:proof:conjugate:algorithms-suff:correctness-sufficesE}
   Suppose $H,P,G$ a partial slice of $G_0$:
   \[ (\exists H'. \; H,P,G \sufficesE{} H' \text{ with }\V(H') = Y) \iff \sufficesR^*_{G_0}(\sources{H}) = Y\]
\end{proposition}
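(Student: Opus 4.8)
The plan is to mirror the structure of the correctness proof for $\demandedByV{}$ (\propref{appendix:proof:conjugate:algorithms:correctness-demByV}), proving the two directions of the biconditional separately and leaning on the two invariance lemmas established just above: \lemref{appendix:proof:conjugate:algorithms-suff:partial-slice-preserved} (the \emph{partial slice} property is preserved by the moves of \ruleName{pending} and \ruleName{extend}) and \lemref{appendix:proof:conjugate:algorithms-suff:aux} (the value $\sufficesR^*_{G_0}(\sources{H})$ is invariant under those same moves). Throughout, since $H,P,G$ partition $G_0$, I identify $\sufficesR^*_{G_0}$ with $\sufficesR^*_{G \cup P \cup H}$, and I use that $\sufficesR^*_{G_0}$ depends only on the underlying graph $G_0$ and on $\sources{H}$, not on how the partition into $H,P,G$ is drawn.

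For the ($\Rightarrow$) direction I would induct on the derivation of $H,P,G \sufficesE{} H'$. In the \ruleName{done} case $H' = H$, and the side conditions $\sources{G} \cap \V(P) \subseteq \sources{P}$ and $\V(H) \subseteq \sinks{G}$ are exactly the hypotheses of \lemref{appendix:proof:conjugate:algorithms-suff:suffices-sources}, which gives $\sufficesR^*_{G_0}(\sources{H}) = \V(H) = \V(H') = Y$. In the \ruleName{pending} case the subderivation is on $(H, P \cup E, G \edgeMinus E)$, which is again a partial slice of $G_0$ by \lemref{appendix:proof:conjugate:algorithms-suff:partial-slice-preserved}(1); the induction hypothesis yields $\sufficesR^*_{G_0}(\sources{H}) = \V(H') = Y$ directly, since neither $G_0$ nor $H$ changes. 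The \ruleName{extend} case is analogous using part (2) of the preservation lemma, except that $H$ grows to $H \cup E$, so I additionally invoke \lemref{appendix:proof:conjugate:algorithms-suff:aux}(2) to rewrite $\sufficesR^*_{G_0}(\sources{H \cup E})$ as $\sufficesR^*_{G_0}(\sources{H})$.

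For the ($\Leftarrow$) direction I would perform well-founded induction on the measure $2\length{\E(G)} + \length{\E(P)}$, which strictly decreases under both rules: \ruleName{pending} moves $E$ from $G$ to $P$ (net change $-\length{E}$) and \ruleName{extend} moves $E$ from $P$ to $H$ (net change $-\length{E}$). Assuming $\sufficesR^*_{G_0}(\sources{H}) = Y$, I case-split on whether a rule is applicable. If no rule applies, then $\V(H)$ contains no vertex with an out-edge in $G$, so $\V(H) \subseteq \sinks{G}$ (using $\V(H) \subseteq \V(G)$ from the partial-slice conditions), and no source of $G$ has an in-edge in $P$, so $\sources{G} \cap \V(P) \subseteq \sources{P}$; these are precisely the \ruleName{done} side conditions, so I take $H' = H$ and appeal to \lemref{appendix:proof:conjugate:algorithms-suff:suffices-sources} to conclude $\V(H) = \sufficesR^*_{G_0}(\sources{H}) = Y$. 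If \ruleName{pending} applies, \lemref{appendix:proof:conjugate:algorithms-suff:partial-slice-preserved}(1) gives a partial slice on the smaller configuration, \lemref{appendix:proof:conjugate:algorithms-suff:aux}(1) keeps the target value at $Y$, the induction hypothesis produces the subderivation, and \ruleName{pending} assembles the full derivation; the \ruleName{extend} case is symmetric via parts (2) of both lemmas.

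The main obstacle I anticipate is the choice of induction measure in the ($\Leftarrow$) direction. Unlike the $\demandedByV{}$ algorithm, where edges only ever leave $G$ so that $\length{\E(G)}$ decreases monotonically, here edges migrate along $G \to P \to H$, and \ruleName{extend} leaves $\length{\E(G)}$ untouched. The weight $2\length{\E(G)} + \length{\E(P)}$, which counts the number of migrations each edge still has to make, is what makes both rules strictly decreasing. The remaining care is bookkeeping: verifying that the negations of the two rule-applicability conditions coincide exactly with the two \ruleName{done} side conditions (so that ``no rule applies'' forces \ruleName{done}), and confirming the partial-slice invariant at each recursive step so that both the induction hypothesis and \lemref{appendix:proof:conjugate:algorithms-suff:suffices-sources} are applicable.
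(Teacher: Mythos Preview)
Your proposal is correct and follows essentially the same approach as the paper's proof: induction on the derivation for the ($\Rightarrow$) direction, and well-founded induction on a termination measure for the ($\Leftarrow$) direction, using \lemref{appendix:proof:conjugate:algorithms-suff:partial-slice-preserved}, \lemref{appendix:proof:conjugate:algorithms-suff:aux}, and \lemref{appendix:proof:conjugate:algorithms-suff:suffices-sources} at the same points. The only notable difference is your choice of measure: you use the weighted sum $2\length{\E(G)} + \length{\E(P)}$, whereas the paper uses strong lexicographic induction on the pair $(\length{\E(G)}, \length{\E(P)})$; both decrease strictly under \ruleName{pending} and \ruleName{extend}, so they are interchangeable here, and your scalar measure is arguably a bit cleaner. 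You are also more explicit than the paper in arguing that the three cases are exhaustive (i.e.\ that failure of both \ruleName{pending} and \ruleName{extend} forces the \ruleName{done} side conditions), which the paper leaves implicit.
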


\proofContext{correctness-sufficesE}
\begin{proof}
   \small
   \begin{flalign}
      &
      H, P, G
      \text{ a partial slice}
      &
      \text{suppose}
      \\
      &
      \caseName{$\Rightarrow$ direction:}
      \notag
      \\
      &
      H, P, G \sufficesE{} H' \text{ with } \V(H')=Y
      &
      \text{suppose, $\exists H'$}
      \notag
      \\
      &
      \text{By induction on derivation of $\sufficesE{}$.}
      &
      \notag
      \\
      &
      \subcaseDerivation{\derivationWidth}{
         \begin{smathpar}
            \inferrule*[
               lab={\ruleName{done}},
               right={$\sources{G} \cap \V(P) \subseteq \sources{P} \wedge \V(H) \subseteq \sinks{G}$}
            ]
            {
               \strut
            }
            {
               H, P, G \sufficesE{} H
            }
         \end{smathpar}
      }
      &
      \notag
      \\
      &
      \qedLocal
      \sufficesR^*_{G_0}(\sources{H}) = \V(H)
      &
      \text{\lemref{appendix:proof:conjugate:algorithms-suff:suffices-sources}}
      \notag
      \\
      &
      \subcaseDerivation{\derivationWidth}{
         \begin{smathpar}
            \inferrule*[
               lab={\ruleName{pending}},
               right={$\alpha \in \V(H) \wedge E = \outE{G}(\alpha) \neq \emptyset$}
            ]
            {
               H, P \cup E, G \edgeMinus E
               \sufficesE{}
               H'
            }
            {
               H, P, G
               \sufficesE{}
               H'
            }
         \end{smathpar}
      }
      &
      \notag
      \\
      &
      H, P \cup E, G \edgeMinus E
      \text{ a partial slice of } G_0
      &
      \text{\lemref{appendix:proof:conjugate:algorithms-suff:partial-slice-preserved}}
      \notag
      \\
      &
      \sufficesR^*_{(G \edgeMinus E) \cup (P \cup E) \cup H}(\sources{H}) = Y
      &
      \text{IH}
      \notag
      \\
      &
      \qedLocal
      \sufficesR^*_{G_0}(\sources{H}) = Y
      &
      \text{\lemref{appendix:proof:conjugate:algorithms-suff:aux}(i)}
      \notag
      \\
      &
      \subcaseDerivation{\derivationWidth}{
         \begin{smathpar}
            \inferrule*[
               lab={\ruleName{extend}},
               right={$\alpha \in \sources{G} \wedge E = \inE{P}(\alpha) \neq \emptyset$}
            ]
            {
               H \cup E,
               P \edgeMinus E,
               G
               \sufficesE{}
               H'
            }
            {
               H, P, G
               \sufficesE{}
               H'
            }
         \end{smathpar}
      }
      &
      \notag
      \\
      &
      H \cup E, P \edgeMinus E, G
      \text{ a partial slice of } G_0
      &
      \text{\lemref{appendix:proof:conjugate:algorithms-suff:partial-slice-preserved}}
      \notag
      \\
      &
      \sufficesR^*_{G \cup (P \edgeMinus E) \cup (H \cup E)}(\sources{H \cup E})
      &
      \text{IH}
      \notag
      \\
      &
      \qedLocal
      \sufficesR^*_{G_0}(\sources{H})
      &
      \text{\lemref{appendix:proof:conjugate:algorithms-suff:aux}(ii)}
      \notag
      \\
      &
      \caseName{$\Leftarrow$ direction}
      \notag
      \\
      &
      \text{By strong lexicographical induction over $(\length{\E(G)},\length{\E(P)})$}
      &
      \notag
      \\
      &
      \sufficesR^*_{G_0}(\sources{H}) = Y
      &
      \text{suppose}
      \notag
      \\
      &
      \subcase{\sources{G} \cap \V(P) \subseteq \sources{P} \wedge \V(H) \subseteq \sinks{G}}
      \notag
      \\
      &
      \derivation{\derivationWidth}{
         \begin{smathpar}
            \inferrule*[
               lab={\ruleName{done}},
               right={$\sources{G} \cap \V(P) \subseteq \sources{P} \wedge \V(H) \subseteq \sinks{G}$}
            ]
            {
               \strut
            }
            {
               H, P, G \sufficesE{} H
            }
         \end{smathpar}
      }
      &
      \text{def. $\sufficesE{}$}
      \notag
      \\
      &
      \qedLocal
      \sufficesR^*_{G_0}(\sources{H}) = \V(H) =  Y
      &
      \text{\lemref{appendix:proof:conjugate:algorithms-suff:suffices-sources}}
      \notag
      \\
      &
      \subcase{\alpha \in \V(H) \wedge E = \outE{G}(\alpha) \neq \emptyset}
      &
      \text{$\exists \alpha$, $\exists E$}
      \notag
      \\
      &
      (\length{\E(G \edgeMinus E)}, \length{\E(P)} )< (\length{\E(G)}, \length{\E(P \cup E)})
      &
      \notag
      \\
      &
      H, P \cup E, G \edgeMinus E
      \text{ a partial slice of } G_0
      &
      \text{\lemref{appendix:proof:conjugate:algorithms-suff:partial-slice-preserved}}
      \notag
      \\
      &
      H, P \cup E, G \edgeMinus E \sufficesE{} H'
      &
      \text{IH; $\exists H'$}
      \notag
      \\
      &
      \derivation{\derivationWidth}{
         \begin{smathpar}
            \inferrule*[
               lab={\ruleName{pending}},
               right={$\alpha \in \V(H) \wedge E = \outE{G}(\alpha) \neq \emptyset$}
            ]
            {
               H, P \cup E, G \edgeMinus E
               \sufficesE{}
               H'
            }
            {
               H, P, G
               \sufficesE{}
               H'
            }
         \end{smathpar}
      }
      &
      \text{def. $\sufficesE{}$}
      \notag
      \\
      &
      \sufficesR^*_{(G \edgeMinus E) \cup (P \cup E) \cup H}(\sources{H}) = \sufficesR^*_{G \cup P \cup H}(\sources{H})
      &
      \text{\lemref{appendix:proof:conjugate:algorithms-suff:aux}(i)}
      \notag
      \\
      &
      \qedLocal
      \sufficesR^*_{G_0}(\sources{H}) = \V(H')\;\Lowlight{ = Y}
      &
      \text{IH}
      \notag
      \\
      &
      \subcase{\alpha \in \sources{G} \wedge E = \inE{P}(\alpha) \neq \emptyset}
      &
      \text{$\exists \alpha$, $\exists E$}
      \notag
      \\
      &
      (\length{\E(G)},\length{\E(P \edgeMinus E)}) < (\length{\E(P)},\length{\E(P)})
      &
      \notag
      \\
      &
      H \cup E, P \edgeMinus E, G
      \text{ a partial slice of } G_0
      &
      \text{\lemref{appendix:proof:conjugate:algorithms-suff:partial-slice-preserved}}
      \notag
      \\
      &
      H \cup E, P \edgeMinus E, G \sufficesE{} H'
      &
      \text{IH, $\exists H'$}
      \notag
      \\
      &
      \derivation{\derivationWidth}{
         \begin{smathpar}
            \inferrule*[
               lab={\ruleName{extend}},
               right={$\alpha \in \sources{G} \wedge E = \inE{P}(\alpha) \neq \emptyset$}
            ]
            {
               H \cup E,
               P \edgeMinus E,
               G
               \sufficesE{}
               H'
            }
            {
               H, P, G
               \sufficesE{}
               H'
            }
         \end{smathpar}
      }
      &
      \text{def. $\sufficesE{}$}
      \notag
      \\
      &
      \sufficesR^*_{G \cup P \cup H}(\sources{H}) = \sufficesR^*_{G \cup (P \edgeMinus E) \cup (H \cup E)}(\sources{H \cup E})
      &
      \text{\lemref{appendix:proof:conjugate:algorithms-suff:aux}(ii)}
      \\
      &
      \qedLocal
      \sufficesR^*_{G_0}(\sources{H}) = \V(H')\;\Lowlight{ = Y}
      &
      \text{IH}
      \notag
   \end{flalign}
\end{proof}

\section{Source code for data visualisation figures and benchmarked programs}
\label{sec:apdx:evaluation-src-code}

\begin{figure}[h]
    \small
    {\lstinputlisting[language=Fluid]{appendix/fig/example/energy-query.fld}}
  \caption{Source code: \lstinline{energy-query} (\figref{introduction:related-inputs} that visualises linked inputs)}
\end{figure}
\begin{figure}
    \small
    {\lstinputlisting[language=Fluid]{appendix/fig/example/stacked-bar-chart-scatter-plot.fld}}
  \caption{Source code: \lstinline{stacked-bar-scatter-plot} (\figref{introduction:scatterplot} that visualises linked outputs)}
\end{figure}

\begin{figure}[h]
  \begin{subfigure}[position]{1\textwidth}
    \small
    {\lstinputlisting[language=Fluid]{appendix/fig/example/edge-detect.fld}}
    \caption{\lstinline{edge-detect}}
  \end{subfigure}
  \begin{subfigure}[position]{1\textwidth}
    \small
    {\lstinputlisting[language=Fluid]{appendix/fig/example/emboss.fld}}
    \caption{\lstinline{emboss}}
  \end{subfigure}
  \begin{subfigure}[position]{1\textwidth}
    \small
    {\lstinputlisting[language=Fluid]{appendix/fig/example/gaussian.fld}}
    \caption{\lstinline{gaussian}}
  \end{subfigure}
  \begin{subfigure}[position]{\textwidth}
    \small
    {\lstinputlisting[language=Fluid]{appendix/fig/example/convolution.fld}}
    \caption{Library code}
  \end{subfigure}
  \caption{Source code: convolution examples}
\end{figure}

\begin{figure}[h]
  \centering
  \small
  \begin{subfigure}[position]{1\textwidth}
    \small
    {\lstinputlisting[language=Fluid]{appendix/fig/example/grouped-bar-chart.fld}}
    \caption{\lstinline{grouped-bar-chart}}
  \end{subfigure}
  \begin{subfigure}[position]{1\textwidth}
    \small
    {\lstinputlisting[language=Fluid]{appendix/fig/example/line-chart.fld}}
    \caption{\lstinline{line-chart}}
  \end{subfigure}
  \begin{subfigure}[position]{1\textwidth}
    \small
    {\lstinputlisting[language=Fluid]{appendix/fig/example/stacked-bar-chart.fld}}
    \caption{\lstinline{stacked-bar-chart}}
  \end{subfigure}
  \caption{Source code: graphics examples}
\end{figure}

\begin{figure}\ContinuedFloat
  \small
  %\begin{subfigure}{\textwidth}
  \lstinputlisting[language=Fluid]{appendix/fig/example/graphics-1.fld}
  %\caption{Shared definitions in \lstinline{graphics} examples: Part 1}
  %\end{subfigure}
  \caption{Source code: library code for graphics examples (1 of 3)}
\end{figure}
\begin{figure}\ContinuedFloat
  \small
  %\begin{subfigure}{\textwidth}\ContinuedFloat
  \lstinputlisting[language=Fluid]{appendix/fig/example/graphics-2.fld}
  %\caption{Shared definitions in \lstinline{graphics} examples: Part 2}
  %\end{subfigure}
  \caption{Source code: library code for graphics examples (2 of 3)}
\end{figure}
\begin{figure}\ContinuedFloat
  \small
  %\begin{subfigure}{\textwidth}
  \lstinputlisting[language=Fluid]{appendix/fig/example/graphics-3.fld}
  %\caption{Shared definitions in \lstinline{graphics} examples: Part 3}
  %\end{subfigure}
  \caption{Source code: library code for graphics examples (3 of 3)}
\end{figure}

\end{document}